\documentclass[]{article}

\usepackage{amsmath, amssymb, dsfont, mathrsfs,amsthm}
\usepackage{float}
\usepackage{mdframed}
\usepackage{appendix}
\usepackage[hidelinks]{hyperref}
\usepackage{graphicx}
\usepackage{caption}
\usepackage[margin=1in]{geometry}

\allowdisplaybreaks

\title{Structural Results and Improved Upper Bounds on the Capacity of the Discrete-Time Poisson Channel
}

\author{Mahdi Cheraghchi \and Jo\~ao Ribeiro\thanks{
Department of Computing, Imperial College London, UK. Emails: \{m.cheraghchi, j.lourenco-ribeiro17\}@imperial.ac.uk.}
}


\hyphenation{op-tical net-works semi-conduc-tor}

\date{}

\newcommand{\R}{\mathbb{R}}
\newcommand{\E}{\mathds{E}}

\newcommand{\eps}{\epsilon}

\newcommand{\DTP}{{\sf DTP}}

\newtheorem{thm}{Theorem}

\newtheorem{lem}[thm]{Lemma}

\newtheorem{remark}[thm]{Remark}

\newcommand{\Ch}{\mathsf{Ch}}

\newcommand{\Yq}{Y^{(q)}}

\newcommand{\KL}{D_\mathsf{KL}}

\newcommand{\OYmu}{\Omega_{Y,A,\mu}}
\newcommand{\OYeq}{\Omega^=_{Y,A,\mu}}

\newcommand{\OYfin}{\Omega_{Y,A,\mathsf{fin}}}

\let\originalleft\left
\let\originalright\right
\renewcommand{\left}{\mathopen{}\mathclose\bgroup\originalleft}
\renewcommand{\right}{\aftergroup\egroup\originalright}

\usepackage{url}
\begin{document}

\maketitle

\begin{abstract}
New capacity upper bounds are presented for the discrete-time Poisson channel with no dark current and an average-power constraint. These bounds are a simple consequence of techniques developed for the seemingly unrelated problem of upper bounding the capacity of binary deletion and repetition channels.
Previously, the best known capacity upper bound in the regime where the average-power constraint does not approach zero was due to Martinez (JOSA B, 2007), which is re-derived as a special case of the framework developed in this paper. Furthermore, this framework is carefully instantiated in order to obtain a closed-form bound that noticeably improves the result of Martinez everywhere. Finally, capacity-achieving distributions for the discrete-time Poisson channel are studied under an average-power constraint and/or a peak-power constraint and arbitrary dark current. In particular, it is shown that the support of the capacity-achieving distribution under an average-power constraint only must be countably infinite. This settles a conjecture of Shamai (IEE Proceedings I, 1990) in the affirmative. Previously, it was only known that the support must be unbounded.
\end{abstract}



\section{Introduction}\label{sec:intro}

We study the capacity of the classical discrete-time Poisson (DTP) channel, along with properties of its capacity-achieving distributions. Given an input $x \in \R^{\geq 0}$, the channel outputs a sample from Poisson distribution with mean $\lambda+x$, where $\lambda \geq 0$ is a channel parameter called the dark current. 
The DTP channel is motivated by applications in optical communication, involving a sender with a photon-emitting source and a receiver that observes the arrived photons (some of which may not have originated in the sender's source, hence the dark current parameter) \cite{Sha90}.

The capacity of the DTP channel is infinite if there are no constraints on the input distributions. For this reason, a power constraint should be imposed on the input distribution. The most typical choice, that we consider in this work, is an average-power constraint $\mu\in\R^{\geq 0}$, under which only input distributions $X$ satisfying $\E[X]\leq \mu$ are allowed. Several works also consider the case where a peak-power constraint is imposed on $X$, i.e., $X\leq A$ for some fixed $A\in\R^{>0}$ with probability 1 (e.g., \cite{LM09,LSVW11,WW14,SSEL15,AAGNKM15}). Setting $A=\infty$ corresponds to the case where no peak-power constraint is present.

Currently, no expression for the capacity of the DTP channel under an average-power constraint is known. Consequently, there has been considerable interest in obtaining sharp bounds and in determining the asymptotic behavior of the DTP channel capacity in several settings, and in investigating properties of capacity-achieving distributions. We focus on upper bounds for the capacity of the DTP channel with $\lambda=0$ under an average-power constraint $\mu$. Note that any such upper bound is also a capacity upper bound for the DTP channel with $\lambda>0$, as such a channel can be simulated from the DTP channel with $\lambda=0$ by having the receiver add an independent Poisson random variable with parameter $\lambda$ to the output.

The problem of better understanding the properties of capacity-achieving distributions for a given channel has also received significant attention. Normally, one is interested in determining whether a capacity-achieving distribution has finite or discrete support. Besides the fact that studying properties of such distributions may provide more insight into the channel capacity, it is also of practical importance. In fact, showing that the optimal distribution can be finite or discrete reduces the complexity of the problem of finding or approximating such a distribution, and allows the application of a wider range of numerical methods. The finiteness and discreteness of capacity-achieving distributions is well-understood for very general classes of noise-additive channels. However, much less is known for non-additive channels, and in particular the DTP channel.

\subsection{Previous work}
The two main regimes for studying the asymptotic behavior of the DTP channel capacity are when $\mu\to0$ and $\mu\to\infty$. Brady and Verd\'{u}~\cite{BV90} studied the asymptotic behavior of the capacity under an average-power constraint $\mu$ when $\mu\to\infty$ and $\mu/\lambda$ is kept fixed. Later, Lapidoth and Moser~\cite{LM09} studied the same problem when $\lambda$ is constant, with and without an additional peak-power constraint. When $\mu\to 0$, Lapidoth et al.\ \cite{LSVW11} determined the first-order asymptotic behavior of the capacity when $\mu$ goes to zero, both when $\mu/\lambda$ is kept constant and when $\lambda$ is fixed, with and without a peak-power constraint. Later, Wang and Wornell~\cite{WW14} improved their result when $\mu/\lambda$ is constant.

Obtaining capacity upper bounds for the DTP channel has been a major subject of interest. Explicit asymptotic capacity upper bounds for the DTP channel under an average-power constraint can be found in~\cite{LM09,LSVW11,WW14,AAGNKM15}. The current best non-asymptotic upper bound, which is in fact the best capacity upper bound outside the limiting case $\mu \to 0$, was derived by Martinez~\cite{Mar07}. However, its proof contains a small gap, as mentioned in~\cite{LM09}, and is not considered completely rigorous. A more detailed discussion of these upper bounds and of the asymptotic behavior of the capacity can be found in Section~\ref{sec:prevbounds}. While we focus on capacity upper bounds, we mention that explicit (asymptotic and non-asymptotic) capacity lower bounds for several settings have been derived in \cite{Mar07,LM09,CHC10,LSVW11,WW14,YZWD14}.  

There is a large amount of literature focusing on properties of capacity-achieving distributions for many classes of channels. As discussed before, one is mostly interested in determining whether such optimal distributions have finite or discrete support. The landscape of this problem is well-understood for quite general classes of noise-additive channels under several input constraints (see, e.g., the early works~\cite{Smi71,Sha90,AFTS01} and the recent works~\cite{ED18,FAF18,DGPS18})

The shape of capacity-achieving distributions for the DTP channel was first studied by Shamai~\cite{Sha90}, who showed that a capacity-achieving distribution for the DTP channel under a peak-power constraint must have finite support, and conjectured that the capacity-achieving distribution for the DTP channel under an average-power constraint only is discrete. He also gave conditions which ensure that distributions with two mass points are optimal. These results were extended by Cao, Hranilovic, and Chen \cite{CHC14a,CHC14b}. In particular, they showed that a capacity-achieving distribution for the DTP channel under an average-power constraint only must have unbounded support. Moreover, they also proved that such a distribution must have some mass at $x=0$, and, if a peak-power constraint $A$ is present, some mass at $x=A$ as well. Unlike noise-additive channels, not much is known about the capacity-achieving distributions of the DTP channel when there is only an average-power constraint present.

Other aspects and settings of the DTP channel have also received attention recently. A generalization of the DTP channel was studied by Aminian et al.\ \cite{AAGNKM15}, where simple and general capacity upper bounds in the presence of average- and peak-power constraints are also given for the classical DTP channel. Sutter et al.\ \cite{SSEL15} studied numerical algorithms for approximating the capacity of the DTP channel in the presence of both average- and peak-power constraints, and obtained sharp capacity bounds in this setting.

\subsection{Our contributions and techniques}
In the first part of this work, we derive improved capacity upper bounds for the DTP channel with $\lambda=0$ under an average-power constraint. Our technique is based on a natural convex duality formulation developed by Cheraghchi~\cite{Che17} for the seemingly unrelated problem of upper bounding the capacity of binary deletion and repetition channels. Furthermore, we prove new results on the shape of capacity-achieving distributions for the DTP channel.

We show that the result of Martinez~\cite{Mar07} can be obtained as an immediate special (sub-optimal) case of our results, thus giving a simple and rigorous proof for this bound. Furthermore, we extract two improved bounds from our more general result (Theorem~\ref{thm:mainbound}); one involving the minimization of a smooth convex function over $(0,1)$, as well as a closed-form bound (Theorem~\ref{thm:closedbound}). Both of these bounds are strictly tighter than the bound by Martinez for all $\mu>0$. Thus, we obtain the current best capacity upper bounds for the DTP channel with $\lambda=0$ under an average-power constraint $\mu$ for all values of $\mu$ outside the limiting case $\mu \to 0$. An additional feature of our results is that they are simple to derive.

In the second part, we study properties of capacity-achieving distributions for the DTP channel. Notably, we show that a capacity-achieving distribution for the DTP channel under an average-power constraint must be discrete. This settles a conjecture of Shamai~\cite{Sha90} in the affirmative. Previously, it was only known that the support was unbounded. In fact, we actually show the stronger result that the support must have finite intersection with all bounded intervals. This brings the state of knowledge on this topic for the DTP channel closer to that of noise-additive channels, which are much better understood. Our proof techniques are general and work under any dark current and any combination of average-power and peak-power constraints. In particular, we give an alternative proof that the capacity-achieving distribution under average- and peak-power constraints is finite, which was originally proved by Shamai~\cite{Sha90}.

The rest of the article is organized as follows: In Section~\ref{sec:prem} we introduce our notation. Further discussion of the best previously known bounds, along with the asymptotic behavior of the capacity when $\lambda=0$, appear in Section~\ref{sec:prevbounds}. The duality-based framework and the derivation of our upper bounds (including the bound by Martinez as a special case) are presented in Section~\ref{sec:mainbound}. Finally, we compare the bounds from Section~\ref{sec:mainbound} with those from Section~\ref{sec:prevbounds} in Section~\ref{sec:comp}. In Section~\ref{sec:shape}, we present our results on the shape of capacity-achieving distributions for the DTP channel.

\section{Notation}\label{sec:prem}

We denote the capacity of the DTP channel with average-power constraint $\mu$ and $\lambda=0$ by $C(\mu)$. We measure capacity in nats per channel use and denote the natural logarithm by $\log$. Random variables are usually denoted by uppercase letters such as $X$, $Y$, and $Z$. For a discrete random variable $X$, we denote by $X(x)$ the probability that $X$ takes on value $x$. The support of a random variable $X$ is denoted by $\mathsf{supp}(X)$, i.e., $\mathsf{supp}(X)$ is the smallest closed set $\mathcal{W}$ such that $\Pr[X\in \mathcal{W}]=1$. When the context is clear, we may at certain points confuse random variables and their associated cumulative distribution functions. The Kullback-Leibler divergence between $X$ and $Y$ is denoted by $D_{{\sf KL}}(X\|Y)$. In general, we use the convention that $0\log 0=0$.

\section{Previously known bounds and asymptotic results}\label{sec:prevbounds}

In this section, we survey the best previously known capacity upper bounds and the known results on the asymptotic behavior of $C(\mu)$. The asymptotic regimes considered in the literature are when $\mu\to 0$ and $\mu\to\infty$.

In the small $\mu$ regime, Lapidoth et al.~\cite{LSVW11} showed that
\begin{equation*}
\lim_{\mu\to 0}\frac{C(\mu)}{\mu\log(1/\mu)}=1.
\end{equation*}
Moreover, they gave the following upper bound matching the asymptotic behavior \cite[expression (86)]{LSVW11},
\begin{align}\label{bound:lapidoth}
C(\mu)&\leq -\mu\log p-\log(1-p)+\frac{\mu}{\beta}+\mu\cdot \max\bigg(0,\frac{1}{2}\log\beta+\log\left(\frac{\bar{\Gamma}(1/2,1/\beta)}{\sqrt{\pi}}+\frac{1}{2\beta}\right)\bigg),
\end{align}
where $p\in(0,1)$ and $\beta>0$ are free constants, and $\bar{\Gamma}$ is the upper incomplete gamma function. It is easy to see that the optimal choice for $p$ is $p=\frac{\mu}{1+\mu}$.

Later, Wang and Wornell \cite{WW14} determined the higher-order asymptotic behavior of $C(\mu)$ in the small $\mu$ regime, where it was shown that
\begin{equation*}
C(\mu)= \mu\log(1/\mu)-\mu\log\log(1/\mu)+O(\mu)
\end{equation*}
when $\mu\to 0$. This was previously noted by Chung, Guha, and Zheng \cite{CGZ11}, although they only proved the result for a more restricted set of input distributions (as mentioned in \cite{WW14}). Wang and Wornell \cite[expression (180)]{WW14} gave an upper bound (valid for small enough $\mu$) matching this asymptotic behavior; namely,
\begin{align}\label{bound:ww}
C(\mu)&\leq \mu+\mu\log\log\left(\frac{1}{\mu}\right)+\log\left(\frac{1}{1-\mu}\right)+\mu\log\left(\frac{1}{1-\frac{1}{\log(1/\mu)}}\right)+\mu\cdot \sup_{x\geq 0}\phi_\mu(x),
\end{align}
where $\phi_\mu(x):=\frac{1-e^{-x}}{x}\log\left(\frac{x}{\mu\log(1/\mu)}\right)$.

In the large $\mu$ regime, Lapidoth and Moser \cite{LM09} showed that
\begin{equation*}
\lim_{\mu\to \infty}\frac{C(\mu)}{\log \mu}=\frac{1}{2}.
\end{equation*}
The best upper bound in this regime (and, in fact, anywhere outside the asymptotic limit $\mu\to 0$) was derived by Martinez \cite[expression (10)]{Mar07} and is given by
\begin{align}\label{bound:martinez}
C(\mu)&\leq \left(\mu+\frac{1}{2}\right)\log\left(\mu+\frac{1}{2}\right)-\mu\log\mu-\frac{1}{2}+\log\left(1+\frac{\sqrt{2e}-1}{\sqrt{1+2\mu}}\right).
\end{align}
It holds that \eqref{bound:martinez} attains the first-order asymptotic behavior of $C(\mu)$ both when $\mu\to 0$ and when $\mu\to\infty$, and is strictly better than \eqref{bound:lapidoth} for all $\mu>0$. However, as noted in \cite{LM09}, the proof in \cite{Mar07} is not considered to be completely rigorous as it contains a gap (a certain equality is only shown numerically).

Aminian et al.\ \cite[Example 2]{AAGNKM15} give the upper bound
\begin{equation*}
	\sup_{X:\mathbb{E}[X]\leq \mu}\mathsf{Cov}(X+\lambda,\log(X+\lambda))
\end{equation*}
for the capacity of the DTP channel with an average-power constraint $\mu$ and dark current $\lambda$, where $\mathsf{Cov}(\cdot,\cdot)$ denotes the covariance. However, this bound is only useful when $\lambda$ is large.

Finally, we note that an analytical lower bound is also given in~\cite{Mar07}. This lower bound is obtained by considering gamma distributions as the input to the DTP channel (and thus negative binomial distributions as the corresponding output). More precisely, we have
\begin{align}
	C(\mu)&\geq (\mu+\nu)\log\left(\frac{\mu+\nu}{\nu}\right)+\mu(\psi(v+1)-1)\nonumber\\&-\int_0^1 \left(1-\left(\frac{\nu}{\nu+\mu(1-t)}\right)^\nu\right)\frac{t^{\nu-1}}{(1-t)\log t}-\frac{\mu}{\log t}dt\label{eq:martinezlb}
\end{align}
for all $\nu>0$, where $\psi(y)=\frac{d}{d y}\log\Gamma(y)$ is the digamma function ($\Gamma$ denotes the gamma function). Martinez~\cite{Mar07} also obtained the elementary lower bound $C(\mu)\geq \frac{1}{2}\log(1+\mu)$. These bounds behaves well when $\mu$ is large. In fact, the capacity is known to behave like $\frac{1}{2}\log\mu$ when $\mu\to \infty$.

\section{The proposed upper bounds}\label{sec:mainbound}

In this section, we derive new upper bounds on $C(\mu)$. While previous upper bounds are mostly based on duality results from \cite{LM03}, our derivation (although still duality based) follows from the application of a framework recently developed in \cite{Che17} in the context of binary deletion-type channels.

\subsection{The convex duality formulation}\label{sec:highlevel}

In this section, we give a high-level overview of our approach towards obtaining improved capacity upper bounds. 

We denote the DTP channel with dark current $\lambda$ under an \emph{output} average-power constraint $\mu$ and an input peak-power constraint $A$ by $\DTP_{\lambda,A,\mu}$. In words, this channel accepts input distributions $X$ such that $\mathsf{supp}(X)\subseteq [0,A]$ and which have associated output distributions $Y$ satisfying $\mathds{E}[Y]\leq \mu$. We may set $A=\infty$, in which case there is no peak-power constraint. When $A=\infty$ and $\lambda=0$, we denote the corresponding channel by $\DTP_\mu$. 

Note that imposing an average-power constraint $\mu$ on the output of the DTP channel is equivalent to imposing an average-power constraint $\mu-\lambda$ on its input. Because of this, one can easily move back and forth between input and output average-power constraints for the DTP channel. We may refer to ``input average-power constraint" simply as ``average-power constraint" throughout the paper.

A main component of our proofs is the following natural duality result for the DTP channel. This statement was originally proved for general channels with discrete input and output alphabets in \cite{Che17}. A proof of Theorem~\ref{thm:dual} for a general class of channels with continuous input under output average-power constraints and/or input peak-power constraints is presented in Appendix~\ref{sec:proof1}.
\begin{thm}[\protect{\cite[Theorem 1]{Che17}}, adapted]\label{thm:dual}
	
	Suppose there exist a random variable $Y$, supported on $\mathbb{N}$, and parameters $\nu_0\in\R$ and $\nu_1\in\mathbb{R}^{\geq 0}$ such that
	\begin{equation}\label{eq:KLreq}
	D_{{\sf KL}}(Y_x\| Y)\leq \nu_0+\nu_1\E[Y_x]
	\end{equation}
	for every $x\in [0,A]$, where $Y_x$ denotes the output of the DTP channel with dark current $\lambda$ when $x$ is given as input, i.e., $Y_x$ follows a Poisson distribution with mean $\lambda+x$. Assume that $\mu>\lambda$ (otherwise the problem is trivial). Then, we have
	\[
	C(\DTP_{\lambda,A,\mu})\leq\nu_0+ \nu_1\mu.
	\]
	Moreover, an input distribution $X$ is capacity-achieving for $\DTP_{\lambda,A,\mu}$ and
	\[
	C(\DTP_{\lambda,A,\mu})= \nu_0+\nu_1\mu
	\]
	if and only if its corresponding output distribution $Y$ satisfies $\mathds{E}[Y]=\mu$ and
	\[
	D_{{\sf KL}}(Y_x\| Y)\leq \nu_0+\nu_1\E[Y_x]
	\]
	for every $x\in[0,A]$, with equality for all $x\in{\sf supp}(X)$.
\end{thm}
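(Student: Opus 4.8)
The plan is to establish Theorem~\ref{thm:dual} by the standard Lagrangian/Kuhn--Tucker duality for channel capacity, working throughout with the information capacity $C(\DTP_{\lambda,A,\mu})=\sup_X I(X;Y_X)$, where $X$ ranges over input distributions with $\supp(X)\subseteq[0,A]$ whose output marginal $P_X$ on $\N$ satisfies $\E[P_X]\le\mu$ (this set is nonempty since $\mu>\lambda$, e.g.\ $X\equiv 0$). The single fact that does all the work is the ``golden formula'': for \emph{any} reference distribution $Q$ on $\N$ and any admissible $X$,
\[
I(X;Y_X)=\int_{[0,A]}\KL(Y_x\|Q)\,dX(x)-\KL(P_X\|Q),
\]
so in particular $I(X;Y_X)\le\int\KL(Y_x\|Q)\,dX(x)$, with equality iff $Q=P_X$ (the case $Q=P_X$ being just the conditional-divergence form of $I$). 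Before using it I would observe that \eqref{eq:KLreq} forces $Y$ to have full support on $\N$: $\KL(Y_x\|Y)<\infty$ and $Y_x\sim\mathrm{Poisson}(\lambda+x)$ has full support for $x>0$, so $Y$ does too. Taking $Q=Y$ in the golden formula and invoking \eqref{eq:KLreq} gives
\[
I(X;Y_X)\le\int_{[0,A]}\bigl(\nu_0+\nu_1\E[Y_x]\bigr)\,dX(x)=\nu_0+\nu_1\E[P_X]\le\nu_0+\nu_1\mu,
\]
using Tonelli for $\int\E[Y_x]\,dX(x)=\E[P_X]$ and then $\nu_1\ge 0$, $\E[P_X]\le\mu$. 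Taking the supremum over admissible $X$ yields $C(\DTP_{\lambda,A,\mu})\le\nu_0+\nu_1\mu$; in particular the capacity is finite.

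For the characterization of optimal inputs, set $c:=\nu_0+\nu_1\mu$ and combine the golden formula (with $Q=Y$) with \eqref{eq:KLreq}:
\begin{align*}
I(X;Y_X)&=\int\KL(Y_x\|Y)\,dX(x)-\KL(P_X\|Y)\;\le\;\nu_0+\nu_1\E[P_X]-\KL(P_X\|Y)\\
&=c-\nu_1\bigl(\mu-\E[P_X]\bigr)-\KL(P_X\|Y).
\end{align*}
If $X$ is capacity-achieving with $C(\DTP_{\lambda,A,\mu})=c$, then $I(X;Y_X)=c$ forces the two nonnegative slack terms $\nu_1(\mu-\E[P_X])$ and $\KL(P_X\|Y)$ to vanish; hence $P_X=Y$ (the output distribution \emph{is} the auxiliary $Y$) and, provided $\nu_1>0$, $\E[Y]=\E[P_X]=\mu$. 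Since then $\int\KL(Y_x\|Y)\,dX(x)=I(X;Y_X)=c=\int(\nu_0+\nu_1\E[Y_x])\,dX(x)$ while $\KL(Y_x\|Y)\le\nu_0+\nu_1\E[Y_x]$ pointwise, equality of the integrands holds $X$-a.e.; as both $x\mapsto\KL(Y_x\|Y)$ and $x\mapsto\nu_0+\nu_1(\lambda+x)$ are continuous on $[0,A]$, the set where they agree is closed and has $X$-measure $1$, hence contains $\supp(X)$. Conversely, if the output $P_X$ of an admissible $X$ satisfies $\E[P_X]=\mu$ and $\KL(Y_x\|P_X)\le\nu_0+\nu_1\E[Y_x]$ for all $x\in[0,A]$ with equality on $\supp(X)$, then $I(X;Y_X)=\int_{\supp(X)}\KL(Y_x\|P_X)\,dX(x)=\nu_0+\nu_1\E[P_X]=c$, which against the upper bound $C(\DTP_{\lambda,A,\mu})\le c$ just proved shows $X$ is capacity-achieving and $C(\DTP_{\lambda,A,\mu})=c$.

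The only real difficulty I anticipate is measure-theoretic: making the golden-formula identity — and the interchange of the $X$-integral with the infinite sum over $\N$ that underlies it — rigorous for an \emph{arbitrary}, possibly continuous, input distribution $X$ on $[0,A]$ (including $A=\infty$) with a countably infinite output alphabet, and ensuring no step is of indeterminate form $\infty-\infty$ (which here reduces to checking $\KL(P_X\|Y)<\infty$ and $I(X;Y_X)<\infty$ for admissible $X$, both a posteriori consequences of $\int\KL(Y_x\|Y)\,dX(x)\le\nu_0+\nu_1\mu<\infty$). The standard remedy is to truncate the output alphabet to $\{0,1,\dots,n\}$, where every interchange and identity is elementary, and pass to the limit by monotone/dominated convergence; the degenerate case $\nu_1=0$, in which $\E[Y]=\mu$ need not be forced, should be flagged separately. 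Finally, nothing above is special to the Poisson kernel beyond the output alphabet being discrete and $x\mapsto\KL(Y_x\|Y)$ being continuous, so the same scheme yields the more general continuous-input statement to be proved in Appendix~\ref{sec:proof1}.
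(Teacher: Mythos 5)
Your proof is correct and takes a genuinely different route from the paper's. The paper proves this statement via a fairly heavy machinery: it invokes Lagrange duality in infinite-dimensional spaces (Luenberger), defines and computes weak (G\^ateaux-type) derivatives of the mutual-information functional in the direction of step distributions, and combines these with a Lapidoth--Moser-style bound $I(F)\le\int\KL(Y_x\|Y)\,dF(x)$ to obtain the KKT-type characterization. Your argument replaces all of that by the exact ``golden formula'' identity $I(X;Y_X)=\int\KL(Y_x\|Q)\,dX(x)-\KL(P_X\|Q)$, from which the upper bound, the identification $P_X=Y$, the complementary slackness $\nu_1(\mu-\E[P_X])=0$, and the pointwise-equality-on-support conclusion all fall out at once by reading off two nonnegative slack terms. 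This is cleaner and more self-contained than the paper's route: you do not need Luenberger's theorem, the weak-derivative lemma, or any optimization over perturbation directions. What the paper's weak-derivative approach buys in exchange is a proof of the \emph{necessary} direction that does not presuppose a dual-feasible $Y$ at all --- it manufactures the multipliers $\nu_0,\nu_1$ from the Lagrange duality theorem, which is what the paper needs in Lemma~\ref{lem:specform} and Theorem~\ref{thm:discretesupp} where no $Y$ is given in advance. Your argument, as stated, proves necessity only relative to an already-given $(Y,\nu_0,\nu_1)$ satisfying~\eqref{eq:KLreq}, which is exactly what Theorem~\ref{thm:dual} asks for but not quite what the paper's more general Theorem~\ref{thm:dualtech} delivers.

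Two small points worth tightening. First, the finiteness you flag is actually available a priori rather than a posteriori: $\KL(P_X\|Y)\le\int\KL(Y_x\|Y)\,dX(x)\le\nu_0+\nu_1\mu<\infty$ by convexity of relative entropy in its first argument, so the golden formula is never of the form $\infty-\infty$; the remaining interchange is just Tonelli applied to nonnegative integrands, and your truncation fallback is more than enough. Second, you are right to flag the degenerate case $\nu_1=0$: when $A<\infty$ and $\mu>\lambda+A$, there exist dual-feasible $Y$ with $\nu_1=0$, the mean constraint is slack for every admissible $X$, and the conclusion $\E[P_X]=\mu$ in the ``only if'' direction genuinely fails. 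The paper's Theorem~\ref{thm:dualtech} is phrased so as to sidestep this (it does not claim $F^\star\in\OYeq$ in the necessity direction), but the reduced statement in Theorem~\ref{thm:dual} is, strictly speaking, a slight overclaim in that corner case. Your proof makes the exact location of the gap explicit, which is a small improvement over the paper's presentation.
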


Although there is a very simple correspondence between input and output average-power constraints for the DTP channel, this is not always the case. For general channels, considering the output mean as a parameter (as opposed to the input mean) leads to a more natural design of candidate distributions to be used in the analogue of Theorem~\ref{thm:dual}.

We call distributions $Y$ satisfying~\eqref{eq:KLreq} in Theorem~\ref{thm:dual} for some parameters $\nu_0$ and $\nu_1$ \emph{dual-feasible.} For the DTP channel with $\lambda=0$, we wish to find a dual-feasible distribution $Y$ and 
parameters $\nu_0, \nu_1 > 0$ such that
\[
D(Y_x||Y)\leq \nu_0+\nu_1 \E[Y_x]=\nu_0+\nu_1x
\]
for all $x\in\mathbb{R}^{\geq 0}$, and the inequality gap 
as small as possible. Using Theorem~\ref{thm:dual}, we readily obtain an upper bound for $C(\DTP_\mu)=C(\mu)$.

\subsection{The digamma distribution}\label{sec:digamma}

The result of Martinez~\cite{Mar07} follows the common approach of a 
convex duality formulation that leads to capacity upper bounds given
an appropriate distribution on the channel output alphabet. Indeed, this
is also the approach that we take. The dual distribution chosen by
\cite{Mar07} is a negative binomial distribution, which is a natural 
choice corresponding to a gamma distribution for the channel input. 
However, lengthy manipulations and certain adjustments are needed to obtain a closed-form
capacity upper bound for this choice. We use a slightly different
duality formulation, as discussed in Section~\ref{sec:highlevel}. Furthermore, for the dual output distribution, we use
a distribution that we call the ``digamma distribution'' and is
designed by Cheraghchi~\cite{Che17} precisely for the purpose
of use in the duality framework of \cite{Che17}. This distribution
asymptotically behaves like the negative binomial distribution. However,
it is constructed to automatically yield provable capacity upper bounds
without need for any further manipulations or adjustments. This is the key to our
refined bounds and dramatically simplified analysis\footnote{We note
that the duality framework of \cite{Che17} uses standard 
techniques and the dual-feasibility of the digamma distribution
also has a simple proof.}.

For a parameter $q\in(0,1)$, the digamma distribution $Y^{(q)}$ is defined
over non-negative integers with probability mass function
\begin{equation}\label{eq:digamma}
Y^{(q)}(y):=y_0 \frac{\exp(y \psi(y))(q/e)^y}{y!},\quad y=0,1,\ldots,
\end{equation}
where $y_0$ is a normalizing factor depending on $q$ (we omit this dependence in the notation for brevity), 
$\psi$ is the digamma function,
and $y \psi(y)$ is understood to be zero for $y=0$. For positive integers $y$, we have $\psi(y)=-\gamma+\sum_{k=1}^{y-1}1/k$, where $\gamma\approx 0.5772$ is the Euler-Mascheroni constant.

We will need to control the normalizing factor $y_0$, which is accomplished by the following result.
\begin{lem}[\protect{\cite[Corollary 16]{Che17}}]\label{lem:y0upper}
	We have
	\begin{align*}
		\log\left(1+\frac{2}{e^{1+\gamma}}\left(\frac{1}{\sqrt{1-q}}-1\right)\right)\leq-\log y_0\leq \log\left(1+\frac{1}{\sqrt{2e}}\left(\frac{1}{\sqrt{1-q}}-1\right)\right)
	\end{align*}
for all $q\in(0,1)$.
\end{lem}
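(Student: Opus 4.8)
Write the defining sum of the digamma distribution as $y_0^{-1}=\sum_{y=0}^\infty\frac{\exp(y\psi(y))(q/e)^y}{y!}=1+\sum_{y=1}^\infty a_y q^y$, where $a_y:=\frac{\exp(y\psi(y))e^{-y}}{y!}\geq 0$ (the $y=0$ term is $1$). The plan is to compare this power series coefficient-by-coefficient with the generalized binomial series $\frac{1}{\sqrt{1-q}}=\sum_{y=0}^\infty b_y q^y=1+\sum_{y=1}^\infty b_y q^y$, where $b_y:=\binom{2y}{y}4^{-y}\geq 0$. Setting $r_y:=a_y/b_y$, it suffices to show that $\frac{2}{e^{1+\gamma}}\leq r_y\leq \frac{1}{\sqrt{2e}}$ for every integer $y\geq 1$: since every $b_y q^y$ is nonnegative, term-wise multiplication then gives $\frac{2}{e^{1+\gamma}}\left(\frac{1}{\sqrt{1-q}}-1\right)\leq y_0^{-1}-1\leq \frac{1}{\sqrt{2e}}\left(\frac{1}{\sqrt{1-q}}-1\right)$ (this also justifies convergence of $\sum a_y q^y$ for $q\in(0,1)$ via bounded partial sums), and applying $\log(1+\cdot)$ yields the claimed two-sided bound on $-\log y_0$.

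To pin down the range of $r_y$, I would first show $(r_y)_{y\geq 1}$ is strictly increasing. Using $\psi(y+1)=\psi(y)+1/y$, a short computation gives $\frac{a_{y+1}}{a_y}=\frac{\exp\bigl((y+1)\psi(y+1)-y\psi(y)\bigr)e^{-1}}{y+1}=\frac{e^{\psi(y+1)}}{y+1}$, while $\frac{b_{y+1}}{b_y}=\frac{(2y+1)(2y+2)}{4(y+1)^2}=\frac{2y+1}{2(y+1)}$, so $\frac{r_{y+1}}{r_y}=\frac{2\,e^{\psi(y+1)}}{2y+1}$. Hence monotonicity is equivalent to $\psi(y+1)\geq\log\!\left(y+\tfrac12\right)$ for all integers $y\geq 1$. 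I would prove the stronger fact that $g(x):=\psi(x)-\log(x-\tfrac12)$ is positive for all real $x>\tfrac12$: here $g'(x)=\psi'(x)-\frac{1}{x-1/2}$, and since $\psi'(x)=\sum_{k=0}^\infty(x+k)^{-2}$ with $t\mapsto(x+t)^{-2}$ strictly convex, the midpoint inequality gives $(x+k)^{-2}<\int_{k-1/2}^{k+1/2}(x+t)^{-2}\,dt$, so summing over $k\geq 0$ yields $\psi'(x)<\int_{-1/2}^\infty(x+t)^{-2}\,dt=\frac{1}{x-1/2}$ and thus $g'(x)<0$; combined with $\lim_{x\to\infty}g(x)=0$ (from $\psi(x)=\log x-\frac{1}{2x}+O(x^{-2})$ and $\log(x-\tfrac12)=\log x-\frac{1}{2x}+O(x^{-2})$), this forces $g>0$ on $(\tfrac12,\infty)$, hence $(r_y)$ strictly increasing.

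Granting monotonicity, the lower bound is attained at $y=1$: from $\psi(1)=-\gamma$ we get $a_1=e^{-1-\gamma}$ and $b_1=\tfrac12$, so $r_1=\frac{2}{e^{1+\gamma}}$ and therefore $r_y\geq\frac{2}{e^{1+\gamma}}$ for all $y\geq 1$. For the upper bound, since $(r_y)$ is increasing it suffices to evaluate $\lim_{y\to\infty}r_y$; then $r_y<\lim_{y\to\infty}r_y$. Using Legendre's duplication formula in the form $\frac{y!\,4^y}{(2y)!}=\frac{\sqrt\pi}{\Gamma(y+1/2)}$ we can rewrite $r_y=\frac{\sqrt\pi\,\exp(y\psi(y)-y)}{\Gamma(y+1/2)}$, and plugging in the Stirling expansions $y\psi(y)=y\log y-\tfrac12+o(1)$ and $\log\Gamma(y+\tfrac12)=y\log(y+\tfrac12)-y-\tfrac12+\tfrac12\log(2\pi)+o(1)$ together with $y\log\frac{y}{y+1/2}\to-\tfrac12$ gives $\log r_y\to-\tfrac12\log 2-\tfrac12=-\tfrac12\log(2e)$, i.e. $r_y\to\frac{1}{\sqrt{2e}}$, as needed. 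The one genuinely delicate ingredient is the digamma inequality $\psi(y+1)\geq\log(y+\tfrac12)$ underpinning the monotonicity of $(r_y)$; everything else is a short manipulation of the ratios $a_{y+1}/a_y$, $b_{y+1}/b_y$ or a routine asymptotic evaluation.
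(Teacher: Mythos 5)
The paper itself supplies no proof of Lemma~\ref{lem:y0upper}: it is quoted verbatim from \cite[Corollary~16]{Che17}, so there is nothing internal to compare against. Your blind proof is complete, correct, and self-contained, and it is worth emphasizing that it proves a bit more than the lemma asks for. Writing $y_0^{-1}=1+\sum_{y\geq 1}a_y q^y$ with $a_y=\exp(y\psi(y))e^{-y}/y!$ and $\frac{1}{\sqrt{1-q}}=1+\sum_{y\geq 1}b_y q^y$ with $b_y=\binom{2y}{y}4^{-y}=\frac{\Gamma(y+1/2)}{\sqrt{\pi}\,y!}$, the core of your argument is the two-sided coefficient bound $\frac{2}{e^{1+\gamma}}\leq r_y:=a_y/b_y\leq\frac{1}{\sqrt{2e}}$ for $y\geq 1$. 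That pointwise statement is precisely Lemma~\ref{lem:negbin} of the paper (note $\mathsf{NB}_{1/2,q}(y)=b_y q^y\sqrt{1-q}$ and $\sqrt{1-q}\,Y^{(q)}(y)/y_0=a_y q^y\sqrt{1-q}$), so you have derived both \cite[Corollary~16]{Che17} statements at once; Lemma~\ref{lem:y0upper} then follows by summing the termwise inequality and applying $\log(1+\cdot)$, exactly as you describe.

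The individual steps all check out. The ratio computations $a_{y+1}/a_y=e^{\psi(y+1)}/(y+1)$ (via $(y+1)\psi(y+1)-y\psi(y)=\psi(y+1)+1$) and $b_{y+1}/b_y=(2y+1)/(2(y+1))$ are right, so $r_{y+1}/r_y=2e^{\psi(y+1)}/(2y+1)$ and monotonicity reduces to $\psi(y+1)>\log(y+\tfrac12)$. Your proof of $g(x)=\psi(x)-\log(x-\tfrac12)>0$ on $(\tfrac12,\infty)$ via the midpoint inequality on $\psi'(x)=\sum_{k\geq 0}(x+k)^{-2}<\int_{-1/2}^\infty(x+t)^{-2}\,dt=\frac{1}{x-1/2}$ together with $g(x)\to 0^+$ is a clean and standard way to get this digamma bound. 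The endpoints are then $r_1=2e^{-1-\gamma}$ from $\psi(1)=-\gamma$, $b_1=\tfrac12$, and $\lim_{y\to\infty}r_y=(2e)^{-1/2}$ from the Legendre duplication identity $r_y=\sqrt{\pi}\exp(y\psi(y)-y)/\Gamma(y+\tfrac12)$ plus Stirling; I verified the cancellations, and $y\log\frac{y}{y+1/2}\to-\tfrac12$ supplies the missing $-\tfrac12$ to give $\log r_y\to-\tfrac12\log(2e)$. Since $(r_y)$ is strictly increasing, the lower bound is attained at $y=1$ and the upper bound is a strict supremum, which is at least as strong as the lemma's $\leq$. No gaps.
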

\begin{remark}
	Sharper bounds exist for $-\log y_0$ based on special functions (Lerch transcendent).
\end{remark}

We will also be using the fact that the digamma distribution is closely related to the negative binomial distribution. We denote the negative binomial distribution with number of failures $r$ (note that $r$ is not necessarily an integer) and success probability $p$ by ${\sf NB}_{r,p}$. Its probability mass function is given by
\[
	\mathsf{NB}_{r,p}(y)=\binom{y+x-1}{x}p^y(1-p)^r,\quad y=0,1,2,\dots.
\]
We have the following result.

\begin{lem}[\protect{\cite[Corollary 16]{Che17}}]\label{lem:negbin}
	For all $y\geq 1$ and $q\in(0,1)$,
	\[
		\frac{2}{e^{1+\gamma}}{\sf NB}_{1/2,q}(y)\leq \frac{\sqrt{1-q}P_{Y^{(q)}}(y)}{y_0}\leq \frac{1}{\sqrt{2e}}{\sf NB}_{1/2,q}(y).
	\]
\end{lem}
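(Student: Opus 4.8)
The plan is to cancel the $q$-dependence and reduce the whole statement to a single monotone inequality about the digamma function. Writing out the two probability mass functions, $\NB_{1/2,q}(y)=\frac{\Gamma(y+1/2)}{\sqrt{\pi}\,y!}\,q^y\sqrt{1-q}$ (using $\binom{y-1/2}{y}=\Gamma(y+1/2)/(\Gamma(1/2)\,\Gamma(y+1))$ and $\Gamma(1/2)=\sqrt{\pi}$), while by~\eqref{eq:digamma} we have $\sqrt{1-q}\,\Yq(y)/y_0=\sqrt{1-q}\,\exp(y\psi(y))(q/e)^y/y!$. Dividing the second quantity by the first, the factors $\sqrt{1-q}$ and $q^y$ cancel and we are left with
\[
g(y):=\frac{\sqrt{1-q}\,\Yq(y)/y_0}{\NB_{1/2,q}(y)}=\frac{\sqrt{\pi}\,\exp(y\psi(y)-y)}{\Gamma(y+1/2)},
\]
which no longer depends on $q$. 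So it suffices to prove $\frac{2}{e^{1+\gamma}}\le g(y)\le \frac{1}{\sqrt{2e}}$ for every integer $y\ge 1$.

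I would then pin down the two endpoints. Since $\psi(1)=-\gamma$ and $\Gamma(3/2)=\sqrt{\pi}/2$, one gets $g(1)=2e^{-1-\gamma}=\frac{2}{e^{1+\gamma}}$ exactly, so the lower bound is attained at $y=1$. For the upper end, a standard computation using Stirling's expansion $\log\Gamma(z)=z\log z-z-\frac12\log z+\frac12\log(2\pi)+O(1/z)$ together with $\psi(z)=\log z-\frac1{2z}+O(1/z^2)$ gives $\log g(y)=-\frac12\log 2-\frac12+O(1/y)$, so $g(y)\to \frac{1}{\sqrt{2e}}$ as $y\to\infty$. It then remains to show that $g$ is increasing on the positive integers, since this — combined with the endpoint facts — yields both inequalities at once: the lower one from $g(y)\ge g(1)$, and the upper one because an increasing sequence converging to $\frac{1}{\sqrt{2e}}$ never exceeds $\frac{1}{\sqrt{2e}}$.

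Monotonicity is where the real content lies. Using $\psi(y+1)=\psi(y)+1/y$ and $\Gamma(y+3/2)=(y+1/2)\Gamma(y+1/2)$, the ratio of consecutive terms telescopes to
\[
\frac{g(y+1)}{g(y)}=\frac{\exp((y+1)\psi(y+1)-(y+1)-y\psi(y)+y)}{y+1/2}=\frac{e^{\psi(y+1)}}{y+1/2},
\]
so $g(y+1)\ge g(y)$ precisely when $\psi(y+1)\ge\log(y+1/2)$, i.e.\ when the estimate $\psi(x)\ge \log(x-1/2)$ holds at $x=y+1\ge 2$. This is the step I expect to be the main obstacle. It is a classical bound on the digamma function, and it also admits a short self-contained proof: put $h(x)=\psi(x)-\log(x-1/2)$; since $\psi'(x)=\sum_{k\ge 0}(x+k)^{-2}$ and $t\mapsto(x+t)^{-2}$ is convex on $(-x,\infty)$, the Hermite--Hadamard (midpoint) inequality gives $(x+k)^{-2}<\int_{k-1/2}^{k+1/2}(x+t)^{-2}\,dt$ for every $k\ge 0$ (legitimate as soon as $x>1/2$), and summing over $k$ yields $\psi'(x)<\int_{-1/2}^{\infty}(x+t)^{-2}\,dt=\frac1{x-1/2}$; hence $h'(x)<0$ on $(1/2,\infty)$, while $h(x)\to 0$ as $x\to\infty$ because both $\psi(x)$ and $\log(x-1/2)$ equal $\log x+O(1/x)$, so a strictly decreasing function with limit $0$ at infinity must stay positive. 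Feeding this back shows $g$ is strictly increasing on $\{1,2,\dots\}$, and translating the resulting bounds on $g(y)$ back through its definition gives the stated two-sided estimate on $\sqrt{1-q}\,\Yq(y)/y_0$.
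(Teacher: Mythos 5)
Your proof is correct and self-contained. The paper does not reprove this lemma; it cites it from \cite[Corollary 16]{Che17} as a black box, so there is no in-paper argument to compare against. Your reduction is the natural one: cancelling $\sqrt{1-q}$, $q^y$, and $y!$ in $\sqrt{1-q}\,\Yq(y)/(y_0\,\NB_{1/2,q}(y))$ correctly isolates the $q$-free quantity $g(y)=\sqrt{\pi}\,e^{y\psi(y)-y}/\Gamma(y+1/2)$, the two endpoint evaluations $g(1)=2e^{-1-\gamma}$ and $g(y)\to(2e)^{-1/2}$ are both right, and the telescoping identity
\[
\frac{g(y+1)}{g(y)}=\frac{e^{\psi(y+1)}}{y+1/2}
\]
(using $\psi(y+1)=\psi(y)+1/y$ and $\Gamma(y+3/2)=(y+\tfrac12)\Gamma(y+\tfrac12)$) correctly reduces monotonicity to $\psi(x)>\log(x-\tfrac12)$ for $x\ge 2$. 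Your midpoint/Hermite--Hadamard proof of that digamma bound is also sound: convexity of $t\mapsto(x+t)^{-2}$ gives $\psi'(x)<1/(x-\tfrac12)$ for $x>\tfrac12$, and since $h(x)=\psi(x)-\log(x-\tfrac12)\to 0$ from Stirling-type expansions, strict monotonic decrease forces $h>0$. One small presentational point worth making explicit: the upper bound step relies on $g$ being increasing and tending to a \emph{finite} limit, which you did establish; it would not follow from monotonicity alone, so it is worth emphasizing that the Stirling asymptotics are doing real work for the upper inequality rather than just giving a sanity check.
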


\subsection{A first capacity upper bound}\label{sec:firstbound}

In this section, we use the digamma distribution and the approach outlined in Section~\ref{sec:highlevel} in order to derive an upper bound for $C(\mu)$. 


The random variable $Y_x$ in this case satisfies $Y_x=\mathsf{Poi}(x)$, where $\mathsf{Poi}(\lambda)$ denotes a Poisson distribution with mean $\lambda$. Therefore, its probability mass function is given by
\[
	Y_x(y)=e^{-x}\frac{x^y}{y!},\quad y=0,1,2,\dots.
\]

We will now give a short proof that the digamma distribution given in~\eqref{eq:digamma} is dual-feasible for the $\DTP_\mu$ channel by invoking well-known facts from the theory of special functions.

First, for $q\in(0,1)$ and some function $g$ satisfying $g(y)\leq y\log y+o(y)$, consider a general distribution $Y$ of the form
\[
	Y(y)=y_0\frac{\exp(g(y))(q/e)^y}{y!},\quad y=0,1,2,\dots,
\]
where $y_0$ is the normalizing factor. The upper bound on $g$ ensures that $Y$ is a valid probability distribution. In this case, the Kullback-Leibler divergence between $Y_x$ and $Y$ has a simple form for every $x$. We have
\begin{align}\label{eq:KL}
D_{\sf{KL}}\left(Y_x\big|\big|Y\right)&=\sum_{y=0}^\infty Y_x(y)\log\left(\frac{Y_x(y)}{Y(y)}\right)\nonumber\\
&= \sum_{y=0}^\infty Y_x(y)(-\log y_0+y(1-\log q )-g(y)-x+y\log x)\nonumber\\
&=-\log y_0-x\log q+x\log x-\E[g(Y_x)].
\end{align}

Via~\eqref{eq:KL}, it follows that $Y$ is dual feasible provided that we choose $g$ such that
\begin{equation}\label{eq:conddualfeasible}
	\E[g(Y_x)]=e^{-x}\sum_{y=0}^\infty \frac{g(y)}{y!}x^y\geq x\log x
\end{equation}
for all $x\geq 0$.

From the theory of special functions (by instantiating the Tricomi confluent hypergeometric function $U(a,n+1,z)$ with approriate parameters: \cite[13.1.6, p.\ 505 with $a=n+1=1$]{AS65} combined with \cite[13.6.12, p.\ 509]{AS65} and \cite[13.6.30, p.\ 510]{AS65}), we have the identity
\begin{equation}\label{eq:specfunc}
	e^x E_1(x)= \sum_{y=0}^\infty \frac{\psi(1+y)}{y!}x^y-e^x\log x,
\end{equation}
where $E_1(x)=\int_1^{\infty}e^{-xt} dt/t$ is the exponential integral function and $\psi$ is the digamma function. Multiplying both sides of~\eqref{eq:specfunc} by $xe^{-x}$ leads to
\[
	e^{-x}\sum_{y=0}^\infty \frac{y\psi(y)}{y!}x^y = x\log x +xE_1(x)\geq x\log x.
\]
Consequently, the choice
\begin{equation}\label{eq:choiceg}
	g(y)=y\psi(y)
\end{equation}
with the convention $g(0)=0$ satisfies~\eqref{eq:conddualfeasible} and thus leads to a dual feasible distribution $Y$. Furthermore, $g(y)=y\log y+o(y)$, as desired. 

We briefly give some intuition as to how~\eqref{eq:choiceg} shows naturally in~\cite{Che17}. A possible approach towards tightly satisfying~\eqref{eq:conddualfeasible} is to design $g^*$ such that
\[
	\mathds{E}[g^*(Y_x)]=x\log x,\quad \forall x\geq 0.
\]
It is possible to derive a formal solution $g^*$ to this functional equation of the form $g^*(y)=\int_0^\infty h(y,t)dt$ for some function $h(\cdot,\cdot)$. However, $g^*(y)$ is a divergent integral for all $y>0$. Therefore, $g^*$ does not exist. A possible solution to this problem is to truncate the integration bounds so that the integral converges. Using some identities from the theory of special functions, truncating the integration bounds of $g^*(y)$ appropriately leads to the choice~\eqref{eq:choiceg}.

Combining the choice of $g$ in~\eqref{eq:choiceg} with \eqref{eq:KL} allows us to conclude that
\begin{equation}\label{ineq:klgap}
	D_{\sf{KL}}\left(Y_x\big|\big|Y^{(q)}\right)\leq -\log y_0-x\log q
\end{equation}
for all $x\geq 0$. Applying Theorem~\ref{thm:dual}, we conclude that
\begin{equation}\label{bound:mean-limited}
	C(\mu)=C(\DTP_\mu)\leq -\log y_0-\mu\log q,
\end{equation}
%
which immediately leads to the following result.
\begin{thm}\label{thm:mainbound}
For all $\mu \geq 0$, we have
\begin{equation}\label{bound:our}
	C(\mu)\leq \inf_{q\in(0,1)} (-\log y_0-\mu\log q).
\end{equation}
\end{thm}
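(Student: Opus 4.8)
The plan is to observe that almost all of the work has already been carried out in the preceding derivation, so that it essentially suffices to fix $q\in(0,1)$, record the per-$q$ bound $C(\mu)\le -\log y_0-\mu\log q$ from \eqref{bound:mean-limited}, and then take the infimum over $q\in(0,1)$.

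Concretely, I would fix $q\in(0,1)$ and use the digamma distribution $Y^{(q)}$ of \eqref{eq:digamma} as the dual output distribution. Since $Y^{(q)}$ has the form $Y(y)=y_0\exp(g(y))(q/e)^y/y!$ with $g(y)=y\psi(y)$, and $g(y)=y\log y+o(y)$ so that the distribution is normalizable, the closed form \eqref{eq:KL} applies: for every $x\ge 0$, with $Y_x=\mathsf{Poi}(x)$, one gets $\KL(Y_x\|Y^{(q)})=-\log y_0-x\log q+x\log x-\E[g(Y_x)]$. The crucial point is to verify that this choice of $g$ is dual-feasible, i.e., that $\E[g(Y_x)]\ge x\log x$ for all $x\ge 0$; I would do this by multiplying the confluent-hypergeometric identity \eqref{eq:specfunc} through by $xe^{-x}$ to obtain $\E[g(Y_x)]=x\log x+xE_1(x)$, and then invoking $E_1(x)=\int_1^\infty e^{-xt}\,dt/t\ge 0$. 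This yields $\KL(Y_x\|Y^{(q)})\le -\log y_0-x\log q$ for all $x\ge 0$, which is precisely \eqref{ineq:klgap}.

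Next I would read \eqref{ineq:klgap} as the hypothesis \eqref{eq:KLreq} of Theorem~\ref{thm:dual} with $\lambda=0$, $A=\infty$, $\nu_0=-\log y_0$, and $\nu_1=-\log q$ (note $\nu_1>0$ since $q\in(0,1)$, so the sign requirement on $\nu_1$ is met). For $\mu>0$, so that $\mu>\lambda=0$ and Theorem~\ref{thm:dual} applies, this gives $C(\mu)=C(\DTP_\mu)\le \nu_0+\nu_1\mu=-\log y_0-\mu\log q$; since $q$ was arbitrary in $(0,1)$, taking the infimum establishes the claimed bound for $\mu>0$. The boundary case $\mu=0$ I would dispatch separately and trivially: $C(0)=0$, and since $Y^{(q)}(0)=y_0\in(0,1]$ we have $-\log y_0\ge 0$, while Lemma~\ref{lem:y0upper} forces $-\log y_0\to 0$ as $q\to 0^+$, so the right-hand infimum also equals $0$.

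I do not anticipate a genuine obstacle: the statement is an essentially immediate corollary of \eqref{bound:mean-limited}. The one place where care is warranted is the dual-feasibility check for the choice $g(y)=y\psi(y)$, i.e., \eqref{eq:conddualfeasible}; but that reduces to the cited special-function identity \eqref{eq:specfunc} together with the elementary nonnegativity of the exponential integral, both of which are already in hand. The remainder is just matching parameters to the hypotheses of Theorem~\ref{thm:dual} and handling $\mu=0$.
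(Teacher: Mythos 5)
Your proof is correct and follows the paper's own derivation essentially verbatim: establish dual-feasibility of the digamma distribution via the closed form \eqref{eq:KL} and the special-function identity \eqref{eq:specfunc}, invoke Theorem~\ref{thm:dual} with $\nu_0=-\log y_0$ and $\nu_1=-\log q$ to get the per-$q$ bound \eqref{bound:mean-limited}, and take the infimum over $q\in(0,1)$. Your separate treatment of $\mu=0$ is a small but genuine refinement over the paper, which states the result for all $\mu\ge 0$ but derives it from Theorem~\ref{thm:dual}, whose hypothesis requires $\mu>\lambda=0$; your observation that $C(0)=0$ and $-\log y_0\to 0$ as $q\to 0^+$ (via Lemma~\ref{lem:y0upper}) closes that boundary case cleanly.
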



\subsection{Elementary bounds in a systematic way}\label{sec:qchoice}

While Theorem~\ref{thm:mainbound} gives an upper bound on $C(\mu)$, it involves minimizing a rather complicated function (for which we do not know an exact closed-form expression) over a bounded interval. Since it is of interest to have easy-to-compute but high quality upper bounds, we consider instantiating the parameter $q$ inside the infimum in \eqref{bound:our} with a simple function of $\mu$. In this section, we present a systematic way of deriving such a good choice $q(\mu)$. Finally, we upper bound $-\log(y_0)$ using Lemma~\ref{lem:y0upper}, obtaining an improved closed-form bound for $C(\mu)$.

We determine a good choice $q(\mu)$ for the parameter $q$ in \eqref{bound:our} indirectly by instead choosing $q(\mu)$ so that the associated distribution $Y^{(q(\mu))}$ (given by \eqref{eq:digamma}) has expected value close to $\mu$. The reasons for this are the following: First, a capacity-achieving distribution $X$ under an average-power constraint $\mu$ must satisfy $\E[X]=\E[Y]=\mu$ (see Appendix~\ref{sec:existoptimal}). While a capacity-achieving $X$ does not necessarily induce a digamma distribution over the output, the digamma distribution seems to be close to optimal, since the gap between the two expressions in~\eqref{ineq:klgap} is $xE_1(x)$, which decays exponentially with $x$. Second, numerical computation suggests that the distribution $Y$ induced by the choice of $q$ that minimizes the bound from Theorem~\ref{thm:mainbound} has expected value very close (or equal) to $\mu$. While determining a choice $q(\mu)$ such that $\E\left[Y^{(q(\mu))}\right]$ is very close to $\mu$ for all $\mu>0$ may be complicated, we settle for a choice $q(\mu)$ that behaves well when $\mu\to 0$ and $\mu\to\infty$.

We begin by studying how $q(\mu)$ should behave when $\mu\to \infty$. In this case, we should have $q(\mu)\to 1$. Lemma~\ref{lem:y0upper} implies that
\begin{equation*}
	\frac{2}{e^{1+\gamma}}+\left(1-\frac{2}{e^{1+\gamma}}\right)\sqrt{1-q}\leq \frac{\sqrt{1-q}}{y_0}\leq \frac{1}{\sqrt{2e}}+\left(1-\frac{1}{\sqrt{2e}}\right)\sqrt{1-q},
\end{equation*}
from which we can conclude that
\begin{equation}\label{eq:asympqy0}
	\frac{2}{e^{1+\gamma}}\leq \frac{\sqrt{1-q}}{y_0}\leq \frac{1}{\sqrt{2e}}+o(1)
\end{equation}
when $q\to 1$. Combining \eqref{eq:asympqy0} with Lemma~\ref{lem:negbin}, we obtain
\begin{equation*}
	\frac{2\sqrt{2e}}{e^{1+\gamma}}-o(1)\leq \frac{\Yq(y)}{{\sf NB}_{1/2,q}(y)}\leq \frac{e^{1+\gamma}}{2\sqrt{2e}}\approx 1.038
\end{equation*}
for $y=0,1,\dots$, when $q\to 1$, and so we conclude that the digamma distribution is well-approximated by ${\sf NB}_{1/2,q}$ when $q$ is close to 1. 


Recall that we want a choice of $q(\mu)$ such that $Y^{(q(\mu))}$ has expected value as close as possible to $\mu$ in the large $\mu$ regime. The choice of $q$ which ensures that $\E\left[{\sf NB}_{1/2,q}\right]=\mu$ is $q=\frac{2\mu}{1+2\mu}$, and so we want $q(\mu)$ to satisfy $q(\mu)=\frac{2\mu}{1+2\mu}+o\left(\frac{1}{\mu}\right)$ when $\mu\to\infty$.

One could set $q(\mu)=\frac{2\mu}{1+2\mu}$ to obtain the desired behavior above, but we will show that we can correct this choice in order to achieve $\E\left[Y^{(q(\mu))}\right]= \mu+o(\mu)$ when $\mu\to 0$. To make the derivation simpler, we will instead work with the quantity $\frac{1}{1-q(\mu)}$.

Consider a choice $q(\mu)$ satisfying
\begin{equation*}
\frac{1}{1-q(\mu)}=1+\alpha\mu+\frac{\beta\mu^2}{1+\mu}
\end{equation*}
for some constants $\alpha$ and $\beta$. It is easy to see that $\frac{1}{1-q(\mu)}$ behaves as $1+\alpha\mu+o(\mu)$ when $\mu\to 0$ and as $1+(\alpha+\beta)\mu+o(\mu)$ when $\mu\to\infty$, which means we can set its asymptotic behavior in both the small and large $\mu$ regimes independently of each other. Moreover, setting $\alpha+\beta=2$ leads to the desired behavior $q(\mu)=\frac{2\mu}{1+2\mu}+o\left(\frac{1}{\mu}\right)$ when $\mu\to\infty$.

We now proceed to choose $\alpha$. As mentioned before, we determine the choice of $\alpha$ which ensures that $\E\left[Y^{(q(\mu))}\right]= \mu+o(\mu)$ when $\mu\to 0$. It is straightforward to see that, by construction, $q(\mu)=\alpha\mu+o(\mu)$ when $\mu\to 0$. We will need the following result.
\begin{lem}\label{lem:asympmean}
	We have $\E\left[Y^{(q)}\right]= e^{-(1+\gamma)}q+o(q)$ as $q\to 0$.
\end{lem}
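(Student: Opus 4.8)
The plan is to compute the expectation of $Y^{(q)}$ directly from its definition~\eqref{eq:digamma} and track the leading term as $q\to 0$. Writing $\E[Y^{(q)}] = \sum_{y=1}^\infty y \cdot Y^{(q)}(y) = y_0 \sum_{y=1}^\infty y \cdot \frac{\exp(y\psi(y))(q/e)^y}{y!}$, the dominant contribution as $q\to0$ comes from the smallest index, $y=1$: this term equals $y_0 \cdot \exp(\psi(1)) \cdot (q/e) = y_0 \cdot e^{-\gamma} \cdot (q/e) = y_0\, e^{-(1+\gamma)} q$, using $\psi(1)=-\gamma$. Every term with $y\geq 2$ carries a factor $q^y = O(q^2)$, so the entire tail $\sum_{y\geq 2} y\cdot Y^{(q)}(y)$ is $O(q^2) = o(q)$; one should bound the coefficients $\frac{\exp(y\psi(y))}{(y-1)!\,e^y}$ by something summable uniformly in $q\in(0,1)$ (for instance via Lemma~\ref{lem:negbin}, which controls $Y^{(q)}(y)/y_0$ in terms of ${\sf NB}_{1/2,q}(y)$, or by the crude estimate $\exp(y\psi(y)) \leq y^y$ together with Stirling), so that this tail sum is bounded by a constant times $q^2$.

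Next I would handle the normalizing factor: Lemma~\ref{lem:y0upper} gives $-\log y_0 \to 0$ as $q\to 0$ (both the lower and upper bounds there vanish as $q\to0$ since $\frac{1}{\sqrt{1-q}}-1 \to 0$), hence $y_0 = 1 + o(1)$, and in fact $y_0 = 1 - O(q)$ from the same lemma. Therefore $y_0\, e^{-(1+\gamma)} q = e^{-(1+\gamma)} q \cdot (1+o(1)) = e^{-(1+\gamma)} q + o(q)$. Combining the $y=1$ term with the $o(q)$ tail and the $o(q)$ correction from $y_0$ yields $\E[Y^{(q)}] = e^{-(1+\gamma)} q + o(q)$, as claimed.

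The only mildly delicate point is justifying that the tail $\sum_{y\geq 2} y\, Y^{(q)}(y) = o(q)$ with a bound that is uniform in $q$ near $0$; this requires a summable majorant for the coefficients $c_y := \exp(y\psi(y))/((y-1)!\, e^y)$ independent of $q$, after which $\sum_{y\geq 2} y\,Y^{(q)}(y) \leq y_0 q^2 \sum_{y\geq 2} c_y$ and the sum converges (its convergence is already implicit in the fact that $Y^{(q)}$ is a well-defined probability distribution for every $q\in(0,1)$, in particular for $q$ bounded away from $0$, so one can simply fix any $q_0\in(0,1)$ and compare). I expect no real obstacle here — the main work is just carefully separating the $y=1$ term, and everything else is routine.
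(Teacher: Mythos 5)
Your proof is correct and follows essentially the same route as the paper: extract the $y=1$ term using $\psi(1)=-\gamma$, invoke Lemma~\ref{lem:y0upper} to get $y_0\to 1$, and observe that the $y\geq 2$ tail is $O(q^2)$. The paper's version is slightly terser (it simply divides by $q$ and notes the tail vanishes), but the structure and the key ingredients are identical.
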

\begin{proof}
	Recall that $g(y)=y\psi(y)$, and note that
	\begin{equation}\label{eq:expval}
		\frac{\E\left[Y^{(q)}\right]}{q}=y_0e^{-(1+\gamma)}+y_0\sum_{y=2}^\infty y\cdot \frac{e^{g(y)-y}q^{y-1}}{y!}.
	\end{equation}
	It is easy to see that $y_0$ approaches $1$ (using Lemma~\ref{lem:y0upper}, for example) and the second term in the RHS of \eqref{eq:expval} vanishes when $q\to 0$, and so the result follows.
\end{proof}
The remarks above, combined with Lemma~\ref{lem:asympmean}, imply that $\E\left[Y^{(q(\mu))}\right]= e^{-(1+\gamma)}\alpha\mu+o(\mu)$ when $\mu\to 0$. Therefore, it suffices to set $\alpha=e^{1+\gamma}$ to have $\E\left[Y^{(q(\mu))}\right]=\mu+o(\mu)$ when $\mu\to 0$. Based on this, we set $q(\mu)$ to be such that
\begin{equation}\label{eq:qchoice}
	\frac{1}{1-q(\mu)}=1+e^{1+\gamma}\mu+\frac{(2-e^{1+\gamma})\mu^2}{1+\mu}.
\end{equation}

Combining the previous discussion, Theorem~\ref{thm:mainbound}, and Lemma~\ref{lem:y0upper}, we obtain the following result.
\begin{thm}\label{thm:closedbound}
We have
\begin{equation}\label{bound:optquppery0}
	C(\mu)\leq \inf_{q\in(0,1)} f(\mu,q),
\end{equation}
where $	f(\mu,q):=-\mu\log q+\log\left(1+\frac{1}{\sqrt{2e}}\left(\frac{1}{\sqrt{1-q}}-1\right)\right)$.

In particular, by instantiating $q$ with $q(\mu)$ defined in \eqref{eq:qchoice},
\begin{align}\label{bound:qchoice}
	C(\mu)&\leq \mu  \log \left(\frac{1+\left(1+e^{1+\gamma }\right) \mu+2\mu ^2}{e^{1+\gamma } \mu + 2 \mu ^2}\right)+\log \left(1+\frac{1}{\sqrt{2e}}\left(\sqrt{\frac{1+(1+e^{1+\gamma})\mu+2\mu^2}{1+\mu}}-1\right)\right).
\end{align}
\end{thm}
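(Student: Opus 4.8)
The plan is to combine Theorem~\ref{thm:mainbound} with the upper bound on $-\log y_0$ from Lemma~\ref{lem:y0upper} to get \eqref{bound:optquppery0}, and then obtain \eqref{bound:qchoice} by a direct substitution and simplification.

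For the first inequality \eqref{bound:optquppery0}: Theorem~\ref{thm:mainbound} already gives $C(\mu)\leq \inf_{q\in(0,1)}\left(-\log y_0-\mu\log q\right)$, where $y_0=y_0(q)$ is the normalizing constant of $Y^{(q)}$. Fix any $q\in(0,1)$. The right-hand estimate of Lemma~\ref{lem:y0upper} states $-\log y_0\leq \log\left(1+\frac{1}{\sqrt{2e}}\left(\frac{1}{\sqrt{1-q}}-1\right)\right)$, so adding $-\mu\log q$ to both sides gives $-\log y_0-\mu\log q\leq f(\mu,q)$ for this $q$. Taking the infimum over $q\in(0,1)$ on both sides yields \eqref{bound:optquppery0}.

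For the closed-form bound \eqref{bound:qchoice}: by \eqref{bound:optquppery0} it suffices to evaluate $f(\mu,q(\mu))$ for the specific $q(\mu)$ defined by \eqref{eq:qchoice}, after first checking that $q(\mu)\in(0,1)$. Write $t:=\frac{1}{1-q(\mu)}$ and put the right-hand side of \eqref{eq:qchoice} over the common denominator $1+\mu$: the terms $e^{1+\gamma}\mu(1+\mu)$ and $(2-e^{1+\gamma})\mu^2$ combine so that the numerator collapses to $1+(1+e^{1+\gamma})\mu+2\mu^2$, giving $t=\frac{1+(1+e^{1+\gamma})\mu+2\mu^2}{1+\mu}$. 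Since the numerator exceeds the denominator by exactly $e^{1+\gamma}\mu+2\mu^2>0$ for $\mu>0$, we have $t>1$, hence $q(\mu)=1-1/t=\frac{e^{1+\gamma}\mu+2\mu^2}{1+(1+e^{1+\gamma})\mu+2\mu^2}\in(0,1)$, as required. Substituting into $f(\mu,q)=-\mu\log q+\log\left(1+\frac{1}{\sqrt{2e}}(\sqrt{t}-1)\right)$, the first summand becomes $\mu\log\frac{1+(1+e^{1+\gamma})\mu+2\mu^2}{e^{1+\gamma}\mu+2\mu^2}$ and the second becomes $\log\left(1+\frac{1}{\sqrt{2e}}\left(\sqrt{\frac{1+(1+e^{1+\gamma})\mu+2\mu^2}{1+\mu}}-1\right)\right)$, which is precisely \eqref{bound:qchoice}.

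There is no genuine obstacle here: both parts are immediate given Theorem~\ref{thm:mainbound} and Lemma~\ref{lem:y0upper}, and the only care needed is the bookkeeping in collecting the terms of $\frac{1}{1-q(\mu)}$ over a common denominator and confirming that the resulting $q(\mu)$ lies in the admissible interval $(0,1)$ for every $\mu>0$. (If one wishes the statement to hold literally at $\mu=0$ as well, the first summand of \eqref{bound:qchoice} should be read through the limit $\mu\log\!\big(1/(e^{1+\gamma}\mu)+O(1)\big)\to 0$, consistent with $C(0)=0$.)
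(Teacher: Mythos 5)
Your argument is correct and is exactly the route the paper intends: combine Theorem~\ref{thm:mainbound} with the upper estimate on $-\log y_0$ from Lemma~\ref{lem:y0upper} pointwise in $q$ and then pass to the infimum, and obtain \eqref{bound:qchoice} by clearing the denominator in \eqref{eq:qchoice} to get $\frac{1}{1-q(\mu)}=\frac{1+(1+e^{1+\gamma})\mu+2\mu^2}{1+\mu}$ and substituting. The bookkeeping (the collapse of $e^{1+\gamma}\mu(1+\mu)+(2-e^{1+\gamma})\mu^2$ to $e^{1+\gamma}\mu+2\mu^2+\mu$, the verification that $q(\mu)\in(0,1)$, and the $\mu\to 0$ reading) is all accurate.
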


Note that $f(\mu,\cdot)$ is an elementary, smooth, and convex function for every fixed $\mu\geq 0$. Therefore, \eqref{bound:optquppery0} can be easily approximated to any desired degree of accuracy.

\begin{remark}
	The reasons why we base our choice of $q(\mu)$ on \eqref{bound:our} instead of \eqref{bound:optquppery0} are the following: First, $q(\mu)$ is still close to optimal when used in \eqref{bound:optquppery0} (see Figure~\ref{fig:allbounds02}). Second, the choice is independent of the upper bound on $-\log y_0$, and so can be reutilized if a better bound is used.
\end{remark}

\subsection{The result of Martinez as a special case}\label{sec:martinez}
In this section, we show that the bound by Martinez~\eqref{bound:martinez} can be quite easily recovered through our techniques. More precisely, we show that this bound is a special case of \eqref{bound:optquppery0} with a sub-optimal choice of $q=2\mu/(1+2\mu)$. In particular, this implies that \eqref{bound:optquppery0} is strictly tighter than~\eqref{bound:martinez}. In this section, we define $m(\mu)$ to be the right hand side of \eqref{bound:martinez}. Recall that $	f(\mu,q)=-\mu\log q+\log\left(1+\frac{1}{\sqrt{2e}}\left(\frac{1}{\sqrt{1-q}}-1\right)\right)$.
\begin{thm}\label{thm:martinez}
	We have $f\left(\mu,\frac{2\mu}{1+2\mu}\right)=m(\mu)$ for all $\mu\geq 0$. Moreover, for every $\mu>0$ there is $q^*_\mu\in(0,1)$ such that $f(\mu,q^*_\mu)<m(\mu)$.
\end{thm}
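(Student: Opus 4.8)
The plan is to treat the two assertions separately: the equality $f\!\left(\mu,\tfrac{2\mu}{1+2\mu}\right)=m(\mu)$ is a pure algebraic identity, while the strict inequality follows by showing that $q=\tfrac{2\mu}{1+2\mu}$ is not a stationary point of the smooth convex function $f(\mu,\cdot)$, so that moving $q$ slightly strictly decreases the bound.

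\noindent\textbf{The identity.} Substitute $q=\tfrac{2\mu}{1+2\mu}$, so that $1-q=\tfrac{1}{1+2\mu}$ and hence $\tfrac{1}{\sqrt{1-q}}=\sqrt{1+2\mu}$; the logarithmic term of $f$ becomes $\log\!\big(1+\tfrac{1}{\sqrt{2e}}(\sqrt{1+2\mu}-1)\big)$. I would write both $f\!\left(\mu,\tfrac{2\mu}{1+2\mu}\right)$ and $m(\mu)$ as a ``first part'' plus a ``residual logarithm''. On the $f$ side the first part is $-\mu\log q=\mu\log(1+2\mu)-\mu\log(2\mu)$; on the $m$ side it is $(\mu+\tfrac12)\log(\mu+\tfrac12)-\mu\log\mu-\tfrac12$. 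Using $(\mu+\tfrac12)\log(\mu+\tfrac12)=(\mu+\tfrac12)\log(1+2\mu)-(\mu+\tfrac12)\log2$, a one-line computation shows these two first parts differ by exactly $\tfrac12\log(1+2\mu)-\tfrac12\log(2e)$. For the residuals, putting the arguments of the two logarithms over a common denominator reveals that both numerators equal $\sqrt{2e}+\sqrt{1+2\mu}-1$, so the residuals differ by $\tfrac12\log\tfrac{1+2\mu}{2e}$, which cancels the discrepancy between the first parts. Hence $f\!\left(\mu,\tfrac{2\mu}{1+2\mu}\right)=m(\mu)$ for $\mu>0$, and the case $\mu=0$ is immediate since both sides are $0$.

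\noindent\textbf{The strict improvement.} With $h(q):=1+\tfrac{1}{\sqrt{2e}}\big((1-q)^{-1/2}-1\big)$ one has $\tfrac{\partial f}{\partial q}(\mu,q)=-\tfrac{\mu}{q}+\tfrac{h'(q)}{h(q)}$, and clearing the factor $(1-q)^{3/2}$ from numerator and denominator gives the clean form
\[
\frac{\partial f}{\partial q}(\mu,q)=-\frac{\mu}{q}+\frac{1}{(2\sqrt{2e}-2)(1-q)^{3/2}+2(1-q)}.
\]
Evaluating at $q=\tfrac{2\mu}{1+2\mu}$, where $1-q=\tfrac{1}{1+2\mu}$ and $\tfrac{\mu}{q}=\tfrac{1+2\mu}{2}$, and writing $s:=1+2\mu>1$, this equals $-\tfrac{s}{2}+\dfrac{s^{3/2}}{(2\sqrt{2e}-2)+2\sqrt{s}}$; since $2\sqrt{2e}-2>0$ the denominator strictly exceeds $2\sqrt{s}$, so the second term is strictly less than $\tfrac{s}{2}$ and the derivative is strictly negative. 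Therefore $q=\tfrac{2\mu}{1+2\mu}$ is not a minimizer of the convex function $f(\mu,\cdot)$ — equivalently, a small step to the right strictly lowers the value — and any such $q^*_\mu\in(0,1)$ satisfies $f(\mu,q^*_\mu)<f\!\left(\mu,\tfrac{2\mu}{1+2\mu}\right)=m(\mu)$. The only delicate point is the bookkeeping in the identity; the inequality rests solely on $2\sqrt{2e}>2$, i.e.\ on the genuine presence of the digamma normalization correction relative to the negative binomial, so I anticipate no real obstacle.
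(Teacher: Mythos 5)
Your proof is correct and takes essentially the same approach as the paper: a direct algebraic verification of the identity (your ``first part plus residual'' bookkeeping is just a repackaging of the paper's line-by-line subtraction, and both hinge on the common numerator $\sqrt{2e}+\sqrt{1+2\mu}-1$), followed by showing that $\partial f/\partial q$ at $q=\tfrac{2\mu}{1+2\mu}$ is strictly negative via the same reduction to $2\sqrt{2e}-2>0$. The only cosmetic caveat is the sign convention in ``the residuals differ by $\tfrac12\log\tfrac{1+2\mu}{2e}$'' — taken literally with the same orientation as the first-part discrepancy it would double rather than cancel, but your stated intent makes clear which difference is meant.
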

\begin{proof}
	To prove the first statement of the theorem, we compute 
	\begin{align*}
	&m(\mu)-f\!\left(\mu,\frac{2\mu}{1+2\mu}\right)\\
	&=\left(\mu+\frac{1}{2}\right)\log\left(\mu+\frac{1}{2}\right)-\mu\log\mu-\frac{1}{2}+\log\left(1+\frac{\sqrt{2e}-1}{\sqrt{1+2\mu}}\right) \\&- \mu\log\left(\frac{1+2\mu}{2\mu}\right)-\log\left(1+\frac{1}{\sqrt{2e}}\left(\sqrt{1+2\mu}-1\right)\right)\\
	&=\frac{1}{2}\log\left(\mu+\frac{1}{2}\right)+\mu\left(\log\left(\mu+\frac{1}{2}\right)-\log\mu\right)-\frac{1}{2}+\log\left(\frac{\sqrt{1+2\mu}+\sqrt{2e}-1}{\sqrt{1+2\mu}}\right) \\&- \mu\log\left(\frac{1+2\mu}{2\mu}\right)-\log\left(\frac{\sqrt{1+2\mu}+\sqrt{2e}-1}{\sqrt{2e}}\right)\\
	&=\frac{1}{2}\log\left(\mu+\frac{1}{2}\right)-\frac{1}{2}+\log\left(\sqrt{\frac{2e}{1+2\mu}}\right)=0.
	\end{align*}
	
	To see that the second statement holds, it suffices to show that $\frac{\partial f}{\partial q}\left(\mu,\frac{2\mu}{1+2\mu}\right)\neq 0$ for all $\mu>0$. We have
	\begin{equation}\label{eq:deriv}
		\frac{\partial f}{\partial q}(\mu,q)=-\frac{\mu}{q}+\frac{1}{2\left(\sqrt{2e}+\frac{1}{\sqrt{1-q}}-1\right)(1-q)^{3/2}}.
	\end{equation}
	Instantiating with $q=\frac{2\mu}{1+2\mu}$ yields
	\begin{equation*}
	\frac{\partial f}{\partial q}\left(\mu,\frac{2\mu}{1+2\mu}\right)=-\frac{1+2\mu}{2}+\frac{(1+2\mu)^{3/2}}{2\left(\sqrt{2e}+\sqrt{1+2\mu}-1\right)},
	\end{equation*}
	and now it is enough to note that
	\begin{align*}
	& -(1+2\mu)\left(\sqrt{2e}+\sqrt{1+2\mu}-1\right)+(1+2\mu)^{3/2}\\
	& =(1+2\mu)(1-\sqrt{2e})< 0
	\end{align*}
	for all $\mu\geq 0$.
\end{proof}

Finally, we show that the explicit choice $q(\mu)$ from Section~\ref{sec:qchoice} yields a strictly better upper bound than the Martinez bound~\eqref{bound:martinez}.
\begin{thm}
	We have $f(\mu,q(\mu))<f\left(\mu,\frac{2\mu}{1+2\mu}\right)=m(\mu)$ for all $\mu>0$.
\end{thm}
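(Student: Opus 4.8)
The plan is to reduce the claim to an explicit scalar inequality in $\mu$, and then to a cubic. By Theorem~\ref{thm:martinez} we already know $f\bigl(\mu,\tfrac{2\mu}{1+2\mu}\bigr)=m(\mu)$, so it is enough to prove $f(\mu,q(\mu))<f(\mu,q_0)$ with $q_0:=\tfrac{2\mu}{1+2\mu}$. I would set $P:=1+2\mu=\tfrac1{1-q_0}$, $R:=\tfrac1{1-q(\mu)}$, and $c:=e^{1+\gamma}-2$; formula \eqref{eq:qchoice} gives the closed form $R=\tfrac{P^2+(1+c)P-c}{P+1}$, so that $R-P=\tfrac{c(P-1)}{P+1}>0$ (one only uses $e^{1+\gamma}>e>2$), whence $R>P>1$ and in particular $q(\mu)>q_0$. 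Substituting $\tfrac1{\sqrt{1-q_0}}=\sqrt P$, $\tfrac1{\sqrt{1-q(\mu)}}=\sqrt R$ and the elementary identity $\tfrac{q(\mu)}{q_0}=\tfrac{P(R-1)}{R(P-1)}=1+\tfrac{c}{R(P+1)}=:K$ into the definition of $f$ yields $f(\mu,q_0)-f(\mu,q(\mu))=\tfrac{P-1}2\log K-\log\tfrac{w+\sqrt R}{w+\sqrt P}$ with $w:=\sqrt{2e}-1$; so the goal becomes $\tfrac{P-1}2\log K>\log\tfrac{w+\sqrt R}{w+\sqrt P}$, i.e., upon exponentiating, $K^{(P-1)/2}>\tfrac{w+\sqrt R}{w+\sqrt P}$ for all $P>1$.

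Next I would bound both sides by rational expressions in $P$ and $R$. On the left, $K^{(P-1)/2}=e^{\frac{P-1}2\log K}>1+\tfrac{P-1}2\log K>1+\tfrac{P-1}2\cdot\tfrac{2(K-1)}{K+1}=1+\tfrac{(P-1)c}{2R(P+1)+c}$, using $e^x>1+x$ and $\log(1+x)>\tfrac{2x}{2+x}$ for $x>0$. On the right, since $\sqrt R-\sqrt P=\tfrac{R-P}{\sqrt R+\sqrt P}$, one has $\tfrac{w+\sqrt R}{w+\sqrt P}=1+\tfrac{\sqrt R-\sqrt P}{w+\sqrt P}=1+\tfrac{c(P-1)}{(P+1)(\sqrt R+\sqrt P)(w+\sqrt P)}$. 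Hence $K^{(P-1)/2}>\tfrac{w+\sqrt R}{w+\sqrt P}$ — and therefore the theorem — follows once we establish, for every $P>1$,
\[
(P+1)(\sqrt R+\sqrt P)(w+\sqrt P)\ \ge\ 2R(P+1)+c .
\]

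To prove this I would divide by $P+1$ and rearrange, using $2R-P-\sqrt{RP}=(R-P)\tfrac{2\sqrt R+\sqrt P}{\sqrt R+\sqrt P}$ and $\tfrac c{P+1}=\tfrac{R-P}{P-1}$, into the equivalent form $w(\sqrt R+\sqrt P)\ge(R-P)\bigl(\tfrac{2\sqrt R+\sqrt P}{\sqrt R+\sqrt P}+\tfrac1{P-1}\bigr)$. Bounding $\tfrac{2\sqrt R+\sqrt P}{\sqrt R+\sqrt P}<2$ on the right and $\sqrt R+\sqrt P\ge2\sqrt P$ on the left, and inserting $R-P=\tfrac{c(P-1)}{P+1}$, it suffices to prove $2w\sqrt P(P+1)\ge c(2P-1)$; writing $t:=\sqrt P\ge1$ this is the cubic inequality $\psi(t):=2wt^3-2ct^2+2wt+c\ge0$. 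This is elementary: since $2wt^3+2wt=2wt(t^2+1)\ge4wt^2$, we get $\psi(t)\ge(4w-2c)t^2+c$, a decreasing function of $t$ (as $4w<2c$), hence $\ge(4w-2c)(c/w)^2+c>0$ on $1\le t\le c/w$ (note $c/w>1$); and for $t\ge c/w$, $\psi(t)\ge2wt^3-2ct^2=2t^2(wt-c)\ge0$. (In particular $\psi(1)=4w-c=4\sqrt{2e}-e^{1+\gamma}-2>0$.) Every numerical check reduces to $2.33<\sqrt{2e}<2.34$ and $4.83<e^{1+\gamma}<4.85$.

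The hard part is that the statement is genuinely tight near $\mu=0$: there $f(\mu,q(\mu))$ and $m(\mu)$ vanish to first order, and $m(\mu)-f(\mu,q(\mu))=\bigl(1+\gamma-\log2-\tfrac{e^{1+\gamma}-2}{2\sqrt{2e}}\bigr)\mu+O(\mu^2)$, whose leading coefficient is positive only by a modest margin. Consequently the easy route — $q(\mu)>q_0$ together with convexity of $f(\mu,\cdot)$ — does not suffice, since $q(\mu)$ actually overshoots the minimizer of $f(\mu,\cdot)$ for small $\mu$; and any estimate that replaces a logarithm above by its first-order surrogate near $0$ (e.g.\ $\log(1+x)\le x$, or Bernoulli's inequality) breaks down in this regime, because the relevant logarithmic arguments stay bounded away from $0$ as $\mu\to0$. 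The chain above is arranged precisely so that every step used ($e^x>1+x$, $\log(1+x)>\tfrac{2x}{2+x}$, $\tfrac{2\sqrt R+\sqrt P}{\sqrt R+\sqrt P}<2$, $\sqrt R\ge\sqrt P$) stays accurate enough as $P\to1^+$, reducing everything to an elementary cubic whose minimum is comfortably positive; keeping the estimates uniformly good across all of $(0,\infty)$ is the main thing to get right.
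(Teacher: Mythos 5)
Your proposal is correct and proceeds by a genuinely different, more elementary route than the paper. The paper establishes the result by a three-way case split: for $\mu\ge 1.61$ it shows $\frac{\partial f}{\partial q}(\mu,q(\mu))<0$ (reduced to a sign check on a cubic in $\mu$) and invokes convexity of $f(\mu,\cdot)$ together with $q(\mu)>\frac{2\mu}{1+2\mu}$; for small $\mu$ it uses a first-order Taylor expansion of the difference $d(\mu)=m(\mu)-f(\mu,q(\mu))$; and for the remaining middle range $\mu\in(0,1.61)$ it falls back on a computer-algebra verification, replacing $\log$ and $\sqrt{\cdot}$ by rational bounds so that positivity of $d$ (and, for $\mu<0.3$, negativity of $d''$) can be certified symbolically. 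You instead work throughout with the exact difference $\frac{P-1}{2}\log K - \log\frac{w+\sqrt R}{w+\sqrt P}$ in the variables $P=\frac{1}{1-q_0}$, $R=\frac{1}{1-q(\mu)}$, observe that $R-P=\frac{c(P-1)}{P+1}$ and $K-1=\frac{c}{R(P+1)}$ share the same small parameter $c=e^{1+\gamma}-2$, and choose rational under/over-estimates ($e^x>1+x$, $\log(1+u)>\frac{2u}{2+u}$, $\sqrt R+\sqrt P\ge2\sqrt P$, $\frac{2\sqrt R+\sqrt P}{\sqrt R+\sqrt P}<2$) that remain uniformly sharp as $P\to1^+$, reducing the entire claim to the single cubic $\psi(t)=2wt^3-2ct^2+2wt+c\ge0$ for $t=\sqrt P\ge1$ — which you dispose of by an elementary split at $t=c/w$, using AM-GM on one side and $wt\ge c$ on the other, with the only numerical input being crude two-decimal bounds on $\sqrt{2e}$ and $e^{1+\gamma}$. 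Your route therefore avoids both the convexity/derivative detour (which, as you note, must fail for small $\mu$ since $q(\mu)$ overshoots the minimizer there) and the CAS-assisted verification of the middle region, giving a self-contained one-case proof at the cost of being tighter: the final cubic's positivity at $t=c/w$ rests on $w^2+4wc-2c^2>0$, which holds by a fairly thin margin, whereas the paper's polynomial criterion for $\mu\ge1.61$ has more slack but less coverage.
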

\begin{proof}
We give short arguments that the statement holds whenever $\mu\geq 1.61$ and when $\mu$ is sufficiently small. The middle region can be verified numerically. Let $d(\mu):=m(\mu)-f(\mu,q(\mu))$.

We begin by noting that $q(\mu)>\frac{2\mu}{1+2\mu}$ for all $\mu>0$. This is because
\[
	q(\mu)-\frac{2\mu}{1+2\mu}=\frac{(e^{1+\gamma}-2)\mu}{(1+2\mu)(1+\mu(1+e^{1+\gamma}+2\mu))}>0.
\]
The inequality is true since the both the denominator and numerator are positive for all $\mu>0$ (observe that $e^{1+\gamma}>2$).

Due to the convexity of $f(\mu,\cdot)$, the desired statement holds for a given $\mu$ if $\frac{\partial f}{\partial q}(\mu,q(\mu))<0$. Recalling~\eqref{eq:deriv} and~\eqref{eq:qchoice}, we have
\begin{equation}\label{eq:dfdq}
	\frac{\partial f}{\partial q}(\mu,q(\mu))=-\mu-\frac{1+\mu}{e^{1+\gamma}+2\mu}+\frac{\left(\frac{\mu \left(2 \mu+e^{1+\gamma }+1\right)+1}{\mu+1}\right)^{3/2}}{\sqrt{\frac{\mu\left(2 \mu+e^{1+\gamma }+1\right)+1}{\mu+1}}+\sqrt{2 e}-1}.
\end{equation}
Let $T_1(\mu)=\mu+\frac{1+\mu}{e^{1+\gamma}+2\mu}$ and $T_2(\mu)=\frac{\mu \left(2 \mu+e^{1+\gamma }+1\right)+1}{\mu+1}$. Then, taking into account~\eqref{eq:dfdq}, we have $\frac{\partial f}{\partial q}(\mu,q(\mu))<0$ if and only if
\begin{align*}
	\frac{T_2(\mu)^{3/2}}{\sqrt{T_2(\mu)}+\sqrt{2e}-1}&<T_1(\mu),
\end{align*}
which is equivalent to
\begin{align*}
T_2(\mu)^{3/2}-T_1(\mu)\sqrt{T_2(\mu)}<T_1(\mu)(\sqrt{2e}-1.)
\end{align*}
Squaring both sides shows that $\frac{\partial f}{\partial q}(\mu,q(\mu))<0$ whenever
\begin{align*}
	T_2^3(\mu)-2T_1(\mu)T_2^2(\mu)-T_1(\mu)^2T_2(\mu)^2<T_1(\mu)^2(\sqrt{2e}-1)^2.
\end{align*}
Observe that each term in the inequality is a rational function. As such, we can expand each term, and then compute and eliminate the common denominator to obtain an equivalent polynomial inequality, which turns out to be
\begin{align*}
	e^{2+2 \gamma }-4 e^{1+\gamma }+8 \sqrt{2 e}-8 e+\left(24 \sqrt{2 e}-3 e^{2+2 \gamma }+e^{3+3 \gamma }-24 e-8\right) \mu\\+\left(24 \sqrt{2 e}-8 e^{1+\gamma }+2 e^{2+2 \gamma }-24 e-4\right) \mu^2+\left(8 \sqrt{2 e}-8 e-4\right) \mu^3<0.
\end{align*}
There are many known methods for determining the roots of degree-3 polynomials. We can use such a method to see that the largest root of the polynomial on the left-hand side is smaller than $1.61$, and so the inequality holds whenever $\mu\geq 1.61$.


To prove that $d(\mu)>0$ for $\mu$ small enough, we look at the limiting behavior of $d(\mu)$ when $\mu\to 0$. We have that
\begin{align*}
	d(\mu)=\left(1+\gamma+\frac{1}{\sqrt{2 e}}-\log 2 -\frac{e^{\frac{1}{2}+\gamma }}{2 \sqrt{2}}\right)\mu+o(\mu)\approx 0.27\mu+o(\mu)
\end{align*}
when $\mu\to 0$, which implies that $d(\mu)>0$, and hence $m(\mu)>f(\mu,q(\mu))$, when $\mu$ is small enough.


When $\mu<1.61$ but it is not too small, one can show $d(\mu)>0$ by employing a computer algebra system.  However, $d(\mu)$ is a complex expression, and so cannot be processed directly by such a system. We avoid this issue in the following way: For $\mu\in[0.3,1.61]$, we lower bound $d(\mu)$ by positive rational functions. This is done by replacing the logarithmic and square root terms of the expression by appropriate bounds (described below) which are themselves rational functions. Then, the question of whether $d(\mu)>0$ is reduced to showing that a certain polynomial is positive in the given interval, which can be formally checked by a computer algebra system with little effort. For $\mu<0.3$, our lower bounds for $d(\mu)$ are not good enough, and so we use the same reasoning to show that its second derivative $d''(\mu)$ is negative for $\mu<0.3$. This implies that $d(\mu)$ is concave in $[0,0.3]$, which, combined with the previous results, concludes the proof.

We do not explicitly write down the relevant lower bounds for $d(\mu)$ and upper bounds for the second derivative, as they feature high-degree polynomials. Instead, we describe the relevant bounds on the logarithmic and square root terms. Then, determining the corresponding rational function and formally checking whether it is positive in a given interval is a straightforward process.

The expression $d(\mu)$ features logarithmic terms, along with square root terms of the form $\sqrt{1+2\mu}$ and $\sqrt{(1 + (1 + e^{1+\gamma}) \mu + 2 \mu^2)/(1 + \mu)}$ (recall \eqref{bound:martinez} and \eqref{bound:qchoice}). For every $x\geq 1$, we have the bounds~\cite{Top06}
\begin{equation*}
	\frac{(x-1)(6+5(x-1))}{2(3+2(x-1))}\leq \log x\leq \frac{(x-1)(x+5)}{2x(2+x)}.
\end{equation*}
Furthermore, we can upper bound $\sqrt{1+2\mu}$ and $\sqrt{(1 + (1 + e^{1+\gamma}) \mu + 2 \mu^2)/(1 + \mu)}$ by their Taylor series of degree 5 and 3, respectively, around $\mu=1$. Replacing the relevant terms in $d(\mu)$ by their respective bounds described above yields a rational function lower bound which can be easily shown to be positive for $\mu\in[0.3,1.61]$ by a standard computer algebra system.

For $\mu<0.3$, the bounds above are not tight enough to show that $d(\mu)$ is positive, and so we focus on its second derivative $d''(\mu)$. However, $d''(\mu)$ cannot be processed directly by a computer algebra system either, and so we follow the same reasoning as before. The only terms of $d''(\mu)$ that need to be bounded are of the form $\sqrt{1+2\mu}$ and $\sqrt{(1 + (1 + e^{1+\gamma}) \mu + 2 \mu^2)/(1 + \mu)}$. It suffices to upper bound (resp., lower bound) $\sqrt{1+2\mu}$ by its Taylor series of degree 1 (resp., 2) around $\mu=0$. However, extra care is needed when dealing with $\sqrt{(1 + (1 + e^{1+\gamma}) \mu + 2 \mu^2)/(1 + \mu)}$. We split the interval $[0,0.3]$ into two intervals: First, in $(0,0.25]$ we lower bound the term by its Taylor series of degree 2 around $\mu=0$. Second, in $(0.25,0.3]$ we lower bound it by its Taylor series of degree 2 around $\mu=0.25$. 

Replacing the relevant terms of $d''(\mu)$ by their respective bounds, we obtain a negative rational function upper bounding $d''(\mu)$ in each of $(0,0.25]$ and $(0.25,0.3]$, which can be formally checked to be negative with a computer algebra system. This implies that $d(\mu)$ is concave in $(0,0.3]$, and so, combined with the facts that $d(\mu)>0$ for $\mu$ small enough and $d(\mu)>0$ for $\mu\geq 0.3$, we conclude that $d(\mu)>0$ for all $\mu>0$.
\end{proof}

\section{Comparison with previously known upper bounds}\label{sec:comp}

In this section, we compare the bounds from Theorem~\ref{thm:closedbound} with the previously known bounds described in Section~\ref{sec:intro}. Moreover, we investigate the loss incurred by using \eqref{bound:qchoice} instead of \eqref{bound:our}.

Figure \ref{fig:allbounds02} showcases a plot comparing the bounds from Theorem~\ref{thm:closedbound} to previously known bounds. The curve corresponding to the bound of Lapidoth et al.\ \eqref{bound:lapidoth} is actually the plot of $\mu\log\left(\frac{1+\mu}{\mu}\right)+\log(1+\mu)$, which lower bounds the RHS of \eqref{bound:lapidoth}. There is a noticeable improvement over the Martinez bound \eqref{bound:martinez} when $\mu$ is not very small, and one can see that \eqref{bound:qchoice} is very close to \eqref{bound:optquppery0} and~\eqref{bound:our}  (with significant overlap), which confirms that the choice $q(\mu)$ from Section~\ref{sec:qchoice} is close to optimal. Table~\ref{table:compbounds} gives the numerical values attained by \eqref{bound:our}, \eqref{bound:qchoice}, and \eqref{bound:martinez} for several values of $\mu$. Table~\ref{table:compq} compares the choice~\eqref{eq:qchoice} for $q(\mu)$ with the actual optimal value of $q$ for several values of $\mu$. As expected from the previous observations, the explicit choice is always quite close to the optimal value.

Due to the fact that our bounds are tighter than Martinez's bound, both of them satisfy the first-order asymptotic behavior of $C(\mu)$ when $\mu\to 0$ and when $\mu\to\infty$. However, they do not exhibit the correct second order asymptotic term when $\mu\to 0$. In fact, the second-order asymptotic term of our bounds when $\mu\to 0$ is $-O(\mu)$, while the correct term is $-\mu\log\log(1/\mu)$. For this reason, our bounds do not improve on the Wang-Wornell bound~\eqref{bound:ww} when $\mu$ is sufficiently small (numerically, when $\mu<10^{-6}$), while they noticeably improve on every previous bound when $\mu$ is not too small.

\begin{figure}
	\centering
	\captionsetup{justification=centering}
	\includegraphics[width=0.6\textwidth]{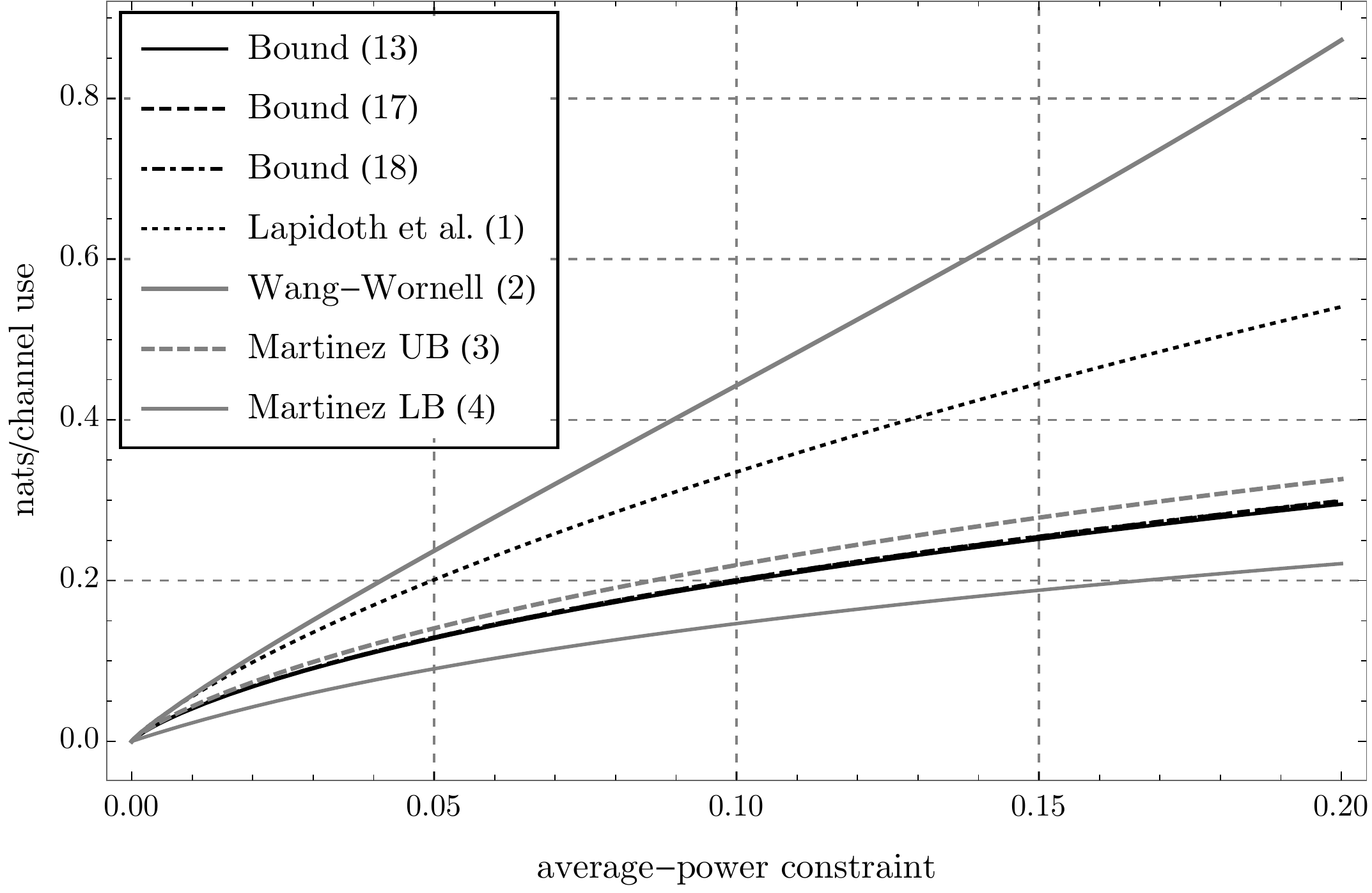}
	\caption{Comparison of upper bounds and the analytical lower bound~\eqref{eq:martinezlb} with $\nu=0.05$ for $\mu\in[0,0.2]$.}
	\label{fig:allbounds02}
\end{figure}

\begin{table}[h]
	\caption{Comparison between the bound~\eqref{bound:our} and the elementary bounds~\eqref{bound:qchoice} and~\eqref{bound:martinez} in nats/channel use.}\label{table:compbounds}
	\centering
	\begin{tabular}{ |c|c|c|c| }
		\hline
		$\mu$ & Bound \eqref{bound:our}& Bound \eqref{bound:qchoice} & Bound \eqref{bound:martinez}\\ 
		\hline
		0.05 & 0.1280& 0.1296 &0.1406\\ 
		\hline
		0.1 & 0.1983 & 0.2010 &0.2193\\ 
		\hline
		0.2 & 0.2951 &  0.2994 & 0.3262\\ 
		\hline
		0.5 & 0.4689 & 0.4753&0.5101\\ 
		\hline
		1 & 0.6367 & 0.6437 &0.6785\\ 
		\hline
		5 & 1.1407 & 1.1492 &1.1665\\ 
		\hline
		10 & 1.4005 & 1.4093 &1.4187\\ 
		\hline
		20 & 1.6806 & 1.6886 &1.6930\\ 
		\hline
		50 & 2.0756 & 2.0815 &2.0829\\ 
		\hline
	\end{tabular}
\end{table}

\begin{table}[h]
	\caption{Comparison between optimal $q$ in~\eqref{bound:our} for each $\mu$ and the choice $q(\mu)$ as in~\eqref{eq:qchoice}.}\label{table:compq}
	\centering
	\begin{tabular}{ |c|c|c| }
		\hline
		$\mu$ & Optimal $q$& $q(\mu)$ as in \eqref{eq:qchoice}\\ 
		\hline
		0.05 & 0.1851& 0.1905\\
		\hline
		0.1 & 0.3025 & 0.3143 \\
		\hline
		0.2 & 0.4482 &  0.4663 \\
		\hline
		0.5 & 0.6447 & 0.6607\\
		\hline
		1 & 0.7676 & 0.7738\\
		\hline
		5 & 0.9309 & 0.9252 \\
		\hline
		10 & 0.9617 & 0.9576 \\
		\hline
		20 & 0.9794 & 0.9771 \\
		\hline
		50 & 0.9912 & 0.9904 \\
		\hline
	\end{tabular}
\end{table}

Figure~\ref{fig:compmartinezdiff} showcases the distance of Martinez's bound~\eqref{bound:martinez} to \eqref{bound:our} and \eqref{bound:qchoice}. The plotted curves have similar shapes and are close to each other, which again shows that we do not lose much by replacing $-\log y_0$ by the upper bound of Lemma~\ref{lem:y0upper} and instantiating $q$ with the sub-optimal explicit choice $q(\mu)$ from Section~\ref{sec:qchoice}.

\begin{figure}
	\centering
	\captionsetup{justification=centering}
	\includegraphics[width=0.6\textwidth]{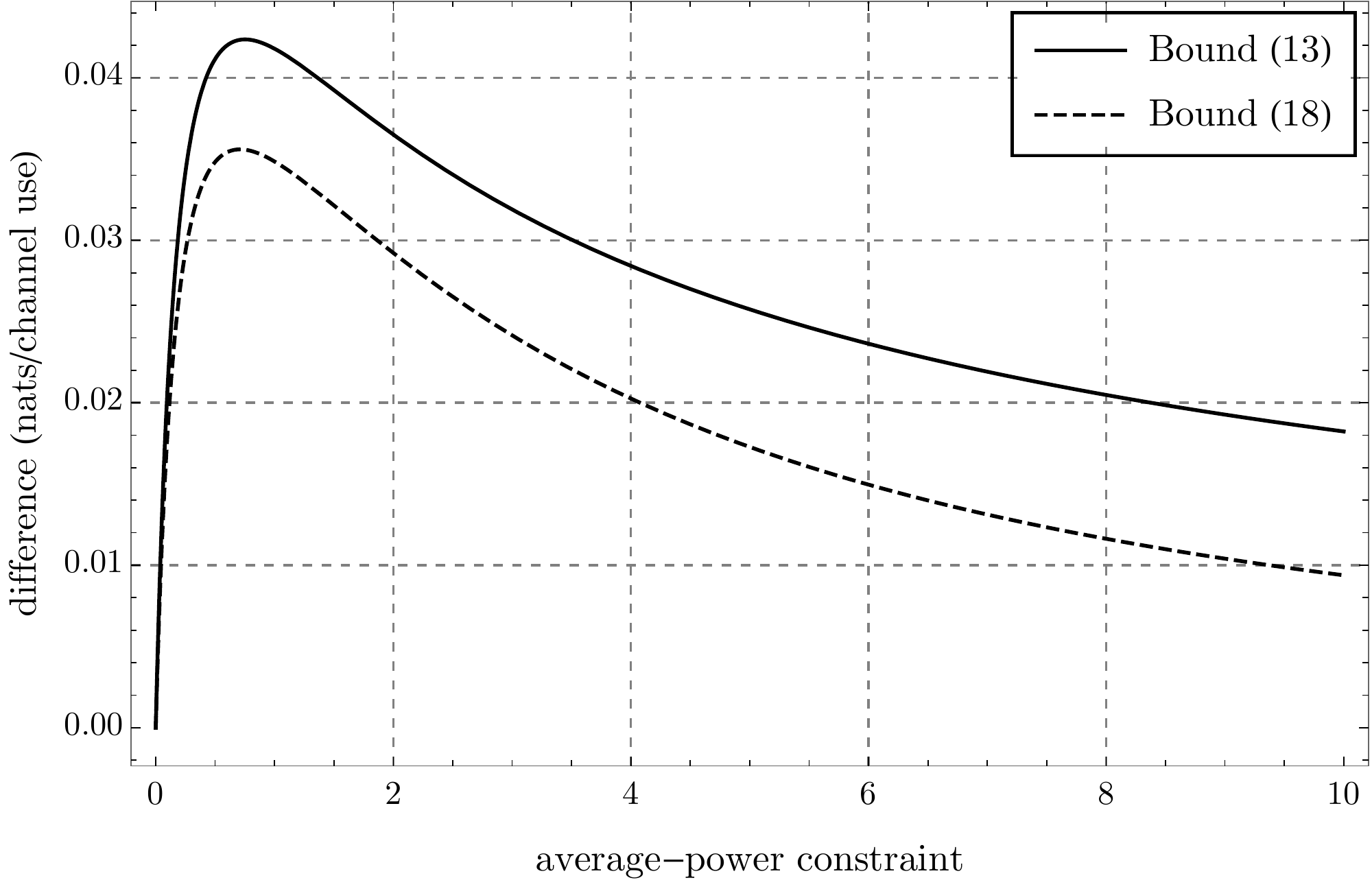}
	\caption{Comparison of difference between \eqref{bound:martinez} and \eqref{bound:our}, and between \eqref{bound:martinez} and \eqref{bound:qchoice} for $\mu\in[0,10]$.}
	\label{fig:compmartinezdiff}
\end{figure}

Figure~\ref{fig:relimprov} showcases the relative distance of the Martinez bound~\eqref{bound:martinez} to \eqref{bound:our} and \eqref{bound:qchoice}. In other words, if $m(\cdot)$ denotes the Martinez bound~\eqref{bound:martinez} and $b(\cdot)$ is either the RHS of \eqref{bound:our} or of \eqref{bound:qchoice}, then the plot shows the quantity $(m(\mu)-b(\mu))/m(\mu)$. Observe that, using \eqref{bound:qchoice}, we obtain an improvement of up to $8.2\%$ over \eqref{bound:martinez}, while we can get improvements close to $9.5\%$ using \eqref{bound:our}. Note that the two curves are close to each other and similar shape, reinforcing the fact that the loss incurred by using \eqref{bound:qchoice} instead of \eqref{bound:our} is small.
\begin{figure}
	\centering
	\captionsetup{justification=centering}
	\includegraphics[width=0.6\textwidth]{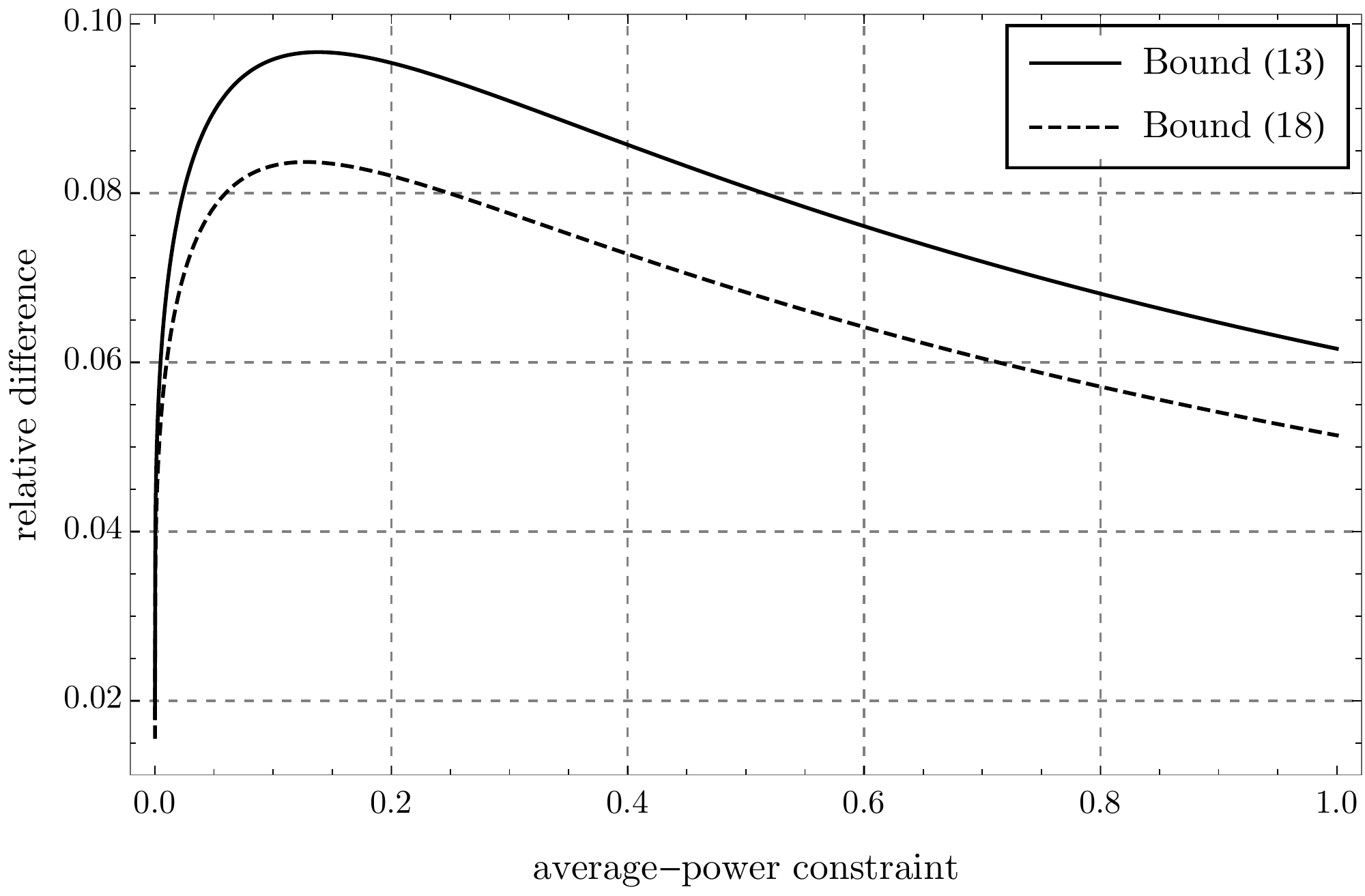}
	\caption{Relative difference between \eqref{bound:martinez} and \eqref{bound:our}, and between \eqref{bound:martinez} and \eqref{bound:qchoice} for $\mu\in[0,1]$.}
	\label{fig:relimprov}
\end{figure}

%
%
%

\section{The shape of capacity-achieving distributions}\label{sec:shape}

Besides understanding the capacity of communications channels, there has also been a significant amount of work towards determining the properties of capacity-achieving distributions. In particular, one is normally interested in knowing whether a capacity-achieving distribution has finite or discrete support, even though the input alphabet may not be a discrete set.

The study of capacity-achieving distributions for the DTP channel was initiated by Shamai~\cite{Sha90}, who proved that capacity-achieving distributions for the DTP channel with both average- and peak-power constraints have finite support. More recently, Cao, Hranilovic, and Chen~\cite{CHC14a,CHC14b} derived more properties of such distributions. Notably, they show that a capacity-achieving distribution must be supported at $0$ and at $A$ if a peak-power constraint $X\leq A$ is present. Furthermore, they show that distributions with bounded support are not capacity-achieving for the DTP channel with only an average-power constraint. For completeness, we show that there exist capacity-achieving distributions for the DTP channel under an average-power constraint in Appendix~\ref{sec:existoptimal}.

In this section, we show that a capacity-achieving distribution for the DTP channel with arbitrary dark current $\lambda\geq 0$ under an average-power constraint and/or a peak-power constraint must be discrete. As mentioned before, this settles a conjecture of Shamai~\cite{Sha90}. In fact, we show the stronger result that the support of a capacity-achieving distribution $X$ for the DTP channel under an average-power constraint and/or a peak-power constraint must have finite intersection with every bounded interval. Our techniques are general, and we recover Shamai's original result~\cite{Sha90} for the DTP channel under a peak-power constraint ($A<\infty$) with an alternative proof.

Consider a discrete probability distribution $Y$ supported on the non-negative integers. For our results, it suffices to consider $Y$ with full support. This is because all optimal output distributions of the DTP channel have full support. In fact, the only input distribution which does not induce an output distribution with full support is the distribution which assigns probability 1 to $x=0$, which is clearly not optimal. The following result gives a characterization of optimal output distributions for the DTP channel (which we might also call \emph{capacity-achieving} at times) that will be useful in later proofs.

\begin{lem}\label{lem:specform}
	Consider a distribution $Y$ with full support over the non-negative integers. Furthermore, for a given function $g$ define its (real-valued) exponential generating function $G$ as
	\begin{equation}\label{eq:expgen}
		G(z)=\sum_{i=0}^\infty \frac{g(i)}{i!}z^i.
	\end{equation}
	Let $Y_x=\mathsf{Poi}(\lambda+x)$. Then,
	\begin{enumerate}
		\item $Y$ can be written as
		\begin{equation}\label{eq:specform}
		Y(y)=y_0 \frac{\exp(g(y))(q/e)^y}{y!}, \quad y=0,1,2,...,
		\end{equation}
		for any constants $y_0,q>0$ and some $g$ satisfying $g(y)\leq y\log y + O(y)$ when $y\to\infty$. Moreover, we can always choose $q\in(0,1]$ and $g(y)\leq y\log y +o(y)$ simultaneously.
		
		\item If $Y$ satisfies~\eqref{eq:specform} for some $y_0$, $q$, and $g$, then
		\begin{equation}\label{eq:specKL}
		\KL(Y_x||Y)=-\log y_0 -\mathds{E}[Y_x]\log q + (\lambda+x)\log (\lambda+x)-e^{-(\lambda+x)}G(\lambda+x)
		\end{equation}
		for all $x\geq 0$;
%
		
		\item Suppose $X$ is capacity-achieving for the DTP channel with dark current $\lambda$ under an average-power constraint $\mu$ and peak-power constraint $A$ (we may have $A=\infty$). Furthermore, let $Y$ be the associated output distribution. Then, we can choose $y_0$, $q$, and $g$ in~\eqref{eq:specform} such that
		\begin{equation}\label{eq:condG}
			G(\lambda+x)\geq (\lambda+x)e^{\lambda+x}\log(\lambda+x),\quad \forall x\in [0,A]
		\end{equation}
		 with equality for all $x\in\mathsf{supp}(X)$.
	\end{enumerate}
\end{lem}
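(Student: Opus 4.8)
\emph{Proof proposal.} Parts 1 and 2 are direct computations; the content is in Part 3.

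For Part 1, the plan is to solve for $g$ explicitly: given a full-support $Y$ and \emph{any} constants $y_0,q>0$, equation~\eqref{eq:specform} forces
\[
g(y)=\log Y(y)+\log(y!)+y(1-\log q)-\log y_0,\qquad y=0,1,2,\dots,
\]
which is well defined precisely because $Y(y)>0$ for every $y$. Applying Stirling's estimate $\log(y!)=y\log y-y+O(\log y)$ together with $Y(y)\le 1$ gives $g(y)\le y\log y+O(y)$; specialising to $q=1$ (so that the term $y(1-\log q)=y$ cancels the $-y$ coming from Stirling) sharpens this to $g(y)\le y\log y+o(y)$ while keeping $q\in(0,1]$. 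For Part 2, the plan is a one-line manipulation: expand $\log\!\big(Y_x(y)/Y(y)\big)$ using $Y_x(y)=e^{-(\lambda+x)}(\lambda+x)^y/y!$ and~\eqref{eq:specform}; the factorials cancel, and taking expectations under $Y_x$ — using $\mathds{E}[Y_x]=\lambda+x$ to collapse two of the resulting terms to $0$, and $\mathds{E}[g(Y_x)]=e^{-(\lambda+x)}G(\lambda+x)$ by the definition~\eqref{eq:expgen} of $G$ — produces~\eqref{eq:specKL}. The interchange of expectation and summation is legitimate whenever the relevant series converges; when $\KL(Y_x\|Y)=\infty$ the identity persists with $G(\lambda+x)$ understood as $-\infty$, and this degenerate case will not arise in Part 3.

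For Part 3, suppose $X$ is capacity-achieving for the DTP channel with dark current $\lambda$ under an average-power constraint $\mu$ and peak-power constraint $A$, and let $Y$ be its output distribution. Because the dual optimum is attained with value $C(\DTP_{\lambda,A,\mu})$ (strong duality; cf.\ Appendix~\ref{sec:existoptimal}), there exist $\nu_0\in\R$ and $\nu_1\in\R^{\geq 0}$ such that, by the optimality characterisation of Theorem~\ref{thm:dual}, $\mathds{E}[Y]=\mu$ and
\[
\KL(Y_x\|Y)\le \nu_0+\nu_1\,\mathds{E}[Y_x]\qquad\text{for all }x\in[0,A],
\]
with equality for every $x\in\supp(X)$; in particular $\KL(Y_x\|Y)<\infty$ on $[0,A]$, so by the discussion in Part 2 the series defining $G$ converges at $\lambda+x$ and~\eqref{eq:specKL} applies. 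Now invoke Part 1 with the reparametrisation $y_0:=e^{-\nu_0}$ and $q:=e^{-\nu_1}$ — note $q\in(0,1]$ since $\nu_1\ge 0$ — to obtain a function $g$ and its generating function $G$ realising~\eqref{eq:specform}. Substituting $-\log y_0=\nu_0$ and $-\log q=\nu_1$ into~\eqref{eq:specKL} gives
\[
\KL(Y_x\|Y)=\nu_0+\nu_1\,\mathds{E}[Y_x]+(\lambda+x)\log(\lambda+x)-e^{-(\lambda+x)}G(\lambda+x),
\]
and comparing with the optimality inequality above leaves $(\lambda+x)\log(\lambda+x)\le e^{-(\lambda+x)}G(\lambda+x)$, i.e.\ $G(\lambda+x)\ge(\lambda+x)e^{\lambda+x}\log(\lambda+x)$ for all $x\in[0,A]$, with equality precisely where the optimality inequality is tight and hence for all $x\in\supp(X)$. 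This is~\eqref{eq:condG}.

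The step I expect to require the most care is the first step of Part 3: justifying the existence of the multipliers $\nu_0,\nu_1$, i.e.\ that strong duality holds and the dual optimum is attained for this infinite-dimensional capacity problem over a continuous input alphabet. In the peak-power-constrained case the support is confined to the compact set $[0,A]$, which helps; the case $A=\infty$ needs more, and is exactly where the existence results underlying Theorem~\ref{thm:dual} enter. Once those multipliers are available, the only genuine idea left is the reparametrisation $y_0=e^{-\nu_0}$, $q=e^{-\nu_1}$, which is precisely what turns the affine bound of Theorem~\ref{thm:dual} into the clean threshold condition~\eqref{eq:condG}; everything else is bookkeeping.
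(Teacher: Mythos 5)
Your proposal matches the paper's proof essentially step for step: the same explicit formula for $g$ (and the same $q=1$ trick via Stirling) for Part~1, the same KL expansion and identification $\mathds{E}[g(Y_x)]=e^{-(\lambda+x)}G(\lambda+x)$ for Part~2, and in Part~3 the same invocation of Theorem~\ref{thm:dual} to produce multipliers followed by the reparametrisation $y_0=e^{-\nu_0}$, $q=e^{-\nu_1}$. Two small imprecisions worth flagging: the multiplier existence is established in Appendix~\ref{sec:proof1} (Theorem~\ref{thm:dualtech}), not Appendix~\ref{sec:existoptimal}; and since $\mu$ in the lemma is an input constraint, the output mean saturates at $\mu_X+\lambda$ rather than $\mu$ (the paper handles this by first setting $\mu_X=\mathds{E}[X]$ and viewing $X$ as optimal for that tighter constraint), though neither affects the core argument.
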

\begin{proof}
	We begin with the first point. Fix $y_0,q>0$, and consider $g$ defined as
	\[
		g(y)=\log y! + y - y\log q -\log y_0+\log Y(y).
	\]
	It is clear that
	\[
		Y(y)=y_0 \frac{\exp(g(y))(q/e)^y}{y!},
	\]
	for all $y\geq 0$. Moreover, we have $-\infty<\log Y(y)<0$. Thus, it follows that $g$ is defined and
	\[
		g(y)<\log y! + y - y\log q -\log y_0=y\log y + O(y),
	\]
	as desired. It remains to see that we can actually have $q\in (0,1]$ and $g(y)\leq y\log y+o(y)$ at the same time. This follows immediately from the observation that, if $q=1$, then
	\[
		g(y)=\log y! + y -\log y_0+\log Y(y)\leq y\log y +\frac{1}{2}\log y + O(1)=y\log y+o(y).
	\]
	
	
	For the second point, write $Y$ as in~\eqref{eq:specform}. Then, noting that $Y_x=\mathsf{Poi}(\lambda+x)$,
	\begin{align*}
		\KL(Y_x||Y)&=-H(Y_x)-\mathds{E}[\log Y(Y_x)]\\
		&= (\lambda+x)(\log(\lambda+x)-1)-\mathds{E}[\log Y_x!]-\mathds{E}[\log y_0 + g(Y_x)+Y_x\log q -Y_x-\log Y_x!]\\
		&= -\log y_0-\mathds{E}[Y_x]\log q +(\lambda+x)\log(\lambda+x)-\mathds{E}[g(Y_x)],
	\end{align*}
	with the convention that $0\log 0=0$. The result follows by observing that $\mathds{E}[g(Y_x)]=e^{-(\lambda+x)}G(\lambda+x)$.
	
	Regarding the third point, let $X$ be as in the theorem statement, and let $\mu_X=\mathds{E}[X]$. In particular, $X$ is capacity-achieving among all input distributions with support contained in $[0,A]$ and expected value at most $\mu_X$. Equivalently, $X$ is capacity-achieving among all input distributions with support contained in $[0,A]$ and output expected value at most $\mu_X+\lambda$. According to Theorem~\ref{thm:dual}, we know there exist $a\in\mathbb{R}^{\geq 0}$ and $b\in\mathbb{R}$ such that
	\begin{equation}\label{eq:KLineqopt}
		\KL(Y_x||Y)\leq a\mathds{E}[Y_x]+b
	\end{equation}
	for all $x\in [0,A]$, with equality if $x\in \mathsf{supp}(X)$.
	
	Choose $y_0=e^{-b}$ and $q=e^{-a}$. Then, there is $g$ satisfying $g(y)\leq y\log y+O(y)$ and such that~\eqref{eq:specform} holds for $Y$ with these choices of $y_0$ and $q$. According to~\eqref{eq:specKL}, we have
	\begin{equation}\label{eq:KLopt}
		\KL(Y_x||Y)=a\mathds{E}[Y_x]+b+(\lambda+x)\log (\lambda+x)-\mathds{E}[g(Y_x)].
	\end{equation}
	Note that $\mathds{E}[g(Y_x)]=e^{-(\lambda+x)}G(\lambda+x)$. Then, from~\eqref{eq:KLineqopt} and~\eqref{eq:KLopt} it follows that
	\[
		\mathds{E}[g(Y_x)]-(\lambda+x)\log (\lambda+x)=e^{-(\lambda+x)}G(\lambda+x)-(\lambda+x)\log (\lambda+x)\geq 0
	\]
	with equality for all $x\in \mathsf{supp}(X)$. This concludes the proof.
\end{proof}

We will also need the following concentration bound for the Poisson distribution, which is a consequence of Bennett's inequality (see~\cite{Can17}).
\begin{lem}\label{lem:concpoisson}
	For $0\leq\delta\leq 1$, we have
	\[
		\Pr[|\mathsf{Poi}(\lambda)-\lambda|\leq \delta\lambda]\geq 1-2\exp\left(-\frac{\delta^2\lambda}{4}\right).
	\]
\end{lem}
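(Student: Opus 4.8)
The plan is to obtain the two-sided bound from the standard one-sided Chernoff (equivalently Bennett-type) tail bounds for the Poisson distribution together with a union bound. First I would start from the moment generating function $\E[e^{t\,\mathsf{Poi}(\lambda)}]=e^{\lambda(e^t-1)}$, valid for all $t\in\R$, and optimize the Markov bounds $\Pr[\mathsf{Poi}(\lambda)\geq\lambda+x]\leq e^{-t(\lambda+x)}\E[e^{t\,\mathsf{Poi}(\lambda)}]$ over $t\geq 0$ and $\Pr[\mathsf{Poi}(\lambda)\leq\lambda-x]\leq e^{t(\lambda-x)}\E[e^{t\,\mathsf{Poi}(\lambda)}]$ over $t\leq 0$. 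For $0\leq x\leq\lambda$ this yields the classical estimates
\[
\Pr[\mathsf{Poi}(\lambda)\geq\lambda+x]\leq\exp(-\lambda\,h_+(x/\lambda)),\qquad \Pr[\mathsf{Poi}(\lambda)\leq\lambda-x]\leq\exp(-\lambda\,h_-(x/\lambda)),
\]
where $h_+(u)=(1+u)\log(1+u)-u$ and $h_-(u)=(1-u)\log(1-u)+u$ (with the convention $0\log 0=0$ applied at $u=1$ in the second estimate). These are exactly the quantities appearing in Bennett's inequality, so one may instead quote them directly from~\cite{Can17} rather than redoing the optimization.

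Next, take $x=\delta\lambda$ with $0\leq\delta\leq 1$, so that $x\leq\lambda$ and both one-sided bounds apply. It then remains to check the scalar inequalities $h_+(u)\geq u^2/4$ and $h_-(u)\geq u^2/4$ for $u\in[0,1]$. Both follow from a short second-derivative computation: for $\phi(u)=h_+(u)-u^2/4$ one has $\phi(0)=\phi'(0)=0$ and $\phi''(u)=\frac{1}{1+u}-\frac12\geq 0$ on $[0,1]$, hence $\phi\geq 0$; similarly $\psi(u)=h_-(u)-u^2/4$ satisfies $\psi(0)=\psi'(0)=0$ and $\psi''(u)=\frac{1}{1-u}-\frac12\geq 0$ on $[0,1)$, hence $\psi\geq 0$ there and, by continuity, on all of $[0,1]$. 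Consequently $\Pr[\mathsf{Poi}(\lambda)\geq\lambda+\delta\lambda]$ and $\Pr[\mathsf{Poi}(\lambda)\leq\lambda-\delta\lambda]$ are each at most $\exp(-\delta^2\lambda/4)$, and a union bound over these two events gives $\Pr[|\mathsf{Poi}(\lambda)-\lambda|>\delta\lambda]\leq 2\exp(-\delta^2\lambda/4)$, which is equivalent to the claimed inequality.

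There is no substantive obstacle here: the argument is a direct invocation of a well-known Poisson tail bound (or of Bennett's inequality, viewing $\mathsf{Poi}(\lambda)$ as a limit of sums of independent Bernoullis) followed by two one-line scalar inequalities and a union bound. The only points meriting a moment's care are the endpoint $\delta=1$ of the lower tail, where the event degenerates to $\{\mathsf{Poi}(\lambda)=0\}$ with probability $e^{-\lambda}\leq e^{-\lambda/4}$, consistently with $h_-(1)=1$, and the trivial case $\lambda=0$, in which the stated inequality holds vacuously.
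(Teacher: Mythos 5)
Your proof is correct and matches the approach the paper invokes: the lemma is stated in the paper without a proof, merely cited as a consequence of Bennett's inequality via \cite{Can17}, and your Chernoff/MGF derivation of the two Poisson tail bounds is precisely the Bennett-type estimate, followed by the elementary checks $h_{\pm}(u)\geq u^{2}/4$ on $[0,1]$ (both second-derivative computations are right) and a union bound. The endpoint remarks at $\delta=1$ and $\lambda=0$ are correct and appropriately handled.
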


For completeness, we now show that the support of a capacity-achieving input distribution for the DTP channel under an average-power constraint only must be unbounded. This result was originally proved in~\cite{CHC14a}. Our proof follows a similar technique to the proof in~\cite{Sha90} that the support of a capacity-achieving distribution for the DTP channel under a peak-power constraint $A<\infty$ is finite.
\begin{thm}\label{thm:unboundedsupp}
	Suppose $X$ is a capacity-achieving distribution for the DTP channel with dark current $\lambda$ under an average-power constraint $\mu$ and no peak-power constraint. Then, $\mathsf{supp}(X)$ is unbounded.
\end{thm}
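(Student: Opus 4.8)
The plan is to argue by contradiction. Suppose $\mathsf{supp}(X)$ is bounded, say $\mathsf{supp}(X)\subseteq[0,A]$ for some finite $A>0$; I will show this is incompatible with the first-order optimality conditions for $X$, because a Poisson mixture whose input support is bounded has unavoidably light tails. First I would apply Lemma~\ref{lem:specform}(3) in the $A=\infty$ regime --- which is legitimate precisely because $X$ is capacity-achieving under an average-power constraint with \emph{no} peak-power constraint --- to obtain constants $y_0,q>0$ and a function $g$ with $Y(y)=y_0\exp(g(y))(q/e)^y/y!$, such that, writing $G$ for the exponential generating function of $g$ as in~\eqref{eq:expgen},
\[
\E[g(Y_x)]=e^{-(\lambda+x)}G(\lambda+x)\ \ge\ (\lambda+x)\log(\lambda+x)\qquad\text{for every }x\ge0,
\]
where $Y_x=\mathsf{Poi}(\lambda+x)$ and the first equality is just $\E[g(Y_x)]=\sum_{y\ge0}g(y)e^{-(\lambda+x)}(\lambda+x)^y/y!$. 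Only the inequality is used (not the equality-on-$\mathsf{supp}(X)$ part), but it is essential that it holds on all of $[0,\infty)$ --- this is exactly what the absence of a peak-power constraint delivers through Theorem~\ref{thm:dual}.

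The heart of the proof is the observation that a bounded input support caps $g$ by a \emph{linear} function. Since $\mathsf{supp}(X)\subseteq[0,A]$,
\[
Y(y)=\int_0^A e^{-(\lambda+x)}\frac{(\lambda+x)^y}{y!}\,dX(x)\ \le\ \frac{(\lambda+A)^y}{y!}.
\]
Taking logarithms in the formula for $Y(y)$ and rearranging gives $g(y)=\log Y(y)+\log(y!)+y-\log y_0-y\log q$, and substituting the bound above (the $\log(y!)$ terms cancel, and $\log Y(y)\le0$ covers $y=0$) yields $g(y)\le Cy-\log y_0$ for all $y\ge0$, where $C:=1+\log(\lambda+A)-\log q$ is a constant. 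Hence $\E[g(Y_x)]\le C\,\E[Y_x]-\log y_0=C(\lambda+x)-\log y_0$.

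Combining the two displays, $(\lambda+x)\log(\lambda+x)\le C(\lambda+x)-\log y_0$ for every $x\ge0$. The left-hand side grows superlinearly in $x$ while the right-hand side grows only linearly, so the inequality is violated for all sufficiently large $x$ --- the contradiction that completes the proof. I expect the only genuinely substantive point to be this linear cap on $g$: unbounded input supports \emph{can} produce a $g$ with $y\log y$ growth (the digamma distribution is exactly such a case), and it is precisely this possibility that a finite peak rules out; everything else is bookkeeping, and in particular there is no need to split on whether $\mathsf{supp}(X)$ is finite or infinite. As a remark, one could bypass Lemma~\ref{lem:specform} and argue directly from $\KL(Y_x\|Y)\le\nu_0+\nu_1(\lambda+x)$ (Theorem~\ref{thm:dual}) by lower-bounding $-\E_{Y_x}[\log Y(Y)]$ via $Y(y)\le(\lambda+A)^y/y!$ together with the standard asymptotics $H(\mathsf{Poi}(m))=\tfrac12\log(2\pi e m)+O(1/m)$, reaching the same superlinear-versus-linear contradiction; note that the concentration estimate of Lemma~\ref{lem:concpoisson} is not needed for this particular theorem.
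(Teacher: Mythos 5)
Your proposal is correct and takes a genuinely cleaner route than the paper's proof, though both are built on the same duality foundation (Theorem~\ref{thm:dual}). The paper argues directly with $\KL(Y_x\|Y)=-H(Y_x)-\E_{Y_x}[\log Y(Y_x)]$: it lower-bounds $-\log Y(y)$ by $(1-o(1))\,y\log y$ from the bounded-support estimate, then invokes the Poisson concentration bound (Lemma~\ref{lem:concpoisson}) to turn that pointwise $y\log y$ growth into an $\Omega\big((\lambda+x)\log(\lambda+x)\big)$ lower bound on the expectation, and separately shows $H(Y_x)=O(\log(\lambda+x))$ via a geometric-distribution maximum-entropy argument. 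You instead route through Lemma~\ref{lem:specform}(3) in the $A=\infty$ regime, observe that $Y(y)\le(\lambda+A)^y/y!$ makes the $\log y!$ terms cancel in the formula $g(y)=\log Y(y)+\log y!+y-\log y_0-y\log q$, and so get a \emph{linear} cap $g(y)\le Cy-\log y_0$. From there $\E[g(Y_x)]\le C(\lambda+x)-\log y_0$ is just linearity of expectation, and the contradiction with $\E[g(Y_x)]\ge(\lambda+x)\log(\lambda+x)$ is immediate. The pay-off is that you avoid both the concentration lemma and the entropy estimate entirely; the change of variables to $g$ converts a $y\log y$ tail (which needs concentration to integrate against $Y_x$) into a linear one (which does not). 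Your closing remark about the direct $\KL$ route is also right, and in fact can be sharpened further: using the exact identity $\E[\log Y_x!]=H(Y_x)+m\log m-m$ for $Y_x=\mathsf{Poi}(m)$, the $H(Y_x)$ terms cancel and one does not even need the asymptotics of $H(\mathsf{Poi}(m))$ --- that cancellation is the same phenomenon as the $\log y!$ cancellation in your $g$-formulation.
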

\begin{proof}
	Fix $X$ as in the theorem statement, and let $Y$ be the corresponding output distribution. Furthermore, let $\mu_X=\mathds{E}[X]$. In particular, $X$ is capacity-achieving among all input distributions with expected value at most $\mu_X$, and all such input distributions are exactly those whose corresponding output distributions have expected value at most $\mu_X+\lambda$. Then, by Theorem~\ref{thm:dual} we know there exist $a,b\in\mathbb{R}$ such that
	\begin{equation}\label{eq:KLineqsupp}
		\KL(Y_x||Y)\leq a\mathds{E}[Y_x]+b
	\end{equation}
	for all $x\geq 0$.
	
	Suppose that $\mathsf{supp}(X)\subseteq [0,x_0]$ for some $x_0$. Let $F$ be the cumulative distribution function of $X$. Then, we have
	\begin{align*}
		Y(y)&=\int_{0}^{x_0}e^{-(\lambda+x)}\frac{(\lambda+x)^y}{y!}dF(x)\\
		&\leq \int_0^{x_0} e^{-\lambda}\frac{(\lambda+x_0)^y}{y!}dF(x)\\
		&=e^{-\lambda}\frac{(\lambda+x_0)^y}{y!}.
	\end{align*}
	It follows that
	\[
		-\log Y(y)\geq \log y! +\lambda-y\log(\lambda+x_0),
	\]
	and so we have
	\begin{equation}\label{eq:logy}
		-\log Y(y)\geq (1-o(1))y\log y
	\end{equation}
	 when $y\to\infty$. As a consequence,
	\begin{align}
		-\mathds{E}[\log Y(Y_x)]&=-\sum_{y=0}^\infty Y_x(y)\log Y(y)\nonumber\\
		&\geq \Pr[Y_x\geq (1-(\lambda+x)^{-1/3})(\lambda+x)](1-o(1))(\lambda+x-(\lambda+x)^{2/3})\log(\lambda+x-(\lambda+x)^{2/3})\nonumber\\
		&\geq (1-2\exp(-x^{1/3}/4))(1-o(1))(\lambda+x-(\lambda+x)^{2/3})\log(\lambda+x-(\lambda+x)^{2/3})\nonumber\\
		&\geq (1-o(1))(\lambda+x)\log(\lambda+x)\label{eq:logyopt}
	\end{align}
	when $x\to \infty$. The first inequality holds when $x\to\infty$ due to~\eqref{eq:logy}. The second inequality follows from Lemma~\ref{lem:concpoisson} with $\delta=(\lambda+x)^{-1/3}$ .
	
	On the other hand,
	\begin{equation}\label{eq:entopt}
		H(Y_x)=O(\log(\lambda +x))
	\end{equation}
	when $x\to\infty$. This holds since $H(Y_x)$ is upper bounded by the entropy of a geometric distribution with expected value $\lambda+x$, as it maximizes the entropy over all distributions over the non-negative integers with fixed expected value. Therefore, if we let $h$ denote the binary entropy function,
	\[
		H(Y_x)\leq (\lambda+x)h\left(\frac{1}{\lambda+x}\right)=O(\log(\lambda+x))
	\]
	when $x\to\infty$, as desired.
	
	From~\eqref{eq:logyopt} and~\eqref{eq:entopt} it follows that
	\[
		\KL(Y_x||Y)=-H(Y_x)-\mathds{E}[\log Y(Y_x)]=\Omega((\lambda+x)\log(\lambda+x)).
	\]
	However, if this holds there cannot be constants $a,b\in\mathbb{R}$ such that~\eqref{eq:KLineqsupp} holds, since $\mathds{E}[Y_x]=\lambda+x$. This is a contradiction, as we assumed that $Y$ was dual feasible.
\end{proof}

To conclude this section, we show that capacity-achieving input distributions for the DTP channel under an average-power constraint and/or a peak-power constraint must be discrete. We actually prove that the support of a capacity-achieving distribution $X$ under an average-power constraint and/or a peak-power constraint must have finite intersection with every bounded interval. In particular, our techniques also recover Shamai's result for the DTP channel under a peak-power constraint~\cite{Sha90} in an alternative way.
\begin{thm}\label{thm:discretesupp}
	Suppose $X$ is a capacity-achieving distribution for the DTP channel with dark current $\lambda$ under an average-power constraint $\mu>0$ and/or a peak-power constraint $A$ (we may have $A=\infty$). Then, $\mathsf{supp}(X)\cap I$ is finite for every bounded interval $I$. In particular, $\mathsf{supp}(X)$ is countably infinite when $A=\infty$ and finite when $A<\infty$.
\end{thm}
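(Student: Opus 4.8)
The plan is to argue by contradiction: suppose $\mathsf{supp}(X)\cap I$ is infinite for some bounded interval $I$. Then $I$ contains an accumulation point $x^*$ of $\mathsf{supp}(X)$, and in particular there is an infinite sequence of distinct points $x_1,x_2,\dots\in\mathsf{supp}(X)$ converging to some $x^*\geq 0$, all lying inside a bounded interval. By Lemma~\ref{lem:specform}, part 3, there exist constants $y_0,q>0$ and a function $g$ with $g(y)\leq y\log y+O(y)$ such that the associated output distribution $Y$ has the form \eqref{eq:specform}, and, setting $G(z)=\sum_{i\geq 0}\frac{g(i)}{i!}z^i$, the function
\[
\Phi(x):=e^{-(\lambda+x)}G(\lambda+x)-(\lambda+x)\log(\lambda+x)
\]
is non-negative on $[0,A]$ and vanishes on $\mathsf{supp}(X)$, hence at every $x_n$ and (by continuity) at $x^*$.

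The key step is to exploit analyticity. Since $g(y)\leq y\log y+O(y)$, the power series $G(z)=\sum_i \frac{g(i)}{i!}z^i$ has infinite radius of convergence: the coefficients $\frac{g(i)}{i!}$ decay faster than $\frac{i\log i + O(i)}{i!}$, so $G$ is entire. Consequently $e^{-(\lambda+x)}G(\lambda+x)$ is a real-analytic (indeed entire) function of $x$. The term $(\lambda+x)\log(\lambda+x)$ is real-analytic on $x>-\lambda$, in particular on a neighborhood of $[0,A]$ (or of the bounded interval containing the $x_n$ and $x^*$, which we may take inside $x>-\lambda$ since all points are $\geq 0$; if $x^*=0$ and $\lambda=0$ one must be slightly careful, but $x\log x\to 0$ and the relevant statements can be handled by working on $x>0$ and taking limits, or by noting $x^*>0$ whenever $0$ is isolated in $\mathsf{supp}(X)$, which it is by the Cao--Hranilovic--Chen result that $0$ has positive mass). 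Thus $\Phi$ is real-analytic on an open interval around $x^*$, and it has a zero of infinite order there in the sense that it vanishes on a sequence accumulating at $x^*$. By the identity theorem for real-analytic functions, $\Phi\equiv 0$ on that whole interval, and in fact, since $e^{-(\lambda+x)}G(\lambda+x)$ is entire and $(\lambda+x)\log(\lambda+x)$ extends analytically to $x>-\lambda$, the identity $\Phi\equiv 0$ propagates to all $x>-\lambda$: we get $e^{-(\lambda+x)}G(\lambda+x)=(\lambda+x)\log(\lambda+x)$ for every $x>-\lambda$, equivalently $\mathds{E}[g(Y_x)]=(\lambda+x)\log(\lambda+x)$ for all $x\geq 0$.

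The last step is to derive a contradiction from this exact identity. The point is that no entire function $G$ can satisfy $e^{-z}G(z)=z\log z$ on $z>0$: the left side is analytic at $z=0$, whereas $z\log z$ is not (its derivative blows up as $z\to 0^+$). More cleanly, one can invoke the discussion following \eqref{eq:specfunc}: the functional equation $\mathds{E}[g^*(Y_x)]=x\log x$ (here shifted by $\lambda$) has no solution $g^*$ with a convergent exponential generating function — precisely because any formal solution is a divergent integral, as noted in the paper right after \eqref{eq:choiceg}. Alternatively, and perhaps most self-containedly: differentiate $e^{-z}G(z)=z\log z$ enough times and evaluate at $z=0^+$; the left side has all derivatives finite at $0$, but the right side has a logarithmically divergent second derivative. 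Either way we reach a contradiction, so $\mathsf{supp}(X)\cap I$ is finite for every bounded interval $I$. Finiteness of $\mathsf{supp}(X)$ when $A<\infty$ then follows since $[0,A]$ is itself a bounded interval; and when $A=\infty$, $\mathsf{supp}(X)$ is unbounded by Theorem~\ref{thm:unboundedsupp} while meeting each bounded interval finitely, hence is countably infinite.

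The main obstacle I anticipate is making the analyticity argument fully rigorous at the boundary point, specifically handling the case $\lambda=0$ together with $0\in\mathsf{supp}(X)$ (so that $x\log x$ is being evaluated near its non-analytic point) and ensuring that "$\Phi$ vanishes on a sequence accumulating at $x^*$" is correctly converted into "$\Phi$ and all its derivatives vanish at an interior point," which is what the identity theorem needs; if $x^*$ is itself only a one-sided accumulation point at the left endpoint $0$, one should instead use that $\Phi$ vanishes on a sequence in $(0,\infty)$ and argue on the open interval $(0,\infty)$ where both pieces are analytic. Everything else — the entireness of $G$ from the growth bound on $g$, and the non-existence of an entire $G$ with $e^{-z}G(z)=z\log z$ — is routine once set up.
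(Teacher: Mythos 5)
Your overall strategy is the same as the paper's: extract the analytic exponential generating function $G$ from Lemma~\ref{lem:specform}, note that $G$ is entire because $g(y)\leq y\log y+O(y)$ forces the coefficients $g(i)/i!$ to decay fast enough, apply the identity theorem to conclude that $G$ would have to agree everywhere with the non-analytic function $z\mapsto z e^{z}\log z$, and then derive a contradiction from the singular behaviour of $z\log z$ (equivalently its derivative) at $z=0$. The concluding union argument for the ``in particular'' clause is also identical. So you have re-derived the right proof.

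The one genuine gap is exactly the one you flagged yourself --- the case $\lambda=0$ with the accumulation point of $\mathsf{supp}(X)\cap I$ sitting at $0$ --- but neither of the two fixes you sketch actually closes it. Your first suggestion, ``work on $(0,\infty)$ and take limits,'' does not help, because the identity theorem requires the accumulation point of the zero set to lie \emph{inside} the domain of analyticity; a real-analytic function on $(0,\infty)$ can vanish on a sequence tending to $0$ without vanishing identically (compare $\sin(1/x)$). Your second suggestion, that $0$ is isolated in $\mathsf{supp}(X)$ because it carries positive mass, is simply false: an atom at $0$ together with continuous mass on $(0,\varepsilon)$ gives positive mass at $0$ but $0$ is not an isolated point of the support. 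The paper closes this case by a direct asymptotic comparison that does not invoke analyticity at all: on the sequence $x_i\to 0$ with $x_i\in\mathsf{supp}(X)$ one would need $G(x_i)=x_i e^{x_i}\log x_i$; but $G(x_i)/x_i = g(0)/x_i + g(1) + o(1)$ is bounded below (using $G(0)=g(0)\geq 0$, which comes from the inequality~\eqref{eq:condG} at $x=0$), whereas $e^{x_i}\log x_i\to-\infty$. This contradiction rules out $0$ as a limit point, and only then does the identity-theorem argument take over on $(0,\infty)$. You should replace both of your patches with this comparison (or an equivalent one-sided argument that never relies on analyticity of $x\log x$ at the origin).
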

\begin{proof}
	The statement for an average-power constraint $\mu=0$ is trivial, so we assume $\mu>0$. Fix $X$ as in the theorem statement, and let $Y$ be the corresponding output distribution. Define $\mu_X=\mathds{E}[X]$. Then, $X$ is optimal over all distributions with support in $[0,A]$ and mean at most $\mu_X$ (regardless of whether there is an average-power constraint in place or not). As a result, Lemma~\ref{lem:specform} guarantees the existence of a function $g$ such that its exponential generating function $G$ satisfies
	\[
		G(\lambda+x)\geq (\lambda+x)e^{\lambda+x}\log(\lambda+x),\quad \forall x\in [0,A]
	\]
	with equality for $x\in\mathsf{supp}(X)$. Under a change of variables, this is equivalent to
	\[
		G(x)\geq xe^{x}\log x=:f(x),\quad \forall x\in [\lambda, A+\lambda],
	\]
	with equality for $x\in S=\mathsf{supp}(X)+\lambda$.
	
	Suppose there exists a bounded interval $I$ such that $\mathsf{supp}(X)\cap I$ is infinite. As a result, we have that $S'=S\cap (I+\lambda)$ is also infinite. 
	
	Since $Y$ is an output distribution of the DTP channel and $\mathds{E}[X]>0$ necessarily (otherwise $I(X;Y)=0$ and $X$ is not capacity-achieving), we have that $Y$ has full support. Combining this with the fact that $-\log Y(y)=O(y\log y)$ when $y\to\infty$ for any output distribution $Y$ of the DTP channel, it follows that $\KL(Y_x||Y)$ is finite for every $x\geq 0$. Recalling~\eqref{eq:KLopt}, we have that $G(z)$ is finite for every $z\geq \lambda$, and hence for every $z\in\mathbb{R}$. Therefore, since $G$ is a power series, it follows that $G$ is real analytic in $(-\infty,\infty)$. Moreover, we have that $f$ is real analytic in $(0,\infty)$. 
	
	Since $G$ and $f$ are both real analytic in $(0,\infty)$ and agree on an infinite set $S'$ in this interval, it follows that $G(x)=f(x)$ for all $x\in(0,\infty)$ provided that $S'$ has a limit point in $(0,\infty)$ (via the identity theorem for real analytic functions~\cite[Corollary 1.2.6]{KP02}).
	
	Assume that indeed $S'$ has a limit point in $(0,\infty)$. Then, it follows that $G(x)=f(x)$ for all $x\in(0,\infty)$. We show that this leads to a contradiction. In fact, note that, according to \eqref{eq:expgen}, it follows that $G$ is real analytic with finite $i$-th derivative $g(i)$ at $x=0$. On the other hand, the first right-derivative of $f$ at $x=0$ is infinite. This means that we cannot have $G(x)=f(x)$ for $0<x< \infty$. As a result, we conclude that $\mathsf{supp}(X)\cap I$ must be finite, as desired.
	
	We now prove that $S'$ must have a limit point in $(0,\infty)$. Suppose that $S'$ has no limit points in $(0,\infty)$. Then, since $S'$ is a bounded infinite set, it must be the case that $0$ is a limit point of $S'$ (bounded infinite sets have at least one limit point). We show that $0$ cannot be a limit point of $S'$. If $\lambda>0$ this is trivially true since $S'\subseteq I+\lambda$ and so its limit points are at least as large as $\lambda$. We therefore assume $\lambda=0$.
	
	Suppose that $0$ is a limit point of $S'$. Then, there exists a sequence $(x_i)$ such that $x_i\in S'$ and $x_i\neq 0$ for all $i$, and $x_i\to 0$. In particular, we have $G(x_i)=f(x_i)$ for all $i$. We prove that this cannot hold. Observe that
	\[
		\lim_{i\to\infty} f(x_i)/x_i =\lim_{i\to\infty} e^{x_i}\log x_i=-\infty.
	\]
	On the other hand, recalling~\eqref{eq:expgen},
	\[
		G(x_i)/x_i=g(0)/x_i + g(1)+o(1),
	\]
	when $i\to\infty$ (and hence $x_i\to 0$). Recalling~\eqref{eq:condG} with $x=0$, we must have $G(0)=g(0)\geq 0$. As a result, $G(x_i)/x_i$ is bounded from below by a constant for $i$ large enough, and so it must be the case that $G(x_i)\neq f(x_i)$ for $i$ large enough. Therefore, $0$ cannot be a limit point of $S'$.
	
	The proof concludes by noting that
	\[
		\mathsf{supp}(X)=\bigcup_{i=0}^{A-1}(\mathsf{supp}(X)\cap [i,i+1]).
	\]
	If $A$ is finite, then so is $\mathsf{supp}(X)$. On the other hand, if $A=\infty$, then $\mathsf{supp}(X)$ is countable, and thus countably infinite by invoking Theorem~\ref{thm:unboundedsupp}.
	
	
%
\end{proof}

\section*{Acknowledgments}

The authors would like to thank Shlomo Shamai for asking them whether the capacity-achieving input distribution for the DTP channel under an average-power constraint must be discrete. This led them to the result of Section~\ref{sec:shape} that answers the question in the affirmative.

\bibliographystyle{IEEEtran}
\bibliography{poisson}

\begin{appendices}

\section{Existence of capacity-achieving distributions for the DTP channel under an average-power constraint}\label{sec:existoptimal}
In this section, we argue that capacity-achieving distributions exist for the DTP channel under an average-power constraint. For simplicity, we will assume that $\lambda=0$. Every result holds for arbitrary $\lambda\geq 0$ and under an additional peak-power constraint as well. Our approach follows that of~\cite[Appendix I]{AFTS01} closely.

We need to deal with the weak-* topology on the set $\mathcal{F}$ of probability distributions on $\mathbb{R}$. We do not define the associated concepts of weak-* compactness and continuity, but instead refer the reader to~\cite[Section 5.10]{Lue97} and the introduction of~\cite[Appendix I]{AFTS01} for the relevant background. We focus only on the parts where our approach necessarily differs from that of~\cite[Appendix I]{AFTS01}.

Let
\[
	\Omega_\mu=\left\{F\in \mathcal{F}: F(0^-)=0,\int_0^\infty x dF(x)\leq \mu\right\}.
\]
In other words, $\Omega_\mu$ is the set of probability distributions with support in $\mathbb{R}^{\geq 0}$ and bounded expected value. We also define
\[
\Omega^=_\mu=\left\{F\in \mathcal{F}: F(0^-)=0,\int_0^\infty x dF(x)= \mu\right\}.
\]

Given some distribution $F\in\Omega_\mu$, we denote by $I(F)$ the functional which maps $F$ to the mutual information $I(X_F;Y_F)$, where $X_F$, distributed according to $F$, is the input to the DTP channel and $Y_F$ is the corresponding output distribution. Then, we can write
\begin{equation}\label{eq:capsup}
	C(\mu)=\sup_{F\in\Omega_\mu}I(F).
\end{equation}

We begin by showing that capacity-achieving distributions exist for every $\mu\geq 0$. In other words, the supremum in~\eqref{eq:capsup} is actually a maximum. In order to see this, we employ the following general lemma.
\begin{lem}[\protect{\cite[Theorem 1]{AFTS01}}]\label{lem:max}
	If $J:\Omega\to\mathbb{R}$ is weak-* continuous on a weak-* compact set $\Omega\subseteq X$, where $X$ is a linear vector space, then $J$ achieves its maximum in $\Omega$.
\end{lem}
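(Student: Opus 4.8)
The plan is to deduce this from the elementary fact that a continuous real-valued function on a compact space attains its supremum, transported to the weak-* setting; the linear structure of $X$ plays no role beyond providing a home for the weak-* topology, so the argument is purely topological.

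First I would observe that, since $J$ is weak-* continuous and $\Omega$ is weak-* compact, the image $J(\Omega)$ is a compact subset of $\mathbb{R}$: given any open cover of $J(\Omega)$, pulling it back through $J$ produces an open cover of $\Omega$, which admits a finite subcover by compactness, and the images of those finitely many sets cover $J(\Omega)$. By the Heine--Borel theorem, $J(\Omega)$ is then closed and bounded; in particular $s:=\sup_{F\in\Omega}J(F)$ is finite and, being the supremum of a closed bounded subset of $\mathbb{R}$, belongs to $J(\Omega)$. Hence there is $F^\star\in\Omega$ with $J(F^\star)=s\geq J(F)$ for every $F\in\Omega$, which is the claim.

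An equivalent route, closer to how the lemma is applied in this paper, goes through maximizing sequences. The set $\mathcal{F}$ of Borel probability measures on the separable metric space $\mathbb{R}$, equipped with the weak-* topology, is metrizable (for instance via the L\'evy--Prokhorov metric), so weak-* compactness of $\Omega$ coincides with weak-* sequential compactness. Choosing $(F_n)\subseteq\Omega$ with $J(F_n)\to s:=\sup_{F\in\Omega}J(F)$, extracting a weak-* convergent subsequence $F_{n_k}\to F^\star\in\Omega$, and invoking weak-* continuity of $J$ gives $J(F^\star)=\lim_k J(F_{n_k})=s$; since $J$ is assumed real-valued on all of $\Omega$, this simultaneously certifies $s<\infty$ and exhibits $F^\star$ as a maximizer.

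There is essentially no obstacle here, as the statement is a topological generality. The only points deserving a line of care are that the hypothesis $J:\Omega\to\mathbb{R}$ (as opposed to $[-\infty,+\infty]$-valued) is precisely what rules out $s=+\infty$, and, if one prefers the sequential formulation, that the weak-* topology on probability measures over $\mathbb{R}$ is genuinely metrizable so that compactness and sequential compactness agree. Both facts are classical; see, e.g., \cite{Lue97} for the functional-analytic background on the weak-* topology. In the application of this lemma one of course still has to verify separately that the candidate $\Omega_\mu$ is weak-* compact and that $I(\cdot)$ is weak-* continuous on it, but that verification is external to the statement proved here.
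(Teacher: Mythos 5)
The paper cites this result as Theorem~1 of \cite{AFTS01} and does not reproduce a proof, so there is no in-text argument to compare against. Your first paragraph is the standard and correct topological argument: a continuous image of a compact set is compact, hence closed and bounded in $\mathbb{R}$ by Heine--Borel, so the supremum lies in $J(\Omega)$ and is attained. Your alternative sequential route is also sound, but it invokes metrizability of the weak-* topology on probability measures over $\mathbb{R}$, which is extraneous here: the lemma as stated concerns an arbitrary weak-* compact $\Omega$ in a linear vector space, and the first, purely topological argument already covers that generality without any metrizability hypothesis. Prefer the first argument; it is both shorter and more general, and it is the one the cited reference uses in spirit.
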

In our case, $J$ is the mutual information $I(\cdot)$ and $\Omega=\Omega_\mu$. It is easy to see that $\Omega_\mu$ is convex. One can then follow the same reasoning as in~\cite[Appendix I.A]{AFTS01} to show that $\Omega_\mu$ is weak-* compact. It remains to show that $I(\cdot)$ is weak-* continuous.
\begin{lem}
	$I(\cdot)$ is weak-* continuous in $\Omega_\mu$.
\end{lem}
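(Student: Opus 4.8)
**Proof proposal for the weak-* continuity of $I(\cdot)$ on $\Omega_\mu$.**

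The plan is to decompose the mutual information into the output differential/discrete entropy term and a conditional entropy term, and argue weak-* continuity of each piece separately. Writing $I(F) = H(Y_F) - H(Y_F \mid X_F)$, where $Y_F$ ranges over $\mathbb{N}$ and $H(Y_F)$ denotes its (discrete) Shannon entropy, we have $H(Y_F \mid X_F) = \int_0^\infty h_{\mathsf{Poi}}(x)\,dF(x)$, where $h_{\mathsf{Poi}}(x)$ is the entropy of $\mathsf{Poi}(x)$. For a sequence $F_n \to F$ weak-* in $\Omega_\mu$, I would first treat the conditional entropy term: $h_{\mathsf{Poi}}(x)$ is continuous in $x$ and grows only like $\Theta(\log x)$ (it is dominated by the geometric-distribution entropy $x\, h(1/x) = O(\log x)$, exactly as in the proof of Theorem~\ref{thm:unboundedsupp}), so $h_{\mathsf{Poi}}(x) \le c(1 + \log(1+x))$ for some constant $c$. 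Since the maps $F \mapsto \int_0^\infty x\, dF(x)$ and $F\mapsto \int \log(1+x)\,dF(x)$ are controlled on $\Omega_\mu$ (the first is bounded by $\mu$, and by concavity of $\log$ one gets a uniform bound via Jensen), a standard uniform-integrability argument shows $\int h_{\mathsf{Poi}}\,dF_n \to \int h_{\mathsf{Poi}}\,dF$.

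For the output-entropy term $H(Y_F)$, the plan is to show $F \mapsto Y_F(y)$ is weak-* continuous for each fixed $y$ (immediate, since $x\mapsto e^{-x}x^y/y!$ is bounded and continuous on $[0,\infty)$), and then upgrade pointwise convergence of the pmfs to convergence of the entropy. The function $t \mapsto -t\log t$ is bounded on $[0,1]$, so the only issue is the tail $\sum_{y \ge N} -Y_F(y)\log Y_F(y)$; I would bound this tail uniformly over $F \in \Omega_\mu$ using again the geometric-entropy majorization together with the uniform bound on $\E[Y_F] = \E[X_F] \le \mu$, i.e. the entropy of $Y_F$ restricted to $\{y \ge N\}$ is controlled by a quantity that tends to $0$ as $N\to\infty$ uniformly in $F\in\Omega_\mu$. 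Combining the uniform tail bound with termwise convergence on the finite initial segment $\{0,1,\dots,N-1\}$ gives $H(Y_{F_n}) \to H(Y_F)$.

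The main obstacle I expect is the uniform tail control — making precise that both $H(Y_F)$ on high indices and $\int_{\{x \text{ large}\}} h_{\mathsf{Poi}}(x)\,dF(x)$ vanish uniformly over all of $\Omega_\mu$, not just along the particular sequence. The average-power constraint $\E[X_F]\le\mu$ is the key resource here: it forces a uniform bound on the mass that any $F\in\Omega_\mu$ can place at large inputs (Markov's inequality), and hence on the mass $Y_F$ places at large outputs, and the logarithmic growth of the Poisson entropy is slow enough that this suffices. Once this uniform integrability is in hand, the continuity of $I(\cdot)$ on $\Omega_\mu$ follows, and Lemma~\ref{lem:max} applies with $\Omega = \Omega_\mu$ to conclude that a capacity-achieving distribution exists. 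I would also remark that the argument shows the maximizer may be taken in $\Omega^=_\mu$, since $I(\cdot)$ is nondecreasing under increasing the input mean (a Poisson input of larger mean can be thinned to recover any smaller mean, so the capacity is monotone in $\mu$ and the constraint is active at the optimum).
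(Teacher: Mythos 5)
Your decomposition $I(F)=H(Y_F)-\int_0^\infty H(Y_x)\,dF(x)$ and the treatment of the conditional-entropy term (continuity in $x$, growth $O(\log(1+x))$, uniform integrability via the moment constraint) are essentially the paper's. The genuine divergence is in the output-entropy piece. The paper derives a \emph{pointwise} uniform estimate $Y_F(y)=O(y^{-3/2})$, obtained by splitting $Y_F(y)=\int Y_x(y)\,dF(x)$ into the three ranges $x<y-y^{0.99}$, $|x-y|\le y^{0.99}$, $x>y+y^{0.99}$ and applying $1-F(x)\le\mu/x$ on the middle one, then concludes $|Y_F(y)\log Y_F(y)|=O(y^{-1.05})$ and invokes dominated convergence. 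You instead propose a direct uniform bound on the \emph{tail entropy} $\sum_{y\ge N}-Y_F(y)\log Y_F(y)$: writing $p_N=\Pr[Y_F\ge N]\le\mu/N$ (Markov, using $\E[Y_F]=\E[X_F]\le\mu$ as the appendix sets $\lambda=0$), one has $\sum_{y\ge N}-Y_F(y)\log Y_F(y)=-p_N\log p_N+p_N H(Y_F\mid Y_F\ge N)$, and bounding the conditional entropy by the geometric maximizer with mean at most $\mu/p_N$ gives $H(Y_F\mid Y_F\ge N)\le\log(\mu/p_N+1)+1$, hence a total of $O((\log N)/N)$ uniformly in $F\in\Omega_\mu$. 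This is a softer route to the same uniform-integrability conclusion and avoids the delicate three-range pmf estimate entirely; you correctly flagged the uniform tail control as the main obstacle, and your mechanism does carry through, though the conditional-entropy computation above is worth writing out rather than gesturing at. One small caveat: your closing remark that ``thinning'' shows the maximizer lies in $\Omega^=_\mu$ is not quite an argument — mere monotonicity of $C(\mu)$ does not force the constraint to bind. The paper instead uses concavity of $C(\cdot)$ together with $C(\mu)\to\infty$: if $\E[X_{F^\star}]<\mu$ then $C$ would be constant on a nondegenerate interval, and a concave function constant on such an interval cannot thereafter be unbounded.
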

\begin{proof}
	For $F\in\Omega_\mu$, we have
	\[
		I(F)=H(Y_F)-\int_0^\infty H(Y_x)dF(x),
	\]
	where $Y_F$ is the output distribution induced by $F$. We show that the two terms in the right hand side are weak-* continuous.
	
	The proof that $\int_0^\infty H(Y_x)dF(x)$ is a weak-* continuous function of $F\in\Omega_\mu$ follows in the same way as the analogous result in~\cite[Appendix I.B]{AFTS01}. This is because, for the DTP channel, $H(Y_x)$ is continuous in $x$ for all $x\geq 0$, $H(Y_x)=O(\log(1+x))$, and we have the constraint $\int_0^\infty xdF(x)\leq \mu$ for all $F\in\Omega_\mu$.
	
	It remains to show that $H(Y_F)=-\sum_{y=0}^\infty Y_F(y)\log Y_F(y)$ is a weak-* continuous function of $F\in\Omega_\mu$. Fix a sequence of distributions $(F_n)$ that converges weakly to some $F$, denoted by $F_n\xrightarrow{w} F$. In order to show that a functional $f$ is weak-* continuous, in this case it suffices to show that $f(F_n)\to f(F)$ in the usual Euclidean metric in $\mathbb{R}$ as $n\to\infty$.
	
	We have
	\begin{align}
		\lim_{n\to \infty} H(Y_{F_n})&=-\lim_{n\to \infty} \sum_{y=0}^\infty Y_{F_n}(y)\log Y_{F_n}(y)\label{eq:hy1}\\
		&=-\sum_{y=0}^\infty \lim_{n\to \infty} Y_{F_n}(y)\log Y_{F_n}(y)\label{eq:hy2}\\
		&=-\sum_{y=0}^\infty Y_F(y)\log Y_F(y)\label{eq:hy3}\\
		&=H(Y_F).\label{eq:hy4}
	\end{align}
	We justify all of the steps above. Observe that~\eqref{eq:hy1} and~\eqref{eq:hy4} follow by definition. To show~\eqref{eq:hy3}, note that, for fixed $y$, the function $x\mapsto Y_x(y)$ is a bounded, continuous function of $x$. Therefore, by the properties of the weak-* topology, it follows that
	\[
		Y_F(y)=\int_0^\infty Y_x(y)dF(x)
	\]
	is a continuous function of $F$ for each $y$. Since $x\mapsto x\log x$ is continuous for $x\geq 0$ as well,~\eqref{eq:hy3} holds. 
	
	It remains to prove~\eqref{eq:hy2}. It suffices to show that we are in a condition to apply the dominated convergence theorem. More specifically, we need to prove that
	\[
		|Y_F(y)\log Y_F(y)|\leq g(y)
	\]
	for all $F\in\Omega_\mu$ and $y\in\mathbb{N}$, where $g$ satisfies $\sum_{y=0}^\infty g(y)<\infty$. Fix $F\in\Omega_\mu$, and note that
	\begin{align}
		Y_F(y)&=\int_0^\infty e^{-x}\frac{x^y}{y!}dF(x)\nonumber\\
		&=\int_0^{y-y^{0.99}} e^{-x}\frac{x^y}{y!}dF(x)+\int_{y-y^{0.99}}^{y+y^{0.99}} e^{-x}\frac{x^y}{y!}dF(x)+\int_{y+y^{0.99}}^\infty e^{-x}\frac{x^y}{y!}dF(x).\label{eq:yfsep}
	\end{align}
	We analyze the three terms. First, since $x\mapsto Y_x(y)$ is increasing for $x<y$ and decreasing for $x>y$, we have
	\begin{align}
		&\int_0^{y-y^{0.99}} e^{-x}\frac{x^y}{y!}dF(x)\leq Y_{y-y^{0.99}}(y),\label{eq:lower1}\\
		&\int_{y+y^{0.99}}^\infty e^{-x}\frac{x^y}{y!}dF(x)\leq Y_{y+y^{0.99}}(y)\label{eq:upper3},
	\end{align}
	and both $Y_{y-y^{0.99}}(y)$ and $Y_{y+y^{0.99}}(y)$ converge to $0$ faster than $y^{-3/2}$ when $y\to\infty$.
	
	For fixed $y$, it can be seen that $Y_x(y)$ is maximized when $x=y$. Furthermore, we have $Y_y(y)=O(1/\sqrt{y})$. Since $F\in\Omega_\mu$, we have $1-F(x)\leq \mu/x$, and so
	\begin{equation}\label{eq:middle2}
		\int_{y-y^{0.99}}^{y+y^{0.99}} e^{-x}\frac{x^y}{y!}dF(x)\leq \frac{\mu Y_y(y)}{y-y^{0.99}}=O(y^{-3/2}).
	\end{equation}
	Combining~\eqref{eq:yfsep} with~\eqref{eq:lower1},~\eqref{eq:upper3}, and~\eqref{eq:middle2} yields
	\begin{equation}\label{eq:asympyf}
		Y_F(y)=O(y^{-3/2})
	\end{equation}
	 when $y\to \infty$ for all $F\in\Omega_\mu$, where the hidden constant is independent of $F$. To conclude, observe that, due to~\eqref{eq:asympyf}, for every $\eps$ there is a constant $y_\eps$ (possibly depending on $\mu$) such that $Y_F(y)\leq \eps$ for all $y\geq y_\eps$ and $F\in\Omega_\mu$. Therefore,
	\begin{align*}
		|Y_F(y)\log Y_F(y)|=O(Y_F(y)^{0.7})=O(y^{-1.05})
	\end{align*}
	for all $F\in\Omega_\mu$, where we used~\eqref{eq:asympyf} in the last equality, the hidden constant is independent of $F$. Consequently,~\eqref{eq:hy2} follows by noting that $\sum_{y=0}^\infty y^{-1.05}<\infty$. This shows that $H(Y_F)$ is weak-* continuous, and hence $I(F)$ is weak-* continuous too, as desired.
\end{proof}

Finally, Lemma~\ref{lem:max} implies that for every $\mu\geq 0$ there exists $F^\star\in\Omega_\mu$ such that
\[
	C(\mu)=I(F^\star).
\]
We can show more: If $F^\star\in\Omega_\mu$ is capacity-achieving, then $F^\star\in\Omega^=_\mu$ necessarily. In fact, suppose not, and let $\mu'=\mathds{E}_{F_0}[X]$. We have $\mu'<\mu$ by hypothesis. Then, it is clear that $C(\mu'')=I(F^\star)$ for all $\mu''\in[\mu',\mu]$. It is easy to see that $C(\mu)$ is concave in $\mu$. As a result, we have $C(\mu'')=I(F^\star)$ for all $\mu''\geq \mu'$. However, it can be shown that $C(\mu)$ is unbounded when $\mu\to \infty$. This is a contradiction, and so $F^\star\in\Omega^=_\mu$ necessarily.

\section{Proof of Theorem~\ref{thm:dual}}\label{sec:proof1}
In this section, we prove Theorem~\ref{thm:dual} for ``well-behaved" channels. The technical meaning of ``well-behaved" will be made clear later on.

Since our input alphabet is continuous, we have to deal with input distributions that do not have associated probability density/mass functions. In fact, the input distribution may be a mixture of discrete and continuous distributions. Because of this, we are forced to work solely with cumulative distribution functions, which we may call just ``distributions". Our output alphabet is discrete, and so we may identify distributions with the corresponding probability mass functions. Overall, this leads to a more technical proof, although the methods used are still standard. Our approach mimics in part those of~\cite{Smi71,Sha90,AFTS01}. Additionally, we present proofs of standard results whose proofs we could not find in the literature. 

We note that if we deal with discrete inputs only, then the proof of the analogous result in this case is shorter~\cite{Che17}, but leads to the exact same conclusions.

Before we proceed with the proof of Theorem~\ref{thm:dualtech}, we need some auxiliary definitions and results. Given a functional $f\colon \Omega\to\mathbb{R}$, where $\Omega$ is a convex subset of a linear vector space, the \emph{weak derivative of $f$ at $F\in\Omega$ in the direction of $Q\in\Omega$}, denoted by $f'_F(Q)$, is defined as
\[
	f'_F(Q)=\lim_{\theta\to 0^+}\frac{f((1-\theta)F+\theta Q)-f(F)}{\theta}.
\]
The functional $f$ is said to be \emph{weakly differentiable in $\Omega$ at $F$} if $f'_F(Q)$ exists for all $Q\in\Omega$. If $f$ is weakly differentiable in $\Omega$ at $F$ for all $F\in\Omega$, then we simply say $f$ is \emph{weakly differentiable in $\Omega$}. We have the following result.
\begin{lem}\label{lem:weakder}
	Fix a concave $f\colon\Omega\to\mathbb{R}$ in a convex space $\Omega$, and suppose that $f$ achieves a maximum in $\Omega$. If $F^\star\in\Omega$ is a maximizer of $f$ in $\Omega$ and $f'_{F^\star}(Q)$ exists, then
	\[
		f'_{F^\star}(Q)\leq 0.
	\]
	Moreover, if $f$ is weakly differentiable in $\Omega$ at $F$ and $f'_{F}(Q)\leq 0$ for all $Q\in\Omega$, then $F$ maximizes $f$ in $\Omega$.
\end{lem}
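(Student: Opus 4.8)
The plan is a direct application of the definitions of concavity and of the weak derivative; no heavy machinery is needed. For the first assertion, fix $Q\in\Omega$ and recall that $F^\star$ maximizes $f$. Since $\Omega$ is convex, the point $F_\theta:=(1-\theta)F^\star+\theta Q$ lies in $\Omega$ for every $\theta\in(0,1]$, and hence $f(F_\theta)\le f(F^\star)$. Therefore the difference quotient
\[
\frac{f((1-\theta)F^\star+\theta Q)-f(F^\star)}{\theta}\le 0
\]
for all $\theta\in(0,1]$. Passing to the limit $\theta\to 0^+$, which exists by hypothesis, yields $f'_{F^\star}(Q)\le 0$.

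For the second assertion, suppose $f$ is weakly differentiable at $F$ and $f'_F(Q)\le 0$ for all $Q\in\Omega$. Fix $Q\in\Omega$; I want to show $f(Q)\le f(F)$. By concavity of $f$, for every $\theta\in(0,1]$ we have
\[
f((1-\theta)F+\theta Q)\ge (1-\theta)f(F)+\theta f(Q),
\]
so that, rearranging and dividing by $\theta>0$,
\[
\frac{f((1-\theta)F+\theta Q)-f(F)}{\theta}\ge f(Q)-f(F).
\]
Letting $\theta\to 0^+$ gives $f'_F(Q)\ge f(Q)-f(F)$. Combining with the assumption $f'_F(Q)\le 0$ we conclude $f(Q)-f(F)\le 0$, i.e.\ $f(Q)\le f(F)$. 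As $Q\in\Omega$ was arbitrary, $F$ maximizes $f$ in $\Omega$.

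There is no genuine obstacle here: the argument is the standard first-order characterization of maximizers of concave functionals, and the only points needing care are bookkeeping ones — namely that convexity of $\Omega$ guarantees each convex combination $(1-\theta)F+\theta Q$ is a legitimate element of the domain, so that all the difference quotients above are defined, and that for the first part the limit defining $f'_{F^\star}(Q)$ is assumed to exist (whereas for the second part weak differentiability at $F$ supplies it). I would keep the write-up to essentially the two displays above.
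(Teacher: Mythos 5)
Your proof is correct and follows essentially the same route as the paper: both derive the key inequality $f'_F(Q)\geq f(Q)-f(F)$ from concavity and combine it with the first-order condition. The only cosmetic difference is that the paper phrases the second part as a contrapositive (assume $F$ is not a maximizer and derive $f'_F(Q)>0$), whereas you argue directly; the content is identical.
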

\begin{proof}
	Fix $f$ satisfying the conditions of the lemma statement, and let $F^\star\in\Omega$ be a maximizer of $f$ over $\Omega$. Therefore,
	\[
		\frac{f((1-\theta)F^\star+\theta Q)-f(F^\star)}{\theta}\leq 0
	\]
	for every $\theta\in(0,1]$, since $(1-\theta)F^\star+\theta Q\in\Omega$ by the convexity of $\Omega$ and $f(F^\star)\geq f(F)$ for every $F\in\Omega$ by hypothesis. As a result, if $f'_{F^\star}(Q)$ exists, then we must have $f'_{F^\star}(Q)\leq 0$.
	
	For the second statement, suppose that $F$ is not a maximizer. Then, there exists $Q\in\Omega$ such that $f(Q)>f(F)$. For every $\theta\in (0,1]$, we have
	\[
		\frac{f((1-\theta)F+\theta Q)-f(F)}{\theta}\geq \frac{(1-\theta)f(F)+\theta f(Q)-f(F)}{\theta}=f(Q)-f(F)>0,
	\]
	where the first inequality follows from the concavity of $f$. Since this result holds for every $\theta\in(0,1]$, we conclude that $f'_F(Q)\geq f(Q)-f(F)>0$. Therefore, if $f'_F(Q)\leq 0$ for all $Q$, then $F$ must be a maximizer of $f$ in $\Omega$.
\end{proof}

The following lemma states a generalized form of Lagrange duality. Informally, this result transforms a constrained convex optimization problem (such as determining the capacity of a channel under some average-power constraint) into an unconstrained optimization problem. This is accomplished by moving the constraint into the objective function to be optimized. The necessary and sufficient conditions for optimality of a candidate solution to the unconstrained problem have a more useful form, as we shall see later in this section.
\begin{lem}[{\cite[Section 8.6, Theorem 1, specialized]{Lue97}}]
\label{lem:lagdual}
	Let $f\colon\Omega\to\mathbb{R}$ be convex, where $\Omega$ is a convex subset of a vector space $X$, and let $G:\Omega\to\mathbb{R}$ be a convex map. Suppose there exists an $x\in\Omega$ such that $G(x)<0$, and that $\inf\{f(x):G(x)\leq 0,x\in\Omega\}$ is finite. Then,
	\[
		\inf\{f(x):G(x)\leq 0,x\in\Omega\}=\max\{\varphi(z):z\geq 0\},
	\]
	where $\varphi(z)=\inf\{f(x)+zG(x):x\in\Omega\}$, and the maximum on the right hand side is achieved by some $z^\star$.
	
	Moreover, if the infimum on the left hand side is achieved by some $x^\star$, then
	\[
		z^\star G(x^\star)=0,
	\]
	and $x^\star$ minimizes $f(x)+z^\star G(x)$ over $\Omega$.
\end{lem}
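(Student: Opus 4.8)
The plan is to prove this by the classical separating-hyperplane argument for Lagrangian strong duality, specialized to the case where the single constraint is scalar-valued (so the dual multiplier ranges over $\R^{\geq 0}$). Write $\mu^\star := \inf\{f(x) : G(x)\leq 0,\ x\in\Omega\}$, which is finite by hypothesis. The \emph{weak duality} inequality is immediate: for any $z\geq 0$ and any feasible $x$ (i.e.\ $G(x)\leq 0$) we have $f(x)+zG(x)\leq f(x)$, hence $\varphi(z)=\inf_{x\in\Omega}\{f(x)+zG(x)\}\leq f(x)$; taking the infimum over feasible $x$ gives $\varphi(z)\leq \mu^\star$, and therefore $\sup_{z\geq 0}\varphi(z)\leq\mu^\star$. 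The content of the lemma is producing a specific $z^\star\geq 0$ attaining equality.

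For that, I would work in $\R^2$ with the set
\[
	A = \{(r,s)\in\R^2 : \exists\, x\in\Omega \text{ with } G(x)\leq r \text{ and } f(x)\leq s\}.
\]
Using convexity of $f$, $G$, and $\Omega$, one checks directly that $A$ is convex; moreover $A$ is upward closed in both coordinates, so it contains a translate of the positive quadrant and hence has nonempty interior. I claim $(0,\mu^\star)\notin\operatorname{int}(A)$: otherwise $(0,\mu^\star-\eps)\in A$ for some $\eps>0$, which would supply a feasible $x$ with $f(x)\leq \mu^\star-\eps$, contradicting the definition of $\mu^\star$. By the supporting-hyperplane theorem for convex sets in $\R^2$, there is $(\alpha,\beta)\neq(0,0)$ with $\alpha r+\beta s\geq \beta\mu^\star$ for all $(r,s)\in A$; letting $r\to\infty$ and then $s\to\infty$ (permitted by upward-closedness) forces $\alpha\geq 0$ and $\beta\geq 0$.

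Next I would rule out $\beta=0$ using the Slater-type hypothesis. If $\beta=0$ then $\alpha>0$ and the inequality reads $r\geq 0$ for all $(r,s)\in A$; but the hypothesized $\bar x\in\Omega$ with $G(\bar x)<0$ gives $(G(\bar x),f(\bar x))\in A$ with negative first coordinate, a contradiction. So $\beta>0$; normalizing $\beta=1$ and setting $z^\star:=\alpha\geq 0$, the inequality becomes $z^\star r+s\geq\mu^\star$ for all $(r,s)\in A$. Evaluating at $(r,s)=(G(x),f(x))$ for an arbitrary $x\in\Omega$ yields $f(x)+z^\star G(x)\geq\mu^\star$, so $\varphi(z^\star)\geq\mu^\star$; together with weak duality this gives $\varphi(z^\star)=\mu^\star$, proving the first assertion. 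For the ``moreover'' part, suppose the left-hand infimum is attained at $x^\star$, so $G(x^\star)\leq 0$ and $f(x^\star)=\mu^\star$. Then
\[
	\mu^\star=\varphi(z^\star)\leq f(x^\star)+z^\star G(x^\star)=\mu^\star+z^\star G(x^\star),
\]
whence $z^\star G(x^\star)\geq 0$; since $z^\star\geq 0$ and $G(x^\star)\leq 0$ we also have $z^\star G(x^\star)\leq 0$, so $z^\star G(x^\star)=0$. Substituting back, $f(x^\star)+z^\star G(x^\star)=\mu^\star=\varphi(z^\star)=\inf_{x\in\Omega}\{f(x)+z^\star G(x)\}$, i.e.\ $x^\star$ minimizes $f(\cdot)+z^\star G(\cdot)$ over $\Omega$.

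The one genuinely delicate step is the geometric one: verifying that $A$ is convex with nonempty interior and that $(0,\mu^\star)$ is not an interior point, so that the supporting-hyperplane theorem applies cleanly; and then using the interior-point (Slater) hypothesis in exactly the right place to force the ``vertical'' component $\beta$ of the separating functional to be strictly positive (this is what fails in the absence of a strictly feasible point). Everything else — weak duality and the complementary-slackness computation — is routine bookkeeping. Since this is a textbook result, in the final write-up we would simply invoke \cite{Lue97}; the sketch above is how the cited statement is reconstructed.
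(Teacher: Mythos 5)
Your argument is correct, and it is essentially the same route as the source the paper relies on: the paper gives no proof of this lemma, citing Luenberger's Theorem~1 of Section~8.6, whose proof is exactly the supporting-hyperplane/Slater argument you reconstruct, here specialized to a single scalar constraint (Luenberger works with the convex set of achievable constraint--value pairs in $\mathbb{R}\times Z$, which for $Z=\mathbb{R}$ is your set $A\subseteq\mathbb{R}^2$). The delicate points you flag (convexity and nonempty interior of $A$, excluding $(0,\mu^\star)$ from the interior, and using the strictly feasible point to rule out a vertical supporting hyperplane) are handled correctly, and the weak-duality and complementary-slackness steps are fine, so invoking \cite{Lue97} as the paper does, or including your sketch, are both legitimate.
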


Given some channel $\Ch$ with input alphabet $\mathcal{X}$ and output alphabet $\mathcal{Y}$, we can define the associated mutual information functional $I(\cdot)$. Suppose that the output distribution of $\Ch$ given input $x\in\mathcal{X}$ has an associated probability density function $Y_x(\cdot)$. Given a distribution $F$ on $\mathcal{X}$, we define $I(F)=I(X_F;Y_F)$, where $X_F$ is an input distribution to $\Ch$ distributed according to $F$ and $Y_F$ is the corresponding output distribution satisfying
\[
	Y_F(y)=\int_\mathcal{X}Y_x(y)dF(x),\quad\forall y\in\mathcal{Y}.
\]

The following result characterizes the weak derivative of $I(\cdot)$, conditioned on a certain quantity being finite. This characterization, combined with Lemma~\ref{lem:weakder}, is the key to determining the conditions under which an input distribution is capacity-achieving.
\begin{lem}\label{lem:weakderexp}
	Let $I(\cdot)$ denote the mutual information functional of some channel $\Ch$ with input alphabet $\mathcal{X}\subseteq \mathbb{R}^{\geq 0}$ and output alphabet $\mathcal{Y}\subseteq\mathbb{N}$. Fix an input distribution $F$ on $\mathcal{X}$ such that $I(F)<\infty$ with corresponding output distribution $Y_F$. Suppose that
	\[
		\int_0^\infty \KL(Y_x||Y_F)dQ(x)<\infty
	\]
	for a distribution $Q$ on $\mathcal{X}$. Then, $I'_F(Q)$ exists and is given by
	\begin{equation}\label{eq:weakderexp}
	I'_F(Q)=\int_0^\infty \KL(Y_x||Y_F)dQ(x) - I(F).
	\end{equation}
\end{lem}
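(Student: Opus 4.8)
The plan is to pass through the standard ``golden formula'' for mutual information and then reduce everything to a single limit-interchange estimate. Write $F_\theta := (1-\theta)F + \theta Q$ for $\theta\in(0,1]$, so that the induced output distribution is $Y_{F_\theta} = (1-\theta)Y_F + \theta Y_Q$ with $Y_Q(y) := \int_0^\infty Y_x(y)\,dQ(x)$, and set $J := \int_0^\infty \KL(Y_x\|Y_F)\,dQ(x)<\infty$ (the hypothesis). By convexity of the divergence in its first argument, $\KL(Y_Q\|Y_F)\le J$ and $\KL(Y_{F_\theta}\|Y_F)\le \theta\,\KL(Y_Q\|Y_F)\le \theta J$; these finiteness facts are exactly what makes the following identity legitimate. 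For any input distribution $G$ one has
\[
	I(G) \;=\; \int_0^\infty \KL(Y_x\|Y_F)\,dG(x) \;-\; \KL(Y_G\|Y_F),
\]
which for $G=F$ is just the standard identity $I(F)=\int_0^\infty\KL(Y_x\|Y_F)\,dF(x)$, and for $G=F_\theta$ follows from a Fubini/Tonelli splitting of the relevant double sum according to the sign of $\log(Y_{F_\theta}(y)/Y_F(y))$, using the crude bounds $\log(Y_F(y)/Y_{F_\theta}(y))\le\log\frac{1}{1-\theta}$ and $Y_{F_\theta}(y)\le Y_Q(y)$ to control the two pieces. Since $G\mapsto\int_0^\infty\KL(Y_x\|Y_F)\,dG(x)$ is affine, applying this with $G=F_\theta$ gives $I(F_\theta)=(1-\theta)I(F)+\theta J-\KL(Y_{F_\theta}\|Y_F)$, hence
\[
	\frac{I(F_\theta)-I(F)}{\theta} \;=\; J - I(F) - \frac{1}{\theta}\,\KL(Y_{F_\theta}\|Y_F).
\]
So the lemma reduces to showing $\tfrac{1}{\theta}\KL(Y_{F_\theta}\|Y_F)\to 0$ as $\theta\to0^+$ (which simultaneously establishes that the weak derivative exists).

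For this step I would first reduce to the case where $Y_F$ has full support, restricting all sums to $\{y:Y_F(y)>0\}$ otherwise; finiteness of $J$ forces $Y_x(y)=0$ for $Q$-a.e.\ $x$, and hence $Y_Q(y)=0$, on the complement, so nothing is lost. Set $\phi(u):=(1+u)\log(1+u)$ and $d(y):=Y_Q(y)/Y_F(y)-1\in[-1,\infty)$, so that $\sum_y Y_F(y)\,d(y)=0$ and $\KL(Y_{F_\theta}\|Y_F)=\sum_y Y_F(y)\,\phi(\theta\,d(y))$. Term by term, $\phi(\theta\,d(y))/\theta\to d(y)$ as $\theta\to0^+$, and the target sum $\sum_y Y_F(y)\,d(y)$ is $0$, so what remains is to justify interchanging the limit with the sum, which I would do by dominated convergence on two pieces. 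On $\{y:d(y)\ge0\}$, monotonicity of $u\mapsto\phi(u)/u$ on $(0,\infty)$ gives $\phi(\theta\,d(y))/\theta\le\phi(d(y))$ for $\theta\le1$, so the summand is bounded by $Y_Q(y)\log(Y_Q(y)/Y_F(y))\ge0$; this is summable because $\KL(Y_Q\|Y_F)<\infty$ and the terms of that divergence over $\{d(y)<0\}$ are bounded below by $Y_Q(y)-Y_F(y)\ge-Y_F(y)$, so the positive part of the divergence is at most $\KL(Y_Q\|Y_F)+1$. On $\{y:d(y)<0\}$, the elementary inequalities $v\le\phi(v)\le0$ for $v\in(-1,0]$ (the left one being $t\log t\ge t-1$) give $|\phi(\theta\,d(y))/\theta|\le|d(y)|\le1$, so the summand is bounded by the summable $Y_F(y)$. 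Dominated convergence on each piece yields $\tfrac1\theta\KL(Y_{F_\theta}\|Y_F)\to\sum_y Y_F(y)\,d(y)=0$, and therefore $I'_F(Q)=J-I(F)=\int_0^\infty\KL(Y_x\|Y_F)\,dQ(x)-I(F)$, as claimed.

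The main obstacle is precisely this last interchange: $d(y)$ is unbounded, so the naive second-order bound $\phi(u)\le u+u^2$ does not supply a summable dominating function — it would require $\chi^2(Y_Q\|Y_F)<\infty$, which is not assumed — and one must instead exploit the monotonicity/convexity of $\phi$ to dominate by a $\KL$-type quantity and then cash in the hypothesis $J<\infty$ through its consequence $\KL(Y_Q\|Y_F)<\infty$. Everything else (the golden formula for $F_\theta$, the affine dependence on $G$, and the reduction to full support) is routine bookkeeping with the crude bounds noted above.
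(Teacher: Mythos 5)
Your proof is correct but takes a genuinely different route from the paper's. The paper applies the ``golden formula'' with the \emph{mixed} output distribution $Y_\theta = (1-\theta)Y_F + \theta Y_Q$, arriving at a two-term decomposition
\[
\frac{I(F_\theta)-I(F)}{\theta} = \int_0^\infty \KL(Y_x\|Y_\theta)\,dQ(x) - I(F) + \frac{1-\theta}{\theta}\left(\int_0^\infty \KL(Y_x\|Y_\theta)\,dF(x)-\int_0^\infty \KL(Y_x\|Y_F)\,dF(x)\right),
\]
where the first integral is sent to $\int\KL(Y_x\|Y_F)\,dQ$ by dominated convergence via $\KL(Y_x\|Y_\theta)\le\KL(Y_x\|Y_F)-\log(1-\theta)$, and the trailing term is killed by a mean-value-theorem plus dominated-convergence argument. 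You instead keep the \emph{fixed} reference $Y_F$ throughout, exploit the affine dependence of $G\mapsto\int\KL(Y_x\|Y_F)\,dG(x)$, and collapse the difference quotient to $J-I(F)-\frac{1}{\theta}\KL(Y_{F_\theta}\|Y_F)$, reducing everything to the single limit $\frac{1}{\theta}\KL(Y_{F_\theta}\|Y_F)\to0$. Your handling of that limit---$\phi(u)=(1+u)\log(1+u)$, $d(y)=Y_Q(y)/Y_F(y)-1$, split on the sign of $d$, dominate by $Y_Q\log(Y_Q/Y_F)$ on the nonnegative side (summable since $\KL(Y_Q\|Y_F)\le J<\infty$ by Jensen and the negative part of the KL sum is bounded by $1$) and by $Y_F$ on the negative side---is correct, and you are right that a naive quadratic bound on $\phi$ would implicitly require $\chi^2(Y_Q\|Y_F)<\infty$. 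What you gain is a single, conceptually transparent dominated-convergence estimate; what the paper's version avoids is re-establishing the golden formula for the perturbed distribution $F_\theta$. One small slip to fix: the parenthetical bound $Y_{F_\theta}(y)\le Y_Q(y)$ in your justification of the golden formula at $G=F_\theta$ is false in general (if $Y_F(y)>Y_Q(y)$ then $Y_{F_\theta}(y)>Y_Q(y)$ for small $\theta$); the Tonelli split still goes through because the negative part of $\sum_y Y_x(y)\log(Y_{F_\theta}(y)/Y_F(y))$ is bounded by $\log\frac{1}{1-\theta}$ while the finiteness of $\int\KL(Y_x\|Y_F)\,dF_\theta=(1-\theta)I(F)+\theta J$ controls the positive part, but that particular inequality should be removed.
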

\begin{proof}
	Let $F_\theta=(1-\theta)F+\theta Q$ for $\theta\in[0,1]$. Denote the output distribution associated to $F_\theta$ by $Y_\theta=(1-\theta)Y_0+\theta Y_1$, where $Y_0$ and $Y_1$ denote the output distributions of $F$ and $Q$, respectively. We have
	\begin{align}
		\frac{I(F_\theta)-I(F)}{\theta}&=\int_0^\infty \KL(Y_x||Y_\theta)dQ(x)-I(F)\nonumber\\&+\frac{1-\theta}{\theta}\left(\int_0^\infty \KL(Y_x||Y_\theta)dF(x)-\int_0^\infty\KL(Y_x||Y_0)dF(x)\right).\label{eq:expandweakder}
	\end{align}
	We deal with the limit of each term on the right hand side of~\eqref{eq:expandweakder} separately. First, we show that
	\begin{equation}\label{eq:limterm1}
		\lim_{\theta\to 0^+}\int_0^\infty \KL(Y_x||Y_\theta)dQ(x)=\int_0^\infty \KL(Y_x||Y_0)dQ(x).
	\end{equation}
	Observe that
	\begin{align}
		\KL(Y_x||Y_\theta)&=-H(Y_x)-\sum_{y=0}^\infty Y_x(y)\log((1-\theta)Y_0(y)+\theta Y_1(y))\nonumber\\
		&\leq -H(Y_x)-\sum_{y=0}^\infty Y_x(y)\log((1-\theta)Y_0(y))\nonumber\\
		&=\KL(Y_x||Y_0)-\log(1-\theta)\label{eq:ineqKL1}
	\end{align}
	for all $\theta\in [0,1)$. Since $\int_0^\infty \KL(Y_x||Y_0)dQ(x)<\infty$ by hypothesis, we conclude from~\eqref{eq:ineqKL1} that
	\[
		\int_0^\infty \KL(Y_x||Y_\theta)dQ(x)<\infty
	\]
	for all $\theta\in [0,1)$. Hence, by Fubini's theorem,
	\begin{align}
		\int_0^\infty \KL(Y_x||Y_\theta)dQ(x)&=-\int_0^\infty \left(H(Y_x)+\sum_{y=0}^\infty Y_x(y)\log Y_\theta(y)\right)dQ(x)\nonumber\\
		&=-H(Y|X_Q)-\sum_{y=0}^\infty \int_0^\infty Y_x(y)\log Y_\theta(y) dQ(x)\nonumber\\
		&=-H(Y|X_Q)-\sum_{y=0}^\infty Y_1(y)\log Y_\theta(y).\label{eq:term1}
	\end{align}
	Therefore,
	\begin{equation}\label{eq:simp1}
		\lim_{\theta\to 0^+}\int_0^\infty \KL(Y_x||Y_\theta)dQ(x)=-H(Y|X_Q)-\lim_{\theta\to 0^+}\sum_{y=0}^\infty Y_1(y)\log Y_\theta(y).
	\end{equation}
	We now show that we can swap the limit and infinite sum on the right hand side of~\eqref{eq:simp1}. Observe that
	\[
		-\log Y_\theta(y)\leq -\log Y_0(y)-\log(1-\theta)\leq -\log Y_0(y)+2,
	\]
	provided that $\theta$ is small enough. Since $-\sum_{y=0}^\infty Y_1(y)\log Y_0(y)<\infty$ by hypothesis (recall we assume $\int_0^\infty \KL(Y_x||Y_0)dQ(x)<\infty$), it follows by the dominated convergence theorem that
	\begin{equation}
		-\lim_{\theta\to 0^+}\sum_{y=0}^\infty Y_1(y)\log Y_\theta(y)=-\sum_{y=0}^\infty Y_1(y)\lim_{\theta\to 0^+}\log Y_\theta(y)=-\sum_{y=0}^\infty Y_1(y)\log Y_0(y). \label{eq:domconv}
	\end{equation}
	Combining~\eqref{eq:domconv} with~\eqref{eq:simp1} yields~\eqref{eq:limterm1}, as desired.
	
	We now show that
	\begin{equation}\label{eq:rightder}
	\lim_{\theta\to 0^+}\frac{1-\theta}{\theta}\int_0^\infty (\KL(Y_x||Y_\theta)-\KL(Y_x||Y_0))dF(x)=0.
	\end{equation}
	The limit on the left hand side of~\eqref{eq:rightder} equals the right derivative of $\int_0^\infty \KL(Y_x||Y_\theta)dF(x)$ with respect to $\theta$ at $\theta=0$. We show that this limit is zero by a reasoning similar to the proof of Leibniz's integral rule. First, from~\eqref{eq:ineqKL1} it follows that
	\[
	\int_0^\infty \KL(Y_x||Y_\theta)dF(x)=-H(Y|X_F)+\int_0^\infty\sum_{y=0}^\infty Y_x(y)\log\left(\frac{1}{Y_\theta(y)}\right)dF(x)\leq I(F)-\log(1-\theta)<\infty
	\]
	for $\theta\in[0,1)$. As a result, by Fubini's theorem we have
	\begin{equation}\label{eq:simp3}
	\int_0^\infty\sum_{y=0}^\infty Y_x(y)\log\left(\frac{1}{Y_\theta(y)}\right)dF(x)=\sum_{y=0}^\infty\int_0^\infty Y_x(y)\log\left(\frac{1}{Y_\theta(y)}\right)dF(x)=\sum_{y=0}^\infty Y_0(y)\log\left(\frac{1}{Y_\theta(y)}\right)
	\end{equation}
	for $\theta\in[0,1)$, with the convention that $0\log x=0$ for all $x\geq 0$.
	
	Let $h(\theta,y)=\log\left(\frac{1}{Y_\theta(y)}\right)$. Since $Y_\theta(y)=(1-\theta)Y_0(y)+\theta Y_1(y)$, we have
	\[
	\frac{\partial h}{\partial\theta} (\theta,y)=\frac{Y_0(y)-Y_1(y)}{(1-\theta)Y_0(y)+\theta Y_1(y)}
	\]
	for all $\theta\in (0,1)$ and $y$. In particular, $h(\cdot,y)$ is, say, continuous in $[0,1/2]$ and differentiable in $(0,1/2)$ for every $y\in\mathsf{supp}(Y_0)$. Moreover, we have the general bound
	\begin{equation}\label{eq:derbound}
	\left|\frac{\partial h}{\partial\theta} (\theta,y)\right|\leq \frac{1}{1-\theta}\left|1-\frac{Y_1(y)}{Y_0(y)}\right|\leq 2\left(1+\frac{Y_1(y)}{Y_0(y)}\right)=:g(y)
	\end{equation}
	for all $y\in\mathsf{supp}(Y_0)$, provided that $\theta<1/2$. Observe that
	\[
	\sum_{y=0}^\infty Y_0(y)g(y)=2+2\sum_{y=0}^\infty Y_1(y)=4,
	\]
	and so $g$ is integrable with respect to $Y_0$.
	
	Via~\eqref{eq:simp3}, we can write the left hand side of~\eqref{eq:rightder} as
	\[
	\lim_{\theta\to 0^+}\sum_{y=0}^\infty Y_0(y)\frac{h(\theta,y)-h(0,y)}{\theta}.
	\]
	Since $h(\cdot,y)$ is continuous in $[0,1/2]$ and differentiable in $(0,1/2)$ for every $y\in\mathsf{supp}(Y_0)$, then, by the mean value theorem, for every $\theta\in(0,1/2]$ and $y\in\mathsf{supp}(Y_0)$ there exists some $z\in (0,\theta)$ such that
	\[
	\frac{h(\theta,y)-h(0,y)}{\theta}=\frac{\partial h}{\partial\theta} (z,y).
	\]
	Taking into account~\eqref{eq:derbound}, it follows that
	\[
	\left|\frac{h(\theta,y)-h(0,y)}{\theta}\right|\leq g(y)
	\]
	for all $y\in\mathsf{supp}(Y_0)$ and $\theta\in(0,1/2]$. Therefore, by the dominated convergence theorem we can conclude that
	\[
	\lim_{\theta\to 0^+}\sum_{y=0}^\infty Y_0(y)\frac{h(\theta,y)-h(0,y)}{\theta}=\sum_{y=0}^\infty Y_0(y)\lim_{\theta\to 0^+}\frac{h(\theta,y)-h(0,y)}{\theta}=\sum_{y=0}^\infty Y_0(y)\left(1-\frac{Y_1(y)}{Y_0(y)}\right)=1-1=0,
	\]
	which shows that~\eqref{eq:rightder} holds.
	
	Finally, combining~\eqref{eq:expandweakder},~\eqref{eq:limterm1}, and~\eqref{eq:rightder} yields the desired result.
\end{proof}

The following is a generalization to continuous alphabets of a well-known convex duality result for discrete memoryless channels~\cite[Chapter 2, Theorem 3.4]{CK11}. We use it to derive our general capacity upper bound in an easy way.
\begin{lem}[\protect{\cite[Theorem 5.1, specialized]{LM03}}]\label{lem:lapi}
	Fix a channel $\Ch$ with input alphabet $\mathcal{X}\subseteq \mathbb{R}^{\geq 0}$ and output alphabet $\mathcal{Y}\subseteq\mathbb{N}$. Suppose that for every set $S\subseteq\mathbb{N}$ the map $x\mapsto Y_x(S)=\sum_{y\in S} Y_x(y)$ is Borel-measurable. Let $F$ be any distribution on $\mathcal{X}$, and $Y$ any distribution on $\mathcal{Y}$. Then,
	\[
	I(F)\leq \int_0^\infty \KL(Y_x||Y)dF(x).
	\]
\end{lem}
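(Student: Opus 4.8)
The plan is to use the classical additive decomposition of relative entropy (sometimes called the ``golden formula''): whenever $P\ll R\ll Q$ one has $\KL(P\|Q)=\KL(P\|R)+\sum_y P(y)\log\frac{R(y)}{Q(y)}$, which I would apply with $P=Y_x$, $Q=Y$, and reference $R=Y_F$. First I would dispose of the trivial case: if $\int_0^\infty\KL(Y_x\|Y)\,dF(x)=+\infty$ the claimed inequality holds vacuously, so assume this integral is finite. The measurability hypothesis, applied with $S=\{y\}$, makes $x\mapsto Y_x(y)$ Borel for every $y\in\mathcal{Y}$, so the mixture $Y_F(y)=\int_0^\infty Y_x(y)\,dF(x)$ is a well-defined law on $\mathcal{Y}$ and $x\mapsto\KL(Y_x\|Y)$ is measurable; hence every object in the statement is well-posed.

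Next I would establish a priori finiteness of the correction term. By joint convexity of $(\mu,\nu)\mapsto\KL(\mu\|\nu)$ and Jensen's inequality, and since $Y_F$ is the $F$-mixture of the laws $Y_x$, we have $\KL(Y_F\|Y)\le\int_0^\infty\KL(Y_x\|Y)\,dF(x)<\infty$; in particular $Y_F\ll Y$. Writing $f(y)=\log(Y_F(y)/Y(y))$ with the usual conventions on zeros, the elementary bound $\log t\le t-1$ gives $\sum_y Y_F(y)\max(-f(y),0)\le\sum_y Y(y)=1$, so combined with $\KL(Y_F\|Y)<\infty$ we obtain $f\in L^1(Y_F)$ and $\E_{Y_F}[f]=\KL(Y_F\|Y)$. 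Moreover $Y_x\ll Y_F$ for $F$-a.e.\ $x$, and Tonelli's theorem gives $\int_0^\infty\E_{Y_x}[|f|]\,dF(x)=\E_{Y_F}[|f|]<\infty$, so $f\in L^1(Y_x)$ for $F$-a.e.\ $x$ as well.

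With these integrability facts in hand, for $F$-a.e.\ $x$ the series for $\KL(Y_x\|Y)$ splits term by term as $\KL(Y_x\|Y_F)+\sum_y Y_x(y)\log\frac{Y_F(y)}{Y(y)}=\KL(Y_x\|Y_F)+\E_{Y_x}[f]$; this is legitimate because the negative part of any relative-entropy series is bounded (by $1/e$) and the cross term converges absolutely. Integrating against $F$ and invoking the standard identity $I(F)=\int_0^\infty\KL(Y_x\|Y_F)\,dF(x)$ (mutual information as the $F$-average of the divergence from the channel law to the induced output law), together with Fubini to get $\int_0^\infty\E_{Y_x}[f]\,dF(x)=\E_{Y_F}[f]=\KL(Y_F\|Y)$, I would arrive at $\int_0^\infty\KL(Y_x\|Y)\,dF(x)=I(F)+\KL(Y_F\|Y)\ge I(F)$, the last step being non-negativity of relative entropy (Gibbs' inequality). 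This mirrors the discrete-input argument behind~\cite[Chapter 2, Theorem 3.4]{CK11}, with measure-theoretic bookkeeping added on the input side.

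The step I expect to be the main obstacle is precisely that bookkeeping: the correction term $\sum_y Y_x(y)\log\frac{Y_F(y)}{Y(y)}$ is built from the log-likelihood ratio $f$, which has no definite sign, so both the term-by-term splitting of the $\KL$ series and the exchange of the $x$-integral with the $y$-sum must be justified rather than taken for granted. The device that makes all of this routine is proving $\KL(Y_F\|Y)<\infty$ up front from convexity; this forces $f\in L^1(Y_F)$, hence $f\in L^1(Y_x)$ for $F$-a.e.\ $x$, which then licenses every interchange. The auxiliary identity $I(F)=\int\KL(Y_x\|Y_F)\,dF(x)$ is standard and, if one prefers, follows from the very same decomposition with $Y$ replaced by $Y_F$.
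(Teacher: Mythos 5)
Your proof is correct, and it is essentially the argument that the cited source (Lapidoth--Moser 2003, Theorem 5.1) gives; the paper itself imports the lemma by citation and does not reprove it, so there is no in-paper proof to diverge from. The decomposition $\int_0^\infty \KL(Y_x\|Y)\,dF(x) = I(F) + \KL(Y_F\|Y) \ge I(F)$ is exactly the standard ``duality'' identity underlying this bound, and your handling of the measure-theoretic details (measurability via $S=\{y\}$, a priori finiteness of $\KL(Y_F\|Y)$ by convexity, $L^1(Y_F)$ integrability of the log-likelihood ratio $f$ to license both the term-by-term split and the Fubini swap, and the observation that $Y_x\ll Y_F\ll Y$ for $F$-a.e.\ $x$) is appropriate and accurate for the continuous-input, countable-output setting in which the lemma is stated.
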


Finally, we define some sets which will be relevant in the proof.
\begin{align*}
\OYmu&=\left\{F\in\mathcal{F}:\mathsf{supp}(F)\subseteq [0,A],\mathds{E}[Y_F]=\int_0^\infty\mathds{E}[Y_x]dF(x)\leq \mu\right\},\\
\OYeq&=\left\{F\in\mathcal{F}:\mathsf{supp}(F)\subseteq [0,A],\mathds{E}[Y_F]=\int_0^\infty\mathds{E}[Y_x]dF(x)= \mu\right\},\\
\OYfin&= \left\{F\in\mathcal{F}:\mathsf{supp}(F)\subseteq [0,A],\mathds{E}[Y_F]=\int_0^\infty\mathds{E}[Y_x]dF(x)< \infty\right\}.
\end{align*}
Note that all of these sets are convex subsets of the set $\mathcal{F}$ of all distributions on $\mathbb{R}$. We are now ready to prove the following theorem.
\begin{thm}[Theorem~\ref{thm:dual}, generalized]\label{thm:dualtech}
	Let ${\sf Ch}$ be a channel with input alphabet $\mathcal{X}=[0,A]$ (where we may set $A=\infty$) and output alphabet $\mathbb{N}$. Furthermore, let $\Ch_\mu$ denote $\Ch$ under an output average-power constraint $\mu$. Suppose that, for every $S\subseteq \mathbb{N}$, the map $x\mapsto Y_x(S)=\sum_{y\in S} Y_x(y)$ is Borel-measurable. Then, we have the following:
	\begin{enumerate}
		\item Assume that there exist a random variable $Y$, supported on $\mathcal{Y}$, and parameters $\nu_0\in\R$ and $\nu_1\in\mathbb{R}^{\geq 0}$ such that
		\begin{equation*}
		D_{{\sf KL}}(Y_x\| Y)\leq \nu_0+\nu_1\E[Y_x]
		\end{equation*}
		for every $x\in\mathcal{X}$. Then, we have
		\[
		C({\sf Ch}_\mu)\leq \nu_0+\nu_1\mu
		\]
		for every $\mu$ such that $\OYmu\neq\emptyset$.
		
		\item 
		Suppose that $\KL(Y_x||Y_F)$ exists for all $x\in\mathcal{X}$ and all output distributions $Y_F$ associated to input distributions $F$ satisfying $\mathds{E}[X_F]>0$, that the map $x\mapsto \KL(Y_x||Y)$ is continuous in $x$, that for each $\mu>0$ there is $F$ such that $\mathds{E}[Y_F]<\mu$, and that $x\mapsto\mathds{E}[Y_x]$ is continuous in $x$. Then, if $F^\star\in\OYmu$ with $I(F^\star)<\infty$ is capacity-achieving for $\Ch_\mu$ we must have
		\begin{equation}\label{eq:KLline}
			\KL(Y_x||Y_{F^\star})\leq \nu_0+\nu_1\mathds{E}[Y_x],\quad\forall x\in\mathcal{X}
		\end{equation}
		for some $\nu_0\in\mathbb{R}$ and $\nu_1\in\mathbb{R}^{\geq 0}$, with equality for $x\in\mathsf{supp}(F^\star)$. Moreover, if $F\in\OYeq$ satisfies~\eqref{eq:KLline} for some $\nu_0\in\mathbb{R}$ and $\nu_1\in\mathbb{R}^{\geq 0}$ with equality for $x\in\mathsf{supp}(F)$, then $F$ is capacity-achieving for $\Ch_\mu$ and the capacity in this case is exactly
		\[
		C(\Ch_\mu)=\nu_0+\nu_1\mu.
		\]
		
		
	\end{enumerate}
\end{thm}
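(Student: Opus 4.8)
Proof proposal. The first assertion is essentially Lemma~\ref{lem:lapi}. Fix the promised $Y$, $\nu_0$, $\nu_1$; then for every $F\in\OYmu$ we have $I(F)\le\int_0^\infty\KL(Y_x\|Y)\,dF(x)\le\int_0^\infty(\nu_0+\nu_1\E[Y_x])\,dF(x)=\nu_0+\nu_1\E[Y_F]\le\nu_0+\nu_1\mu$, where the last step uses $\nu_1\ge 0$ and $F\in\OYmu$; taking the supremum over $F\in\OYmu$ gives $C(\Ch_\mu)\le\nu_0+\nu_1\mu$. So the real work is the second part, which the plan is to handle by Lagrange-dualizing the capacity problem and then reading off the optimality conditions with the weak-derivative machinery of Lemmas~\ref{lem:weakder} and~\ref{lem:weakderexp}.

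For the forward implication, I would write $C(\Ch_\mu)=\sup\{I(F):F\in\OYfin,\ G(F)\le 0\}$ with $G(F):=\E[Y_F]-\mu$. On the convex set $\OYfin$, the map $F\mapsto -I(F)$ is convex (mutual information is concave in the input distribution) and $G$ is affine; the hypothesis that some $F$ has $\E[Y_F]<\mu$ furnishes a strictly feasible point; and $-C(\Ch_\mu)=-I(F^\star)$ is finite since $F^\star$ is capacity-achieving and $I(F^\star)<\infty$. Lemma~\ref{lem:lagdual} then produces a multiplier $\nu_1:=\nu_1^\star\ge 0$ with $\nu_1(\E[Y_{F^\star}]-\mu)=0$ such that $F^\star$ maximizes $J(F):=I(F)-\nu_1\E[Y_F]$ over $\OYfin$. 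As $J$ is concave, Lemma~\ref{lem:weakder} gives $J'_{F^\star}(Q)\le 0$ for every direction $Q$ at which the weak derivative exists; I would take $Q=\delta_x$, the point mass at $x\in\mathcal{X}$ (which lies in $\OYfin$ since $\E[Y_x]<\infty$). The affine term contributes $\E[Y_x]-\E[Y_{F^\star}]$, and since $\KL(Y_x\|Y_{F^\star})<\infty$ for all $x$ (here one uses $\E[X_{F^\star}]>0$), Lemma~\ref{lem:weakderexp} gives $I'_{F^\star}(\delta_x)=\KL(Y_x\|Y_{F^\star})-I(F^\star)$. Combining, $\KL(Y_x\|Y_{F^\star})-I(F^\star)-\nu_1(\E[Y_x]-\E[Y_{F^\star}])\le 0$, i.e. $\KL(Y_x\|Y_{F^\star})\le\nu_0+\nu_1\E[Y_x]$ for all $x\in\mathcal{X}$, with $\nu_0:=I(F^\star)-\nu_1\E[Y_{F^\star}]\in\R$.

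To upgrade this to equality on $\supp(F^\star)$, integrate the last inequality against $F^\star$: the left side is $\int\KL(Y_x\|Y_{F^\star})\,dF^\star(x)=I(F^\star)$ (the standard decomposition of mutual information), while the right side is $\nu_0+\nu_1\E[Y_{F^\star}]=I(F^\star)$; hence the inequality is an equality $F^\star$-a.e., and the assumed continuity of $x\mapsto\KL(Y_x\|Y_{F^\star})$ and $x\mapsto\E[Y_x]$ promotes it to equality on all of $\supp(F^\star)$. Complementary slackness together with $C(\Ch_\mu)=I(F^\star)$ then yields $C(\Ch_\mu)=\nu_0+\nu_1\mu$. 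For the converse, suppose $F\in\OYeq$ satisfies $\KL(Y_x\|Y_F)\le\nu_0+\nu_1\E[Y_x]$ on $\mathcal{X}$ with equality on $\supp(F)$; applying the already-established first part with $Y:=Y_F$ gives $C(\Ch_\mu)\le\nu_0+\nu_1\mu$, whereas $I(F)=\int\KL(Y_x\|Y_F)\,dF(x)=\int(\nu_0+\nu_1\E[Y_x])\,dF(x)=\nu_0+\nu_1\E[Y_F]=\nu_0+\nu_1\mu$ since $F\in\OYeq$; as $F$ is feasible, $I(F)\le C(\Ch_\mu)$, so all three coincide and $F$ is capacity-achieving.

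The main obstacle is not any single deduction but the careful verification of the ``well-behaved'' hypotheses that make the four lemmas applicable in sequence: that $\delta_x\in\OYfin$ for $x\in\mathcal{X}$; that the finiteness conditions of Lemmas~\ref{lem:weakderexp} and~\ref{lem:lagdual} are met (in particular that $I$ is real-valued on $\OYfin$, so that one may, if necessary, restrict the optimization domain to $\{F\in\OYfin:I(F)<\infty\}$); and that the degenerate cases ($\nu_1=0$, or $\E[X_{F^\star}]=0$) are excluded or dealt with separately. The one genuinely delicate analytic point is turning the $F^\star$-a.e.\ equality into equality on the whole support, which rests essentially on the continuity of $x\mapsto\KL(Y_x\|Y_{F^\star})$.
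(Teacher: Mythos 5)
Your proof is correct. The forward direction of Part~2 follows exactly the paper's route (Lagrange duality via Lemma~\ref{lem:lagdual} to get a multiplier $\nu_1\ge 0$ with complementary slackness, then the weak-derivative necessary condition from Lemma~\ref{lem:weakder} tested against point masses, with Lemma~\ref{lem:weakderexp} supplying $I'_{F^\star}(\delta_x)$; the paper denotes the point mass by the step function $Q_{\overline x}$ but the computation is identical). Your argument for upgrading $F^\star$-a.e.\ equality to equality on all of $\supp(F^\star)$ --- integrate the slack, observe both sides integrate to $I(F^\star)$, then use continuity --- is logically the same as the paper's contradiction argument via an open neighborhood of positive $F^\star$-mass, just packaged differently.

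Where you genuinely diverge is the converse. The paper re-enters the weak-derivative machinery: it applies Lemma~\ref{lem:weakderexp} to show that $I'_F(Q)\le\nu_1(\mathds{E}[Y_Q]-\mu)$ for every $Q\in\OYfin$, concludes via the sufficiency half of Lemma~\ref{lem:weakder} that $F$ minimizes $J=-I+\nu_1 G$ over $\OYfin$, and then derives a contradiction from any putative better $F^\star$. You instead observe that the hypothesis on $F$ hands you a dual-feasible pair $(Y_F,\nu_0,\nu_1)$, so Part~1 already yields $C(\Ch_\mu)\le\nu_0+\nu_1\mu$; combined with $I(F)=\int\KL(Y_x\|Y_F)\,dF=\nu_0+\nu_1\mathds{E}[Y_F]=\nu_0+\nu_1\mu$ (equality on $\supp(F)$ plus $F\in\OYeq$) and the trivial $I(F)\le C(\Ch_\mu)$, everything collapses to equality. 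This is cleaner: it avoids invoking Lemmas~\ref{lem:weakder} and~\ref{lem:weakderexp} for the converse entirely, and in particular does not depend on the continuity hypotheses of Part~2 for that direction --- it only needs Lemma~\ref{lem:lapi}, the identity $I(F)=\mathds{E}_F[\KL(Y_x\|Y_F)]$, and Fubini to rewrite $\int\mathds{E}[Y_x]\,dF(x)=\mathds{E}[Y_F]$. The paper's longer route is not wrong, just less economical; your shortcut is the same device the paper itself uses when stating Theorem~\ref{thm:dual}.
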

\begin{proof}
	We begin by proving the first part of the theorem statement. Fix some distribution $Y$ in $\mathcal{Y}$ such that
	\begin{equation}\label{eq:condY}
		\KL(Y_x||Y)\leq \nu_0+\nu_1\mathds{E}[Y_x],\quad\forall x\in\mathcal{X}.
	\end{equation}
	Furthermore, let $F\in\OYmu$ be some input distribution. By Lemma~\ref{lem:lapi}, we have
	\[
		I(F)\leq \int_0^\infty \KL(Y_x||Y)dF(x)\leq \nu_0+\nu_1\int_0^\infty \mathds{E}[Y_x]dF(x)\leq \nu_0+\nu_1\mu.
	\]
	The first inequality follows from Lemma~\ref{lem:lapi}. The second inequality follows from~\eqref{eq:condY}. Finally, the third inequality holds because $F\in\OYmu$. This implies that $C(\Ch_\mu)\leq \nu_0+\nu_1\mu$, as desired.
	
	We now prove the second part of the theorem statement. First, suppose that $F^\star\in\OYmu$ is capacity-achieving for $\Ch_\mu$. Instantiate Lemma~\ref{lem:lagdual} with $\Omega=\OYfin$, $f(F)=-I(F)$, and $G(F)=\mathds{E}[Y_F]-\mu$. By hypothesis, we have that $I(F^\star)<\infty$, and that there exists $F$ with $G(F)<0$ whenever $\mu>0$. Moreover, both $-I$ and $G$ are convex, and $\OYfin$ is a convex subspace of a vector space. As a result, there exists $z^\star\geq 0$ such that $F^\star$ minimizes
	\[
		J(\cdot)=-I(\cdot)+z^\star G(\cdot)
	\]
	over $\OYfin$, and $z^\star G(F^\star)=0$. As a result, according to Lemma~\ref{lem:weakder} we must have
	\begin{equation}\label{eq:condweakder}
		J'_{F^\star}(Q)=I'_{F^\star}(Q)-z^\star G'_{F^\star}(Q)\leq 0
	\end{equation}
	if $I'_{F^\star}(Q)$ and $G'_{F^\star}(Q)$ exist. For a fixed $\overline{x}\in\mathcal{X}$, define the unit step function $Q_{\overline{x}}\in\OYfin$ as
	\[
		Q_{\overline{x}}(x)=\begin{cases}
		0,\text{ if $x<\overline{x}$}\\
		1,\text{ else.}
		\end{cases}
	\]
	Since $\KL(Y_x||Y_{F^\star})$ is finite for every $x\in\mathcal{X}$, Lemma~\ref{lem:weakderexp} implies that $I'_{F^\star}(Q_{\overline{x}})$ exists and is given by
	\begin{equation}\label{eq:weakderstep}
		I'_{F^\star}(Q_{\overline{x}})=\int_0^\infty \KL(Y_x||Y_{F^\star})dQ_{\overline{x}}(x)-I(F^\star)=\KL(Y_{\overline{x}}||Y_{F^\star})-I(F^\star).
	\end{equation}
	Furthermore, since $G$ is linear in $\OYfin$, we have
	\begin{equation}\label{eq:weakderG}
		G'_{F^\star}(Q_{\overline{x}})=\lim_{\theta\to 0^+}\frac{G(F_\theta)-G(F^\star)}{\theta}=G(Q_{\overline{x}})-G(F^\star)=\mathds{E}[Y_{\overline{x}}]-\mathds{E}[Y_{F^\star}].
	\end{equation}
	
	Combining~\eqref{eq:condweakder},~\eqref{eq:weakderstep}, and~\eqref{eq:weakderG}, we must have
	\[
		\KL(Y_{x}||Y_{F^\star})-I(F^\star)-z^\star\mathds{E}[Y_{x}]+z^\star\mathds{E}[Y_{F^\star}]\leq 0
	\]
	for every $x\in\mathcal{X}$. Equivalently,
	\[
		\KL(Y_{x}||Y_{F^\star})\leq I(F^\star)+z^\star(\mathds{E}[Y_{x}]-\mu)
	\]
	must hold for every $x\in\mathcal{X}$. The inequality holds because, according to Lemma~\ref{lem:lagdual}, if $G(F^\star)\neq 0$ (i.e., $\mathds{E}[Y_{F^\star}]<\mu$), then we must have $z^\star=0$ and the inequality would still be true in this case.
	
	Suppose now that there is $x\in\mathsf{supp}(F)$ such that
	\begin{equation}\label{eq:strictsupp}
		\KL(Y_x||Y_{F^\star})<I(F^\star)+z^\star (\mathds{E}[Y_x]-\mu).
	\end{equation}
	All terms in the inequality above are continuous in $x$ by hypothesis. As a result,~\eqref{eq:strictsupp} actually holds in an open neighborhood $U$ of $x$. Since $x\in\mathsf{supp}(F^\star)$, by definition of support we have $\int_U dF^\star(x)=\delta>0$ for some positive $\delta$. Therefore,
	\[
		I(F^\star)=\int_0^\infty \KL(Y_x||Y_{F^\star})dF^\star(x)<I(F^\star)+z^\star \left(\int_0^\infty\mathds{E}[Y_x]dF^\star(x)-\mu\right)=I(F^\star),
	\]
	a contradiction. It follows that for $\nu_1=z^\star\geq 0$ and $\nu_0=I(F^\star)-z^\star\mu$ we must have
	\[
	\KL(Y_{x}||Y_{F^\star})\leq \nu_0+\nu_1\mathds{E}[Y_x]
	\]
	for every $x$, with equality for $x\in\mathsf{supp}(F^\star)$, as desired.
	
	Finally, suppose that $F\in\OYeq$ satisfies
	\[
	\KL(Y_{x}||Y_F)\leq \nu_0+\nu_1\mathds{E}[Y_x]
	\]
	for some $\nu_0\in\mathbb{R}$ and $\nu_1\in\mathbb{R}^{\geq 0}$ and for every $x\in\mathcal{X}$, with equality for $x\in\mathsf{supp}(F)$. Then, for every distribution $Q\in\OYfin$ we have
	\[
		\int_0^\infty \KL(Y_x||Y_F)dQ(x)\leq \nu_0+\nu_1 \mathds{E}[Y_Q]<\infty.
	\]
	The last inequality follows since $Q\in \OYfin$. Furthermore, since $\KL(Y_{x}||Y_F)=\nu_0+\nu_1\mathds{E}[Y_x]$ for all $x\in\mathsf{supp}(F)$, we have
	\[
	I(F)=\int_0^\infty \KL(Y_x||Y_F)dF(x)=\nu_0+\nu_1\mu.
	\]
	The last equality holds because $F\in\OYeq$. Therefore, according to Lemma~\ref{lem:weakderexp}, $I'_F(Q)$ exists for every $Q\in \OYfin$ and satisfies
	\begin{equation}\label{eq:weakdercond}
		I'_F(Q)=\int_0^\infty \KL(Y_x||Y_F)dQ(x)-I(F)\leq \nu_0+\nu_1 \mathds{E}[Y_Q]-(\nu_0+\nu_1\mu)=\nu_1 (\mathds{E}[Y_Q]-\mu).
	\end{equation}
	As a result,
	\[
		I'_F(Q)-\nu_1 G'_F(Q)=I'_F(Q)-\nu_1(\mathds{E}[Y_Q]-\mu)\leq 0
	\]
	for every $Q\in\OYfin$. Via Lemma~\ref{lem:weakder}, it follows that $F$ minimizes the functional $J(\cdot)=-I(\cdot)+\nu_1 G(\cdot)$ over $\OYfin$. Moreover, since $\mathds{E}[Y_F]=\mu$, we have $G(F)=0$. This means that $F$ minimizes $J$ with value $J(F)=-I(F)$. If $F$ is not capacity-achieving over $\OYmu$, there exists some $F^\star\in\OYmu$ such that $I(F^\star)>I(F)$. Note that $G(F^\star)\leq 0$ and $\nu_1\geq 0$. Therefore,
	\[
		J(F^\star)=-I(F^\star)+\nu_1 G(F^\star)\leq-I(F^\star)< -I(F),
	\]
	contradicting the fact that $F$ minimizes $J$ over $\OYfin$. Therefore, we conclude that $F$ is capacity-achieving and $C(\Ch_\mu)=I(F)=\nu_0+\nu_1\mu$.
\end{proof}

\begin{remark}
	Note that, a priori, there may not exist capacity-achieving distributions for $\mathsf{Ch}_\mu$ as in the statement of Theorem~\ref{thm:dualtech}. Moreover, even if capacity-achieving distributions exist, they may not lie in $\OYeq$. It is possible to come up with stronger, but still general, assumptions about $\mathds{E}[Y_x]$ which ensure that there exist capacity-achieving distributions for $\Ch_\mu$, and that they lie in $\OYeq$. An example, which covers the DTP channel, is when $\mathds{E}[Y_x]$ is an increasing affine function of $x\in\mathcal{X}$. In this case, imposing an output average-power constraint is equivalent to imposing an input average-power constraint on the channel. As a result, the desired properties transfer directly from one setting to the other.
\end{remark}

It remains to see that the DTP channel satisfies the hypotheses of Theorem~\ref{thm:dualtech} in order to derive Theorem~\ref{thm:dual}. First, the map $x\mapsto Y_x(S)$ is continuous in $x$ for all $S\subseteq \mathbb{N}$, and hence it is Borel-measurable. Second, $\KL(Y_x||Y)$ is finite and continuous in $x$ whenever $Y$ is an output distribution of the DTP channel with full support over $\mathbb{N}$. This happens whenever $\mathds{E}[X_F]>0$. Third, the results of Appendix~\ref{sec:existoptimal} imply that $I(F)<\infty$ for every $F\in\OYfin$. In particular, this means that $C(\DTP_{\lambda,A,\mu})<\infty$ always. Fourth, observe that $\mathds{E}[Y_x]=\lambda+x$. Therefore, $x\mapsto \mathds{E}[Y_x]$ is continuous. Finally, note that we have $\mathds{E}[Y_F]=\mu$ if and only if $\mathds{E}[X_F]=\mu-\lambda$. It follows that $\OYmu\neq\emptyset$ and that there exists $F$ with $\mathds{E}[Y_F]<\mu$ whenever $\mu> \lambda$. Furthermore, because of this property, the results of Appendix~\ref{sec:existoptimal} imply that capacity-achieving distributions for the DTP channel over $\OYmu$ exist and are contained in $\OYeq$. Combining all of these observations with Theorem~\ref{thm:dualtech} leads to Theorem~\ref{thm:dual}.

\end{appendices}

\end{document}